\title{Deterministic Hardness-of-Approximation of\\ 
Unique-SVP and GapSVP\\ 
in  \( \ell_p \) norms for \(p>2\)}
\date{Sep 2025}

\author{Yahli Hecht\thanks{School of Computer Science, Tel Aviv University. Supported by the European Research Council (ERC) under the European Unions Horizon 2020 research and innovation programme (Grant agreement No. 835152). email: \url{yahlihecht@mail.tau.ac.il }}
\and Muli Safra\thanks{School of Computer Science, Tel Aviv University. Supported by the European Research Council (ERC) under the European Unions Horizon 2020 research and innovation programme (Grant agreement No. 835152) and by the Israel Science Foundation (ISF) grant 2257/21. email: \url{safra@mail.tau.ac.il}}  }
\date{2025}

\documentclass{article}

\usepackage{booktabs}              
\usepackage[table]{xcolor}         
\usepackage{tabularx}              
\definecolor{RowGray}{gray}{0.97}  
\usepackage{booktabs,tabularx,array,xcolor}
\definecolor{RuleGray}{gray}{0.80}  
\definecolor{RowStripe}{gray}{0.98} 

\usepackage{amssymb, amsthm}
\usepackage{mathrsfs}
\usepackage{mathtools}
\usepackage{geometry}
\usepackage{enumitem}
\usepackage{hyperref}
\usepackage{tikz-cd}
\usepackage{graphicx}
\usepackage{cite}
\usepackage{booktabs}
\usepackage{fullpage}
\usepackage{placeins}
\usepackage{circuitikz}
\usetikzlibrary{calc}

\DeclareMathOperator{\ideg}{ideg}

\newcommand{\GapSVP}{\textsf{GapSVP}}

\makeatletter

\DeclareFontFamily{OMX}{MnSymbolE}{}
\DeclareSymbolFont{MnLargeSymbols}{OMX}{MnSymbolE}{m}{n}
\SetSymbolFont{MnLargeSymbols}{bold}{OMX}{MnSymbolE}{b}{n}
\DeclareFontShape{OMX}{MnSymbolE}{m}{n}{
    <-6>  MnSymbolE5
   <6-7>  MnSymbolE6
   <7-8>  MnSymbolE7
   <8-9>  MnSymbolE8
   <9-10> MnSymbolE9
  <10-12> MnSymbolE10
  <12->   MnSymbolE12
}{}
\DeclareFontShape{OMX}{MnSymbolE}{b}{n}{
    <-6>  MnSymbolE-Bold5
   <6-7>  MnSymbolE-Bold6
   <7-8>  MnSymbolE-Bold7
   <8-9>  MnSymbolE-Bold8
   <9-10> MnSymbolE-Bold9
  <10-12> MnSymbolE-Bold10
  <12->   MnSymbolE-Bold12
}{}

\let\llangle\@undefined
\let\rrangle\@undefined
\DeclareMathDelimiter{\llangle}{\mathopen}%
                     {MnLargeSymbols}{'164}{MnLargeSymbols}{'164}
\DeclareMathDelimiter{\rrangle}{\mathclose}%
                     {MnLargeSymbols}{'171}{MnLargeSymbols}{'171}
\makeatother
\newcommand{\aang}[1]{\llangle#1\rrangle}


\usepackage[utf8]{inputenc}
\usepackage[T1]{fontenc}

\usepackage{amssymb}         
\usepackage{amsfonts}        
\usepackage{mathtools}       
\usepackage{amsthm}

\DeclarePairedDelimiter\norm{\lVert}{\rVert}
\DeclarePairedDelimiter\abs{\lvert}{\rvert}
\DeclarePairedDelimiter\ang{\langle}{\rangle}

\DeclarePairedDelimiter\ps{(}{)}
\DeclarePairedDelimiter\ceil{\lceil}{\rceil}
\DeclarePairedDelimiter\floor{\lfloor}{\rfloor}

\DeclarePairedDelimiter\card{\lvert}{\rvert}

\DeclareMathOperator{\Span}{\mathsf{span}}





\usepackage{hyperref}
\hypersetup{
  colorlinks=true,
  linkcolor=blue,
  filecolor=magenta,
  urlcolor=cyan
}

\usepackage{thmtools}
\usepackage{thm-restate}
\usepackage[noabbrev,nameinlink]{cleveref}

\newtheorem{theorem}{Theorem}[section]
\newtheorem{definition}[theorem]{Definition}
\newtheorem{lemma}[theorem]{Lemma}
\newtheorem{claim}[theorem]{Claim}

\newtheorem{fact}[theorem]{Fact}

    \theoremstyle{definition}
\newtheorem{conj}[theorem]{Conjecture}


\usepackage[ruled,vlined]{algorithm2e}

\usepackage[table]{xcolor}

\usepackage{caption}

\usepackage{tikz}
\usetikzlibrary{positioning,calc}
\usetikzlibrary{arrows.meta,positioning,calc,shapes,fit,decorations.pathreplacing}
\usetikzlibrary{decorations.pathmorphing}

\usepackage{wasysym}
\usepackage{comment}
\usepackage{marvosym}


\newcommand\R{\mathbb{R}}
\newcommand\Z{\mathbb{Z}}
\newcommand\N{\mathbb{N}}
\newcommand\E{\mathbb{E}}

\renewcommand\S{\mathbb{S}}
\newcommand\T{\mathrm{T}}

\newcommand\F{\mathbb{F}}





\usepackage{dsfont}

\usepackage{accents}


\usepackage[ruled,vlined]{algorithm2e}

\newcommand{\NP}{\text{NP} }
\newcommand\supp[1]{\mbox{\tt supp}\left[ #1\right]}

\newcommand\poly{ {\mathsf{poly}} }

\newcommand\lat{ {\mathcal L} }

\newcommand\A{ { x} }

\DeclarePairedDelimiter\set{\{}{\}}

\newcommand\sett[2]{\left\{ #1 \;\middle\vert\; #2 \right\}}
\newcommand\Prob[2]{{\Pr_{#1}\left[ {#2} \right]}}
\newcommand\cProb[3]{{\Pr_{#1}\left[ \left. #3 \;\right\vert #2 \right]}}

\newcommand\half{{\frac{1}{2}}}

\newcommand\defeq{\stackrel{def}{=}}

\newcommand\eps{\varepsilon}








\usepackage{soul}



\usepackage[most]{tcolorbox}
\newtcolorbox{newtechbox}{
  colback=black!2, colframe=black!25, boxrule=0.4pt,
  left=1mm,right=1mm,top=0.6mm,bottom=0.6mm,
  sharp corners, fonttitle=\bfseries, title=What’s new in this section
}


\usepackage[createShortEnv]{proof-at-the-end}
\usepackage[square,numbers]{natbib}

\bibliographystyle{alpha}

\begin{document}

\newcommand{\PvP}{{\mathcal P v \mathcal P}}
\newcommand{\PP}{\mathbf{PL}}
\newcommand{\pvp}{{\mathcal P v \mathcal P}}
\renewcommand {\A}{\mathcal A}
\renewcommand{\P}{\mathcal P}

\maketitle

\newcommand{\largespace}{\phantom{a} \\ \phantom{a} \\ \phantom{a} \\}

\newtheorem*{lemma: consistency lemma}{Lemma \ref*{lemma: consistency-lemma}}

\newcommand\timelineDateBelow[2]{
  \pgfmathsetmacro \xdate {(#1)}
  \draw (\xdate,0.1) -- (\xdate,-0.1) node[below, align=center, text width = 2cm]{\scriptsize #2};
}

\newcommand\timelineDateAbove[2]{
  \pgfmathsetmacro \xdate {(#1)}
  \draw (\xdate,0.1) node[above, align=center, text width = 2cm]{#2} -- (\xdate,-0.1);
}

\renewcommand{\NP}{\text{NP}}
\newcommand{\svpp}{$\mathsf{SVP}_p\,$}
\newcommand{\usvpp}{$\mathsf{uSVP}_p\,$}

\begin{abstract}
We establish \emph{deterministic} hardness of approximation results for the Shortest Vector Problem in $\ell_p$ norm ($\mathsf{SVP}_p$) and for Unique-SVP ($\mathsf{uSVP}_p$)---namely, instances promised to have a unique shortest vector---for all $p > 2$.
Previously, no \emph{deterministic} hardness results were known, except for $\ell_\infty$.

\smallskip
For every $p > 2$, 
we prove constant-ratio hardness:
no polynomial-time algorithm approximates \svpp or \usvpp within a ratio of $\sqrt{2} - o(1)$,
assuming $\textsf{3SAT} \notin \text{DTIME}(2^{O(n^{2/3}\log n)})$, 
and, respectively,
$\textsf{Unambiguous-3SAT} \notin \text{DTIME}(2^{O(n^{2/3}\log n)})$.
\smallskip

We also show that for any $\eps > 0$ there exists $p_\eps > 2$ such that for every $p \ge p_\eps$: 
no polynomial-time algorithm approximates \svpp within a ratio of $2^{\ps{\log n}^{1-\eps}}$, 
assuming $\text{NP} \nsubseteq \text{DTIME}(n^{\ps{\log n}^\eps})$; 
and within a ratio of $n^{1/\ps{\log\log(n)}^\eps}$,
assuming $\text{NP} \nsubseteq \text{SUBEXP}$.
This improves upon [Haviv, Regev, Theory of Computing 2012], which obtained similar inapproximation ratios under randomized reductions.
We obtain analogous results for \usvpp under the assumptions $\textsf{Unambiguous-3SAT} \not\subseteq \text{DTIME}(n^{\ps{\log n}^\eps})$ and $\textsf{Unambiguous-3SAT} \not\subseteq \text{SUBEXP}$, improving the previously known $1+o(1)$ [Stephens-Davidowitz, Approx 2016].
\smallskip

Strengthening the hardness of \textsf{uSVP} at weaker approximation ratios has direct cryptographic impact. 
By the reduction of Lyubashevsky and Micciancio [Lyubashevsky, Micciancio, CRYPTO 2009], 
hardness for $\gamma$--$\mathsf{uSVP}_p$ carries over to ${\frac{1}{\gamma}}$--$\mathsf{BDD}_p$ (Bounded Distance Decoding).
Thus, understanding the hardness of \textsf{uSVP} improves worst-case guarantees for the two core problems that underpin security in lattice-based cryptography.
\end{abstract}

\section{Introduction}
A lattice \(\lat\subseteq\R^n\) is the additive group of all integer linear combinations of \(d\) linearly independent vectors. 
Given a linearly independent matrix \(M\in \R^{n \times d}\), we write
\(
\lat[M]\defeq \set{M \cdot \vec a \mid \vec a \in\Z^d}.
\)
For \(p\in[1,\infty]\), the \(\ell_p\) norm is \(\norm{x}_p=\big(\sum_i \abs{x_i}^p\big)^{1/p}\) for \(p<\infty\) and \(\norm{x}_\infty=\max_i \abs{x_i}\).
The length of the shortest nonzero vector is
\(
\lambda_1^{(p)}(\lat)\defeq \min_{v\in\lat\setminus\{0\}} \norm{v}_p,
\)
equivalently, the smallest \(r\) such that the closed \(\ell_p\)-ball \(B_p(0,r)\) contains a nonzero lattice point.
More generally, the \(k\)-th successive minimum \(\lambda_k^{(p)}(\lat)\) is the least \(r\) for which \(B_p(0,r)\) contains \(k\) linearly independent lattice vectors.

The main goal of this paper is to establish deterministic hardness of approximation results for two lattice problems, known as \textsf{SVP} and \textsf{uSVP}.
In the $\gamma$--\svpp problem ($\gamma > 1$), we are given a basis $M$ for $\lat[M]$ and a radius $r$, and the goal is to distinguish between the case of
$\lambda_1^{(p)}(\lat[M])\le r$ and the case of $\lambda_1^{(p)}(\lat[M])\ge \gamma r$.
In the $\gamma$--\usvpp problem, the YES instances are promised to satisfy
$\lambda_2^{(p)}(\lat[M])\ge \gamma\,\lambda_1^{(p)}(\lat[M])$. 
In words, the shortest vector in $\mathcal{L}[M]$ is unique---the only non-zero vectors in the lattice that are of length less than $\gamma \lambda_1(\mathcal{L}[M])$ are its multiples.
A closely related problem is $\textsf{BDD}$. 
In the $\frac{1}{\gamma}$--$\mathsf{BDD}_p$, we are given a basis $M$ and a target vector $\vec t$ such that $\text{dist}(\lat[M], \vec t) \le \frac{1}{\gamma} \lambda_1^{(p)}$, find the lattice vector closest to $\vec t$.
Formal definitions appear in the preliminaries (\Cref{sec: preliminaries}).

\paragraph{Hardness of SVP.}
The study of the computational hardness of the shortest vector problem (\textsf{SVP}) has a long history.
The first hardness result, due to van Emde Boas \cite{Emde1981},
established NP-hardness of \textsf{SVP} in the $\ell_\infty$ norm. 
Ajtai~\cite{Ajtai1998} proved NP-hardness in $\ell_2$, 
for an inapproximation ratio slightly larger than 1 and via a \emph{randomized reduction}.
Micciancio \cite{Micciancio1998, Micciancio2001} improved the inapproximability ratio within $2^{1/p} - o(1)$ for all $1 \le p < \infty$. 
For high $p$, Khot \cite{Khot2003} proved hardness of approximation to within $p^{1 - \eps}$. 
Khot \cite{Khot2005a} later achieved the hardness for every constant inapproximation ratio,
by constructing \textsf{GapSVP} instances with an additional structure and utilizing a variant of the tensor product to amplify the gap. 
Both Micciancio's and Khot's reductions are \emph{randomized}.

Allowing random quasi-polynomial reductions, Khot \cite{Khot2005a} also established hardness for a ratio of $2^{\log^{\half - \eps}n}$.
Haviv and Regev \cite{Haviv2012, Haviv2018} improved the ratio to $2^{\log^{1 - \eps}n}$.
Under the stronger yet plausible assumption 
NP$\not\subseteq \text{RSUBEXP} = \cap_{\gamma > 0} {\text{\small RTIME}(2^{n^\gamma})}$  the ratio reaches $n^{O\ps{1 / \log\log n}}$. 

\paragraph{Hardness of Unique-SVP.}
For the \emph{unique} variant,
Kumar and Sivakumar \cite{Kumar2001} first proved NP-hardness in $\ell_2$, albeit under randomized reductions. 
Khoat and Tan \cite{Khoat2008} proved NP-hardness of exact \textsf{uSVP} in $\ell_\infty$. 
In $\ell_p$, the best unconditional hardness factors remain very close to one:
Aggarwal and Dubey \cite{Aggarwal2016} proved hardness within $1+1/\mathrm{poly}(n)$, and Stephens-Davidowitz \cite{StephensDavidowitz2015} achieved $1 + O(\log n / n)$.
More recently, Jin and Xue \cite{Baolong2024} presented a fine-grained hardness result, reaching an inapproximability ratio of $1 + \eps$.
Under certain nonstandard assumptions, \cite{Bennett2023} established hardness for every constant ratio $\gamma \ge 1$.
Lyubashevsky and Micciancio \cite{Lyubashevsky2009} provided further evidence for the hardness of \textsf{uSVP}, by proving equivalence to \textsf{GapSVP} up to a small polynomial factor of $\sqrt{n/\log n}$.

\subsection{Our results}\label{sec:our-results}
All of the above mentioned hardness of approximation results for \svpp and \usvpp have relied on \emph{randomized} reductions (beyond \( \ell_\infty \)).
At the cost of restricting attention to \(p>2\), we obtain deterministic reductions that match---and sometimes substantially improve upon---existing hardness results.

For \textsf{SVP}, we prove hardness of approximation within $\sqrt{2} - o(1)$ for every \(p>2\), under deterministic sub-exponential reductions.
When \(p\) is sufficiently large, we deterministically match and strengthen the best known randomized results in the high-\(p\) regime, improving the $n^{O(1/\log \log n)}$ ratio of Haviv and Regev \cite{Haviv2018}.

\begin{theorem}[\textsf{SVP}, \({p > 2}\)]\label{thm:svp-low}
For every constant \(p>2\), deciding \textsf{GapSVP} in \( \ell_p \)  is hard to approximate within a ratio \(\sqrt{2}-o(1)\), unless
\( 3\textsf{SAT} \in
\text{DTIME}\ps*{2^{O(n^{2/3}\log n)}} \).
\end{theorem}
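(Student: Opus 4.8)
The plan is to route through the inhomogeneous problem and then homogenize: first reduce $3\textsf{SAT}$ to a $\textsf{GapCVP}_p$ instance with a large gap, and then convert that into a $\textsf{GapSVP}_p$ instance. The first step is classical and \emph{deterministic}: the reductions through the Nearest-Codeword / Minimum-Distance problems give, for every $p$ and every constant gap $g$ (indeed super-constant), a polynomial-time deterministic reduction producing a basis $B$ and target $\vec y$ such that in the yes-case some $0/1$-combination of the columns of $B$ is within $\ell_p$-distance $d$ of $\vec y$, while in the no-case every lattice vector is at distance $\ge g\,d$. What such a $\textsf{GapCVP}_p$ instance lacks, and what any homogenization needs, is control on $\lambda_1^{(p)}$ of the lattice itself in the no-case.

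The crux of the argument is therefore a way, available only for $p>2$ and without randomness, to additionally force the lattice to have no nonzero vector of length much below $d$ in the no-case --- i.e.\ to reduce to the strengthened variant $\textsf{GapCVP}_p'$ with $\lambda_1^{(p)}$ bounded below by a constant multiple of $d$ large enough to feed the homogenization. The previous reductions achieve this by adding random lattice vectors or passing to a random sublattice; instead I would attach to the instance an explicit ``locally dense'' $\ell_p$-gadget lattice built from a good code. The point of $p>2$ is that in $\ell_p$ a $\{-1,0,1\}$ vector of Hamming weight $w$ has norm only $w^{1/p}$, so an explicit code of good minimum distance gives a gadget that simultaneously keeps an exponentially large cluster of close lattice vectors at a small radius $r$ and forces every other lattice vector to have norm a factor $\sqrt2$ larger than $r$ --- this is exactly the regime in which Micciancio's generic factor $2^{1/p}$ is surpassed, and $p>2$ is the threshold for it to hold. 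Using such a gadget of dimension $\Theta(n^{2/3})$ --- so that the gadget (or the ``good'' choice within a small explicit family of candidates) can be constructed and verified in time $2^{O(n^{2/3}\log n)}$ --- and composing it with the $\textsf{GapCVP}_p$ instance from a $3\textsf{SAT}$ formula on $n$ variables is what accounts for the $2^{O(n^{2/3}\log n)}$ running time of the whole reduction.

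Given the $\textsf{GapCVP}_p'$ instance $(B,\vec y,d)$, the final step is the textbook homogenization: form the lattice generated by the columns of $\begin{pmatrix} B & -\vec y \\ 0 & \beta \end{pmatrix}$ for a scalar $\beta$ to be tuned, and analyze a generic lattice vector $(B\vec x - c\vec y,\, c\beta)$ according to whether $c=0$, $|c|=1$, or $|c|\ge 2$. The case $c=0$ is controlled by the $\lambda_1^{(p)}$ promise, the case $|c|=1$ by the $\textsf{CVP}$ gap, and $|c|\ge 2$ by the term $|c\beta|\ge 2\beta$; balancing these against the yes-length $\approx (d^p+\beta^p)^{1/p}$ by the choice of $\beta$ yields a $\textsf{GapSVP}_p$ instance with ratio $\sqrt2 - o(1)$, the $o(1)$ absorbing the $2^{-\Omega(p)}$ slack (and, for a bound uniform in $p$, a mild amplification of the underlying $\textsf{CVP}$ gap). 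Composing the three steps, a polynomial-time $(\sqrt2 - o(1))$-approximation for $\textsf{GapSVP}_p$ would decide $3\textsf{SAT}$ on $n$ variables in time $2^{O(n^{2/3}\log n)}$, contradicting the hypothesis.

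I expect the main obstacle to be the middle step: producing, deterministically and for every $p>2$, an $\ell_p$-gadget whose quality (the ratio of $\lambda_1^{(p)}$ to the radius of its dense cluster) is as large as $\sqrt2$ while the cluster is still exponentially large, and then checking that composing it with the $\textsf{GapCVP}_p$ instance preserves both the cluster (so that the yes-case still has a short vector) and the absence of spurious short vectors in the no-case, all within a sub-exponential blow-up. Pinning the constant to exactly $\sqrt2$ --- rather than to $2^{1/p}$ or some other $p$-dependent quantity --- is where the $p>2$ hypothesis is spent and where the parameter bookkeeping is most delicate.
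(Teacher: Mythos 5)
Your route (3SAT \(\to\) \(\textsf{GapCVP}_p\) \(\to\) homogenization via a locally dense gadget) is genuinely different from the paper's, but it has a real gap exactly at the step you yourself flag as the obstacle: the existence of a \emph{deterministic, explicit} locally dense \(\ell_p\)-gadget of quality \(\sqrt2\), together with a \emph{deterministic} way of composing it with the \(\textsf{GapCVP}_p\) instance. Asserting that ``an explicit code of good minimum distance gives a gadget'' is not a proof sketch of a known construction; it is a restatement of the open problem this paper is about. All known constructions that beat Micciancio's \(2^{1/p}\) barrier (Micciancio, Khot, Haviv--Regev) are randomized in two places: in arranging that the gadget's dense cluster is large at radius \(r\) while \(\lambda_1^{(p)}\) of the full lattice stays \(\ge \sqrt2\,r\), and --- crucially --- in the linear map that matches the dense cluster to the coset of CVP solutions in the YES case (without randomness one must guarantee that some cluster vector hits the satisfying coset, which is precisely what the random choice achieves). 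The partial derandomizations in the literature (Micciancio's one-sided-error reduction, Bennett--Peikert's Reed--Solomon gadgets) address pieces of this and still do not yield the statement; your proposal gives no new mechanism for either piece. Note also that the homogenization needs a lower bound on \(\lambda_1^{(p)}\) of the CVP lattice itself, which the deterministic NP-hardness of \(\textsf{GapCVP}_p\) (or even of the standard \(\textsf{GapCVP}'\) variant used for BDD) does not supply. Finally, your \(\Theta(n^{2/3})\)-dimensional gadget is reverse-engineered from the theorem's time bound: a gadget of that dimension has at most \(2^{O(n^{2/3}\log n)}\) cluster vectors, while the matching step in all known gadget-based reductions needs the cluster to cover a witness space of size up to \(2^{\Theta(n)}\), so the dimension bookkeeping does not close either.

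For comparison, the paper does not go through CVP or gadgets at all. It encodes a 3SAT witness as a low-degree polynomial on \(\F^3\) (with \(|\HH|\approx n^{1/3}\), which is where \(2^{O(n^{2/3}\log n)}\) actually comes from: each plane carries \(|\F|^{\binom{3d+2}{2}}\) variables), builds a lattice of ``super-assignments'' over axis-parallel and clause planes cut out by \emph{homogeneous} consistency and clause constraints, and applies a normalized Hadamard rotation blockwise. The role of \(p>2\) is that the rotation makes every natural (unit-vector) block have \(\ell_p\)-norm exactly \(m^{1/p-1/2}\), so a satisfying assignment yields a vector of norm \(1\), while any block that is not \(\pm e_i\) or \(0\) costs a factor \(\sqrt2\) in \(\ell_2\) and hence, after rotation, in \(\ell_p\); the soundness argument then shows via Schwartz--Zippel and the consistency constraints that badness propagates to a \(1-o(1)\) fraction of planes, forcing norm \(\ge\sqrt2-o(1)\). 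If you want to pursue your pipeline, the contribution would have to be precisely the deterministic gadget-plus-matching step, which remains open.
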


\begin{theorem}[\textsf{SVP}, high \(p\)]\label{thm:svp-high}
For every \( \eps>0 \) there exists 
\( p_\eps>2 \) such that for every \( p\ge p_\eps \),
\textsf{GapSVP} in \( \ell_p \) is hard to approximate within a ratio \( 2^{(\log n)^{1-\eps}} \),
unless \( \NP \subseteq\text{DTIME}\ps*{n^{(\log n)^{\eps}}} \).
Furthermore, under the stronger assumption \( \NP \nsubseteq \text{SUBEXP} \),
\textsf{GapSVP} is hard to approximate within a ratio
\( n^{1/(\log\log n)^{\eps}} \).
\end{theorem}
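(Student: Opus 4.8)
The plan is to build on the $\sqrt{2}-o(1)$ hardness of \Cref{thm:svp-low} and amplify the gap by tensoring, following the Khot--Haviv--Regev paradigm but keeping the reduction deterministic throughout. Concretely, I would start from a hard \textsf{GapSVP}$_p$ instance produced by (the reduction underlying) \Cref{thm:svp-low}, which distinguishes $\lambda_1^{(p)} \le r$ from $\lambda_1^{(p)} \ge (\sqrt 2 - o(1))\, r$, and importantly comes equipped with the extra structure that Khot's reductions provide — a ``$\ell_p$-basis-type'' guarantee on the NO instances (every nonzero lattice vector has many coordinates of comparable magnitude, so that short vectors behave well under tensoring). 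I would then take the $k$-fold tensor power $\lat^{\otimes k}$ with the natural $\ell_p$ tensor norm, for which $\lambda_1^{(p)}(\lat^{\otimes k}) \le \lambda_1^{(p)}(\lat)^k$ always holds on the YES side; the crux is to show that on the NO side the structured instance satisfies $\lambda_1^{(p)}(\lat^{\otimes k}) \ge c^k \cdot r^k$ for a constant $c>1$ close to $\sqrt 2$, i.e. that tensoring does not create unexpectedly short vectors. This yields a gap of roughly $(\sqrt 2)^k$ in dimension $n^k$.

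The second step is to balance parameters. To get an inapproximation ratio of $2^{(\log N)^{1-\eps}}$ where $N = n^k$ is the new dimension, I would choose $k$ as a slowly growing function of $n$, something like $k \approx (\log n)^{(1-\eps)/\eps}$, so that $(\sqrt 2)^k = 2^{\Theta(k)} = 2^{(\log N)^{1-\eps}}$ up to the usual manipulations (using $\log N = k \log n$). The running time of the reduction is $\poly(N) = \poly(n^k) = n^{\poly(k)}$, which under the hypothesis NP $\not\subseteq$ DTIME$(n^{(\log n)^\eps})$ must be super-polynomial yet is of the form $n^{(\log n)^{O(1)}}$ — here the dependence of $p_\eps$ on $\eps$ enters, because the constant in the exponent of the tensoring blow-up (and the loss in the base of the gap) degrades as $p\to\infty$ versus as $p$ is merely $>2$; we pick $p_\eps$ large enough that the base $c$ is bounded away from $1$ by a quantity depending only on $\eps$ and the bookkeeping closes. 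For the SUBEXP version, I would instead take $k$ polynomial in $n$ (say $k = n^{\delta}$), giving dimension $N = n^{n^\delta} = 2^{n^\delta \log n}$ and gap $(\sqrt 2)^{n^\delta} = 2^{N^{o(1)}}$; rewriting $2^{n^\delta} = N^{n^\delta/(n^\delta \log n)} = N^{1/\log n} = N^{1/(\log\log N)^{1+o(1)}}$ and absorbing constants yields the claimed $N^{1/(\log\log N)^\eps}$ ratio, at the cost of a sub-exponential-time reduction, hence the assumption NP $\not\subseteq$ SUBEXP.

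The main obstacle is the NO-instance analysis of the tensor power: proving $\lambda_1^{(p)}(\lat^{\otimes k})$ stays large is genuinely delicate, since the $\ell_p$ tensor norm is not multiplicative for arbitrary vectors and a lattice vector in $\lat^{\otimes k}$ need not be a tensor of vectors from $\lat$. This is exactly the step where \cite{Khot2005a, Haviv2018} needed the structured (``augmented'') \textsf{GapSVP} instances, and where, in the $\ell_2$ setting, randomness was used to sparsify/embed and make the argument go through. The bet here is that working in $\ell_p$ for $p>2$ — where the geometry is more rigid and the relevant anti-concentration / norm-spreading statements hold deterministically — lets one replace the randomized embedding step with an explicit construction, so the structural invariant is preserved by tensoring without any probabilistic argument. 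I would isolate this as a lemma of the form ``if $\lat$ is a $(p,\rho)$-structured NO instance then $\lat^{\otimes k}$ is a $(p, \rho')$-structured NO instance with $\rho' \ge \rho^k / \poly$'', prove it by a coordinate-counting / Hölder argument on the tensor coordinates, and then the parameter balancing above is routine.
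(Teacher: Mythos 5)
Your proposal diverges fundamentally from the paper: \Cref{thm:svp-high} is \emph{not} proved by gap amplification at all. The paper builds a fresh reduction from \textsf{3COL} (\Cref{thm:restated main} in \Cref{sec: The construction}) using algebraic Composition--Recursion with field extensions; the gap $q^{1/p_t}$ is polynomial in the field size already at the level of a single construction, and the two stated ratios follow merely by instantiating $q\approx n$ or $q\approx 2^{n^{\eps}}$. No tensoring, and no structural ``tensor-friendly'' invariant, appears anywhere.

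Beyond being a different route, your plan has concrete gaps. First, the seed instance you propose to tensor---the one underlying \Cref{thm:svp-low}---has dimension $2^{O(n^{2/3}\log n)}$ in the original \textsf{3SAT} size $n$; it is not polynomial-size. Your bookkeeping (``running time $\poly(N)=\poly(n^k)=n^{\poly(k)}$'') silently assumes a deterministic \emph{polynomial-size} constant-gap $\textsf{GapSVP}_p$ instance, which this paper does not provide (and which is exactly what is not known deterministically). Consequently the reduction you describe runs in time at least $2^{\Omega(n^{2/3}\log n)}$ even before tensoring, and the conclusion cannot be drawn from the quasi-polynomial hypothesis $\NP\nsubseteq\text{DTIME}\ps{n^{(\log n)^{\eps}}}$ stated in the theorem. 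Second, the NO-side tensor lemma is precisely the step that forced Khot and Haviv--Regev to work with specially structured (and randomized) instances, and their analysis lives in $\ell_2$; nothing in the construction of \Cref{sec:toy reduction} is shown to have such structure, and your ``$(p,\rho)$-structured'' lemma is asserted as a bet rather than proved---this is the missing idea, not a routine H\"older argument. Third, even granting both points, constant-base-gap tensoring yields a ratio $2^{\Theta(k)}$ in dimension $N=n^{\Theta(k)}$, i.e.\ at best $N^{O(1/\log\log N)}$ within subexponential time; your own arithmetic lands at $N^{1/(\log\log N)^{1+o(1)}}$, which is \emph{weaker} than the claimed $N^{1/(\log\log N)^{\eps}}$ for $\eps<1$. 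The paper's improvement over Haviv--Regev in this regime comes precisely from the fact that its base gap is already $q^{1/p_t}$ (almost polynomial), which amplification of a constant gap cannot replicate.
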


\begin{table}[h]
\centering
\small
\setlength{\tabcolsep}{7pt}
\renewcommand{\arraystretch}{1.2}
\rowcolors{3}{RowStripe}{white}
\begin{tabularx}{\linewidth}{@{}
  l !{\color{RuleGray}\vrule width 0.35pt}
  c !{\color{RuleGray}\vrule width 0.35pt}
  c !{\color{RuleGray}\vrule width 0.35pt}
  >{\raggedright\arraybackslash}X !{\color{RuleGray}\vrule width 0.35pt}
  >{\raggedright\arraybackslash}X
@{}}
\toprule
\textbf{Result} & \textbf{\(\ell_p\)} & \textbf{Approx.\ ratio} & \textbf{Assumption} & \textbf{Prior works} \\
\midrule
\Cref{thm:svp-low}  & \(p>2\)         & \(\sqrt{2}-o(1)\)
& \(3\textsf{SAT} \notin \text{DTIME}\!\left(2^{O(n^{2/3}\log n)}\right)\) & — \\
\Cref{thm:svp-high} & \(p\ge p_\eps\) & \(2^{(\log n)^{1-\eps}}\)
& \(\NP \not\subseteq \text{DTIME}\!\left(n^{(\log n)^{\eps}}\right)\)
& \(2^{(\log n)^{1-\eps}}\); \(\NP \not\subseteq \text{RTIME}\!\left(n^{(\log n)^{c}}\right)\);   \cite{Haviv2018} \\
\Cref{thm:svp-high} & \(p\ge p_\eps\) & \(n^{1/(\log\log n)^{\eps}}\)
& \(\NP \nsubseteq \text{SUBEXP}\)
& \(n^{O(1/(\log\log n))}\); \(\NP \nsubseteq \text{RSUBEXP}\); \cite{Haviv2018}\\
\bottomrule
\end{tabularx}
\end{table}
\medskip
Prior hardness results for \textsf{uSVP} lag far behind those for \textsf{SVP}, failing to reach even constant inapproximability ratios.
Beyond $\ell_\infty$, existing reductions are again \emph{randomized}.
Our reductions \emph{substantially} improve this picture.
For every \(p>2\), we give a \emph{deterministic} reduction within a ratio of (\(\sqrt{2}-o(1)\)).
For sufficiently large \(p\), we obtain a quasi-polynomial reduction showing hardness of approximation within an almost-polynomial inapproximability factor.
Thus, in the \textsf{uSVP} regime we move from sub-constant factors
(best at $1 + \frac{\log n}{n}$ by Stephens-Davidowitz \cite{StephensDavidowitz2015})
to almost-polynomial factors aligning with the known picture for \textsf{SVP}.

For \textsf{uSVP}, our results are reduced from \textsf{Unambiguous-3SAT}. 
In \textsf{Unambiguous-3SAT}, the task is to distinguish \textsf{3SAT} formulas that have exactly one satisfying assignment from those that are unsatisfiable. By the Valiant–Vazirani theorem \cite{Valiant1985}, no polynomial-time algorithm decides \textsf{Unambiguous-3SAT} unless $\text{NP} \subseteq \text{RP}$.

\begin{theorem}[\textsf{uSVP}, \({p > 2}\)]\label{thm:usvp-low}
For every constant \(p>2\), \textsf{uSVP} in \( \ell_p \) is hard to approximate within a ratio \(\sqrt{2}-o(1)\),
unless \( \textsf{Unambiguous-3SAT} \in\text{DTIME}\ps*{2^{O(n^{2/3}\log n)}} \).
\end{theorem}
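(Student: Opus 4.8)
The plan is to re-run the reduction behind \Cref{thm:svp-low} essentially verbatim, but with \textsf{Unambiguous-3SAT} in place of \textsf{3SAT}, and to carry a \emph{uniqueness invariant} through every stage so that in the \textsf{YES} case the output lattice has an (up to sign) \emph{unique} shortest vector; equivalently, $\lambda_2^{(p)}\ge(\sqrt2-o(1))\,\lambda_1^{(p)}$, which is exactly the extra promise upgrading a \svpp instance to a \usvpp instance. Reductions to \textsf{SVP} of this flavour factor through an $\ell_p$ version of \textsf{GapCVP} with a \emph{large-minimum} side condition: a \textsf{YES} instance is a triple $(\lat,t,r)$ promising some lattice point within $\ell_p$-distance $r$ of $t$, all other lattice points at distance $>\gamma r$, and in addition $\lambda_1^{(p)}(\lat)>\gamma r$; a \textsf{NO} instance has all lattice points at distance $>\gamma r$. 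Two things then have to be verified: (i) that feeding in a \emph{uniquely} satisfiable formula makes this intermediate instance \emph{uniquely witnessed}, i.e.\ with a single lattice point inside the \textsf{YES}-radius; and (ii) that the homogenization producing the final \svpp instance carries ``unique near point'' to ``unique shortest vector''.

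For (i), the point is that the deterministic $\ell_p$ hardness of the intermediate closest-vector problem --- including whatever $p>2$-specific gadgets are used for \Cref{thm:svp-low} --- is \emph{solution-preserving}: in the \textsf{YES} case the lattice points inside the radius are in bijection with the satisfying assignments of the source formula, which is how PCP- and code-based \textsf{CVP}$_p$ constructions encode satisfiability. Hence an \emph{unambiguous} source formula produces an intermediate instance with exactly one lattice point within distance $r$, while every other requirement ($\lambda_1^{(p)}(\lat)>\gamma r$, all other points far) is exactly what was already needed for \Cref{thm:svp-low}; nothing new is required there. The composed reduction stays deterministic --- no Valiant--Vazirani step enters; the hardness of \textsf{Unambiguous-3SAT} is simply \emph{assumed} --- and it incurs the same sub-exponential blow-up as in \Cref{thm:svp-low}, so the hypothesis $\textsf{Unambiguous-3SAT}\notin\text{DTIME}(2^{O(n^{2/3}\log n)})$ transfers verbatim to the stated conclusion.

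For (ii), homogenize the intermediate instance in the standard way: adjoin to $\lat$ the vector $t$ with one carefully tuned fresh coordinate, so that a general element of the new lattice $\Lambda$ is a lattice point minus an integer multiple of $t$, together with that multiple times the tuned scalar. In the \textsf{YES} case the multiple $\pm1$ together with the \emph{unique} near point $v$ yields a short vector $\pm(v-t,\ast)$; the multiple $0$ yields length $\ge\lambda_1^{(p)}(\lat)>\gamma r$; the multiple $\pm1$ with any other lattice point yields a length exceeding a $\gamma$-dependent factor of the short length; and multiples of absolute value $\ge2$ are ruled out by the tuned coordinate after the usual normalization. Choosing the tuning and the intermediate gap $\gamma$ so that every competing length exceeds $\sqrt2-o(1)$ times the short length makes $\pm(v-t,\ast)$ the unique shortest vector up to sign and forces $\lambda_2^{(p)}(\Lambda)\ge(\sqrt2-o(1))\,\lambda_1^{(p)}(\Lambda)$, matching the ratio of \Cref{thm:svp-low}; in the \textsf{NO} case every lattice point stays $>\gamma r$ from $0$ and from every $\Z$-multiple of $t$, so $\lambda_1^{(p)}(\Lambda)>\gamma r$, matching the \textsf{GapSVP}/\usvpp gap. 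One must, however, confront the possibility that the \svpp reduction of \Cref{thm:svp-low} routes through a \emph{locally dense} lattice gadget (many lattice points clustered near one point), which by design clashes with uniqueness; then one checks that the $p>2$ construction avoids it, or replaces it by a gadget whose near point is unique, the gap in that case being supplied by the $\ell_p$ geometry of the homogenization rather than by density amplification.

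The main obstacle is precisely keeping the $\lambda_2^{(p)}\ge\gamma\,\lambda_1^{(p)}$ promise alive through whatever step \emph{produces} the gap. A locally-dense-lattice gadget pulls directly against uniqueness, and any tensoring-style amplification can collapse $\lambda_2^{(p)}$ toward $\lambda_1^{(p)}$ even when the base lattice has a unique shortest vector --- so if \Cref{thm:svp-low} uses such a step, the \usvpp analogue needs the Khot-style ``structured instance'' bookkeeping to certify that the relevant power still has a (nearly) unique shortest vector. If instead the $p>2$, $\sqrt2$ reduction is merely ``homogenize a large-minimum \textsf{GapCVP}$_p$ instance'' with no density or tensor step, the \usvpp result follows almost immediately from the solution-preserving property in (i), and the real work reduces to the bookkeeping of (i)--(ii), in particular keeping the assignment-to-near-point correspondence a genuine bijection in the \textsf{YES} case.
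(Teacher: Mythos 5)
Your route diverges from the paper's and, as written, has a genuine gap. The reduction behind \Cref{thm:svp-low} is not a homogenization of a large-minimum \textsf{GapCVP}$_p$ instance at all: the lattice is built directly as the integer kernel of homogeneous linear constraints on super-assignments over planes of $\F^3$, so there is no target vector, no Kannan-style embedding step, and no locally dense gadget or tensoring whose interaction with uniqueness would need to be checked. Consequently your step (ii) concerns an object that never occurs, and your step (i) --- ``in the \textsf{YES} case the lattice points inside the radius are in bijection with the satisfying assignments'' --- is asserted rather than proved, and it is precisely where the work lies. In the low-degree-extension encoding, one satisfying assignment does \emph{not} automatically give one short lattice vector: the constraints used for \Cref{thm:svp-low} only restrict the polynomials assigned to planes to total degree $3d$, so even a uniquely satisfiable formula admits many distinct unit-norm natural assignments $\ang{G}$, since $G$ is pinned down only at the $n$ points $y_i$ and nothing forces it to be \emph{the} individual-degree-$d$ extension of the witness. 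Your proposal never confronts this multiplicity, so the claimed promise $\lambda_2^{(p)}\ge(\sqrt2-o(1))\,\lambda_1^{(p)}$ in the \textsf{YES} case does not follow from what you wrote.

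What the paper actually does (\Cref{sec: Unique SVP}) is run the same reduction from \textsf{Unambiguous-3SAT}, make the map from variables to points of the cube bijective (padding with $o(n)$ dummy variables fixed to $0$), and add further homogeneous constraints $\A_\P[f]=0$ for every axis-parallel plane $\P$ and every $f$ with $\ideg(f)>d$. Completeness and soundness are unaffected; for uniqueness, the soundness analysis shows every vector of norm below $\sqrt2-o(1)$ is a natural assignment $\ang{G}$ of a global function inducing a satisfying assignment, the new constraints force $\ideg(G)\le d$, and then \Cref{fact: lde} together with unambiguity of the formula makes $G$, hence the short vector up to sign, unique. If you wanted to salvage your CVP-plus-homogenization plan, you would have to deterministically construct the intermediate solution-preserving, large-minimum, uniquely-witnessed \textsf{GapCVP}$_p$ instance and verify that the $\sqrt2$ gap and uniqueness survive the embedding; none of that is supplied, and the conditional phrasing (``if the reduction routes through a locally dense gadget, one checks or replaces it'') cannot substitute for it.
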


\begin{theorem}[\textsf{uSVP}, high \(p\)]\label{thm:usvp-high}
For any \( \eps>0 \) there exists \( p_\eps>2 \) so that for every \( p\ge p_\eps \),
\textsf{uSVP} in \( \ell_p \) is hard to approximate within a ratio \( 2^{(\log n)^{1-\eps}} \),
unless \(\textsf{Unambiguous-3SAT} \subseteq\text{DTIME}\ps*{n^{(\log n)^{\eps}}} \).
Under the stronger assumption \(\textsf{Unambiguous-3SAT} \nsubseteq \text{SUBEXP} \),
\textsf{uSVP} is hard to approximate within a ratio \( n^{1/(\log\log n)^{\eps}} \).
\end{theorem}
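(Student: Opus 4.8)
The plan is to re-run the deterministic reduction underlying Theorem~\ref{thm:svp-high} --- its gap-amplified \textsf{GapSVP} construction for $p\ge p_\eps$ --- but to feed it an \textsf{Unambiguous-3SAT} instance in place of an arbitrary \textsf{3SAT} instance, and then to upgrade the conclusion from ``a \textsf{GapSVP} YES instance'' to ``a \usvpp YES instance'' by exploiting uniqueness of the satisfying assignment. First I would isolate two properties of that lattice construction that are not needed for Theorem~\ref{thm:svp-high} itself but can be read off its proof: \emph{solution-faithfulness} --- on any instance, every nonzero lattice vector of $\ell_p$-length below the amplified threshold is accounted for by some satisfying assignment of the underlying (amplified) constraint system, each assignment contributing exactly a one-dimensional family of such vectors --- and \emph{strong isolation of the base} --- the pre-amplification lattice $\lat_0$ can be built so that, in a satisfiable instance, every lattice vector not parallel to the assignment vector $v$ already has $\ell_p$-length at least $\Gamma$ times the minimum, where $\Gamma$ is the \emph{final} target factor. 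Strong isolation is obtained by appending a bounded penalty gadget that charges any deviation from $v$; this costs only a $\poly(\log\Gamma)$ additive increase in the dimension of $\lat_0$, so it does not perturb the parameter budget of Theorem~\ref{thm:svp-high}.

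Next I would push both properties through the gap-amplification (iterated tensoring) step. The $\ell_p$ identity $\enorm{p}{x\otimes y}=\enorm{p}{x}\,\enorm{p}{y}$, together with the control of short vectors in tensor powers that underlies Theorem~\ref{thm:svp-high} --- which is precisely where the hypothesis $p\ge p_\eps$ enters, letting one dispense with a randomized sparsification --- gives, on an unsatisfiable source, $\lambda_1^{(p)}(\lat_0^{\otimes k})\ge\gamma_0^k\,\enorm{p}{v}^k=\Gamma\,\enorm{p}{v}^k$, which is already the content of Theorem~\ref{thm:svp-high}. On a uniquely satisfiable source the amplified constraint system is again uniquely satisfiable (tensor products of an unambiguous instance stay unambiguous), so solution-faithfulness --- together with strong isolation, which is exactly what rules out the rank-one near-miss tensors $v^{\otimes i}\otimes u\otimes v^{\otimes(k-1-i)}$ with $u$ not parallel to $v$ --- forces every vector of $\lat_0^{\otimes k}$ of length below $\Gamma\,\enorm{p}{v}^k$ to be a multiple of $v^{\otimes k}$. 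Hence $\lambda_1^{(p)}(\lat_0^{\otimes k})=\enorm{p}{v}^k$ and $\lambda_2^{(p)}(\lat_0^{\otimes k})\ge\Gamma\cdot\lambda_1^{(p)}(\lat_0^{\otimes k})$, i.e.\ the tensored lattice is a genuine $\Gamma$--\usvpp YES instance, while unsatisfiable sources remain NO instances with $\lambda_1^{(p)}\ge\Gamma\,\enorm{p}{v}^k$.

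Finally, the dimension, running time, and resulting complexity assumption are inherited from Theorem~\ref{thm:svp-high} with \textsf{3SAT} replaced by \textsf{Unambiguous-3SAT}: tuning the number of tensorings so that $\Gamma=\gamma_0^k$ reaches $2^{(\log n)^{1-\eps}}$ in the final dimension $n$ yields a quasi-polynomial deterministic reduction, hence hardness under $\textsf{Unambiguous-3SAT}\notin\text{DTIME}\!\left(n^{(\log n)^\eps}\right)$; allowing the reduction --- equivalently, the number of tensorings --- to grow sub-exponentially pushes the ratio to $n^{1/(\log\log n)^\eps}$ under $\textsf{Unambiguous-3SAT}\notin\text{SUBEXP}$.

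The step I expect to be the real obstacle is the second one --- keeping the uniqueness promise alive through amplification. A naive tensor power of a \usvpp instance does \emph{not} preserve it: a base near-miss vector of length $\gamma_0\lambda_1$ tensors up to a vector of length only $\gamma_0\lambda_1^k$, far below the amplified threshold $\gamma_0^k\lambda_1^k$. Circumventing this forces the base construction to isolate the unique assignment vector by the full final factor $\Gamma$ rather than merely by the per-step factor $\gamma_0$, and then to re-establish --- crucially using $p>2$ --- that neither the penalty gadget nor any higher-rank combination creates an unexpectedly short vector in the $k$-fold tensor, all while keeping $\lat_0$ of polynomial dimension so that the parameter bookkeeping of Theorem~\ref{thm:svp-high} is untouched. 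Getting this interaction right is where the bulk of the work lies.
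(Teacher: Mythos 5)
Your proposal is built on a misreading of how \Cref{thm:svp-high} is actually proved, and the repair you introduce to cope with that (mis)reading is exactly where the argument has a genuine gap. The reduction behind \Cref{thm:svp-high} contains no tensoring or iterated gap-amplification step at all: the large ratio $q^{1/p_t}$ is obtained \emph{directly} from the Composition--Recursion lattice with field extensions, simply by choosing the field size $q$ quasi-polynomial (resp.\ sub-exponential) in $n$. So the ``near-miss vector $v^{\otimes i}\otimes u\otimes v^{\otimes(k-1-i)}$'' obstacle you spend most of your effort on never arises, while the machinery you import in its place is known to be genuinely problematic: $\lambda_1$ is not multiplicative under tensoring for general lattices, which is precisely why the Khot/Haviv--Regev amplification needed specially structured instances and randomized sparsification --- the very thing this paper is trying to avoid. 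Your two load-bearing assumptions --- that a ``bounded penalty gadget'' of only $\poly(\log\Gamma)$ extra dimensions can force every lattice vector not parallel to the assignment vector to be longer by the \emph{full final} factor $\Gamma=2^{(\log n)^{1-\eps}}$, and that no higher-rank combination in $\lat_0^{\otimes k}$ becomes short --- are asserted without any construction or proof, and you yourself flag them as ``where the bulk of the work lies.'' As written, this is not a proof but a plan whose hardest step is left open.

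The paper's actual route (\Cref{sec: Unique-high}) sidesteps all of this. One re-runs the same Composition--Recursion construction on an \textsf{Unambiguous-3SAT} instance, replacing the \textsf{3COL} constraints by per-leaf ``alphabet'' and ``clause--variable consistency'' constraints on the points of $\HH^t$ encoding clauses and variables. The soundness analysis (via \Cref{lemma: consistency-lemma} and the pull-up through the recursion) already shows that \emph{every} lattice vector of $\ell_p$-norm below $q^{1/p_t}$ is an integer combination $\sum_i a_i\aang{f_i}$ of natural assignments of low-degree polynomials, each of which encodes a satisfying assignment. Uniqueness then comes for free: since the \textsf{3SAT} instance has a single satisfying assignment and the low-degree extension (\Cref{fact: lde}) is unique, there is exactly one admissible $g$, so all short vectors are scalar multiples of $\aang{g}$ and $\lambda_2^{(p)}\ge q^{1/p_t}\lambda_1^{(p)}$. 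No isolation gadget, no amplification, and no tensor-product analysis are needed; the parameter trade-offs (quasi-polynomial vs.\ sub-exponential $q$) are inherited verbatim from \Cref{thm:svp-high} with \textsf{3SAT} replaced by \textsf{Unambiguous-3SAT}.
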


\begin{table}[ht]
\centering
\small
\setlength{\tabcolsep}{7pt}
\renewcommand{\arraystretch}{1.2}
\rowcolors{3}{RowStripe}{white}
\begin{tabularx}{\linewidth}{@{}
  l !{\color{RuleGray}\vrule width 0.35pt}
  c !{\color{RuleGray}\vrule width 0.35pt}
  c !{\color{RuleGray}\vrule width 0.35pt}
  >{\raggedright\arraybackslash}X !{\color{RuleGray}\vrule width 0.35pt}
  >{\raggedright\arraybackslash}X
@{}}
\toprule
\textbf{Result} & \textbf{\(p\)-range} & \textbf{Approx.\ ratio} & \textbf{Assumption} & \textbf{Prior works} \\
\midrule
\Cref{thm:usvp-low} & \(p>2\) & \(\sqrt{2}-o(1)\)
& \(\textsf{Unambiguous-3SAT} \notin \text{DTIME}\!\left(2^{O(n^{2/3}\log n)}\right)\)
& $1 + \delta$; running-time $2^{\eps n}$ (from \textsf{SVP}); \cite{Baolong2024}\\
\Cref{thm:usvp-high} & \(p\ge p_\eps\) & \(2^{(\log n)^{1-\eps}}\)
& \(\textsf{Unambiguous-3SAT} \not\subseteq \text{DTIME}\!\left(n^{(\log n)^{\eps}}\right)\)
& $1 + \frac{\log n}{n}$; $\text{NP} \not\subseteq \text{RP}$; \cite{StephensDavidowitz2015} \\
\Cref{thm:usvp-high} & \(p\ge p_\eps\) & \(n^{1/(\log\log n)^{\eps}}\)
& \(\textsf{Unambiguous-3SAT} \nsubseteq \text{SUBEXP}\)
& \ldots \\
\bottomrule
\end{tabularx}
\end{table}

\subsection{Motivation}

\paragraph{Determinism.}
Beyond the $\ell_\infty$ norm, essentially all known hardness-of-approximation results for
\textsf{SVP} and \textsf{uSVP} rely on \emph{randomized} reductions, leaving open whether
randomness is inherently necessary.  Derandomizing these reductions has therefore been a long-standing goal.
Previous efforts have built upon the randomized reductions to $\textsf{SVP}_2$,
introducing alternative gadget constructions that may be easier to derandomize.
In this line, Micciancio~\cite{Micciancio2012} obtained a reduction achieving the
inapproximability ratio of~\cite{Haviv2018} with one-sided error,
and Bennett and Peikert~\cite{Bennett2023b} explored deterministic gadgets
based on Reed--Solomon codes.

We take a different route: returning to the algebraic \textsf{PCP} framework
of~\cite{Dinur1999,Dinur2002,Dinur2003},
and adapting its lattice encodings of NP witnesses as short lattice vectors.
We generalize this machinery and make it applicable to \textsf{SVP} in $\ell_p$ for $p>2$.
This framework is modular and we expect it to yield additional
derandomized hardness results for lattice problems,
possibly extending even to the $\ell_2$ case.

\paragraph{Cryptography and hardness of approximation.}
Tightening the approximation ratio for \textsf{uSVP} has direct consequences
for both worst–case hardness and cryptography.
Lyubashevsky and Micciancio~\cite{Lyubashevsky2009} showed that, for any \(p\ge1\),
\[
\gamma\textsf{--uSVP}_p
\ \le\
\tfrac{1}{\gamma}\textsf{--BDD}_p
\ \le\
2\gamma\textsf{--uSVP}_p,
\]
so the security of lattice–based cryptosystems whose assumptions reduce to either
\textsf{BDD} or \textsf{uSVP} directly depends on their respective approximation hardness.
Examples include encryption and signature schemes based on
\textsf{LWE} \cite{Regev2009,Peikert2016},
\textsf{NTRU} \cite{Hoffstein1998},
and \textsf{SIS}/\textsf{Ajtai–Dwork}–type constructions \cite{Ajtai1997,Goldreich1996a, Regev2004,ajtai2007first}. 
The complexity of approximating \textsf{BDD} is somewhat better understood than that of \textsf{uSVP}.
Liu, Lyubashevsky and Micciancio \cite{Liu2006} established NP–hardness
of $\textsf{BDD}_p$ for small constant approximation factors (for any \(p\ge1\)),
and Bennett and Peikert~\cite{bennett2020_BDD} showed NP-hardness for $\alpha$--$\textsf{BDD}_p$, with ratios $\alpha \to \frac{1}{2}$ as $p \to \infty$, approaching the unique-decoding radius (at most a single close vector exists---as lattice vectors are at distance of at least $\lambda_1^{(p)}$).
Surpassing this barrier provides additional motivation.

\paragraph{Attacks against LWE.}
\textsf{BDD} is precisely the decoding task underlying \textsf{LWE}
(Learning With Errors) \cite{Regev2009}.
A standard attack pipeline treats \textsf{LWE} as an instance of \textsf{BDD}
and reduces it to \textsf{uSVP}, commonly via Kannan’s embedding technique \cite{Kannan1987}:
\[
\textsf{LWE}\ \longrightarrow\ \textsf{BDD}\ \longrightarrow\ \textsf{uSVP}.
\]
This pathway and connected attacks have been analyzed and optimized in many works, including \cite{Lindner2011,Chen2011,albrecht2013efficacy,Bai2016, albrecht2017revisiting}. 
Cryptographic constructions assume hardness of approximation ratios for \textsf{uSVP}
that are much larger than the regime where NP–hardness is known or believed
(see \cite{goldreich1998limits, Cai1998, Aharonov2005}, for strong evidence against it).
Nevertheless, this motivates sharper hardness of approximation results for both
\textsf{BDD} and \textsf{uSVP}.

\subsection{Outline}
The paper is organized as follows.
\Cref{sec: preliminaries} reviews standard definitions and tools.
\Cref{sec: techniques} presents core components of our constructions, combining prior work with our  modifications and gadgets.
The sub-exponential construction
for 
\Cref{thm:svp-low} and \Cref{thm:usvp-low} appears in \Cref{sec:toy reduction}, together with its soundness analysis.
In \Cref{sec: The construction}, we give our construction for \Cref{thm:svp-high} and \Cref{thm:usvp-high}.
\Cref{sec: soundness} presents its soundness analysis.
In \Cref{sec: conclusions}, we conclude with a discussion and present some open problems for future research.

\section{Preliminaries} \label{sec: preliminaries}
\newcommand{\HH}{\mathbb{H}}
\newcommand{\vc}[1]{\vec #1}

In this section, we recall standard definitions and tools for lattices and lattice problems. 
We assume familiarity with basic concepts from PCPs, lattice geometry, and probability. 
For a thorough exposition, we refer readers to \cite{Safra2022}.



\subsection{Basic Lattice Concepts}
Recall from the introduction that for a linearly independent matrix $M\in \R^{n\times d}$, the lattice $\mathcal{L}[M]$ is the image of $\Z^d$ under $M$. 
It is often convenient to describe a lattice differently---as the set of integer solutions to a homogeneous linear system.
\begin{definition}[Integer kernel]\label{def:kerZ}
For \(A \in \Z^{m\times n}\), the \emph{integer kernel} is
\[
\ker_\Z(A) \defeq \sett{ z \in \Z^n }{ A z = \mathbf 0 } ,
\]
which is a lattice in \(\Z^n\) (and hence in \(\R^n\)).
\end{definition}

One can efficiently convert $\ker_\Z(A)$ to a representation $\mathcal{L}[M]$ of the same lattice
\begin{fact}[folklore]\label{fact:homSol}
Given \(A \in \Z^{m\times n}\), one can compute in polynomial time a basis \(M \in \Z^{n\times \dim(\ker A)}\) such that
\(\ker_\Z(A) = \lat[M]\).
\end{fact}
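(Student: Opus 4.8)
The plan is to reduce the claim to the computation of a Hermite Normal Form (HNF) of $A$ together with its transforming unimodular matrix, and then read off the kernel basis from the zero columns.

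First I would invoke the classical fact from integer linear algebra that, given $A\in\Z^{m\times n}$, one can compute in polynomial time a unimodular matrix $U\in GL_n(\Z)$ (so $U^{-1}\in\Z^{n\times n}$ as well) and a matrix $H\defeq AU\in\Z^{m\times n}$ in column-style Hermite Normal Form. Writing $r\defeq\operatorname{rank}(A)$, and permuting columns if necessary (a unimodular operation), we may assume the first $r$ columns of $H$ are nonzero and $\Q$-linearly independent, while the last $n-r$ columns of $H$ vanish. Let $M\in\Z^{n\times(n-r)}$ be the submatrix of $U$ formed by its last $n-r$ columns; this is the candidate basis, and both $U$ and $M$ are obtained by trivial bookkeeping on top of the HNF computation, hence in polynomial time.

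It remains to check $\ker_\Z(A)=\lat[M]$. For the inclusion $\lat[M]\subseteq\ker_\Z(A)$: each column $u_j$ of $M$ satisfies $Au_j=(AU)_j=H_j=\mathbf 0$; moreover the columns of $M$ are a subcollection of the columns of the invertible matrix $U$, hence $\Z$-linearly independent, so $M$ genuinely is a basis of the lattice it generates. For $\ker_\Z(A)\subseteq\lat[M]$: given $v\in\Z^n$ with $Av=\mathbf 0$, set $w\defeq U^{-1}v\in\Z^n$; then $Hw=AUw=Av=\mathbf 0$, and since the first $r$ columns of $H$ are $\Q$-linearly independent this forces the first $r$ coordinates of $w$ to be zero, so $v=Uw$ is an integer combination of the last $n-r$ columns of $U$, i.e.\ $v\in\lat[M]$. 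In particular $\dim(\ker A)=n-r$, matching the stated dimensions.

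The only genuinely non-routine ingredient is the polynomial running time of the HNF step: naive fraction-free elimination can cause the bit-lengths of the intermediate entries of $U$ to blow up exponentially, so one must appeal to an HNF algorithm (Kannan--Bachem and its refinements) that provably keeps $U$ and $H$ of bit-length polynomial in that of $A$. Granting this standard result, everything else above is immediate, which is why the statement is reasonably labelled folklore.
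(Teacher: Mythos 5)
Your proof is correct: the paper states this fact as folklore without supplying a proof, and your argument via a column-style Hermite Normal Form $H = AU$ with unimodular $U$ (taking the last $n-r$ columns of $U$ as the kernel basis, and invoking Kannan--Bachem to control bit-lengths) is exactly the standard way this fact is established. Both inclusions and the polynomial-time claim are handled properly, so nothing is missing.
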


\subsection{Shortest Vector and Gap Problems}\label{sec:computation-probs-lattices}

Let \(\lat\subseteq\R^n\) be a lattice. For any \(\ell_p\) norm with \(p\ge 1\), the \emph{successive minima} are
\[
\lambda_k^{(p)}(\lat)
\;\defeq\;
\inf\bigl\{\, r>0 \;\mid\; \dim\bigl(\Span\{\, v\in\lat \mid \norm{v}_p \le r \,\}\bigr)\,\ge k \bigr\}.
\]
In particular, \(\lambda_1^{(p)}(\lat)\) is the length (in \(\ell_p\)) of the shortest nonzero lattice vector.

\begin{definition}[\textsf{GapSVP}]\label{def:gapsvp}
Given a full-rank basis \(M\in\R^{n\times d}\) and a threshold \(C>0\), the decision problem
\(\gamma\)--\(\textsf{GapSVP}_p\) asks to distinguish between:
\begin{itemize}
  \item \textbf{YES:} \(\lambda_1^{(p)}(\lat[M]) \le C\).
  \item \textbf{NO:}  \(\lambda_1^{(p)}(\lat[M]) > \gamma(n)\cdot C\).
\end{itemize}
\end{definition}

\begin{definition}[\textsf{Unique-SVP}; decision]
Given a full-rank basis \(M\in\R^{n\times d}\) and a threshold \(C>0\), the decision problem
\(\gamma\)--\(\textsf{uSVP}_p\) asks to distinguish between:
\begin{itemize}
  \item \textbf{YES:} \(\lambda_1^{(p)}(\lat[M]) \le C\) and $\lambda_2^{(p)}(\lat[M]) \ge \gamma(n) \cdot C$.
  \item \textbf{NO:}  \(\lambda_1^{(p)}(\lat[M]) > \gamma(n) \cdot C\).
\end{itemize}
\end{definition}

Throughout the paper, we often abbreviate $\ell_p$ and $\gamma(n)$, writing simply \textsf{GapSVP} or \textsf{uSVP} when the parameters are clear from context.
We also ignore floating-point precision, as it is insignificant.

\subsection{Constraint Satisfaction Problems}
\emph{Constraint Satisfaction Problems} (CSPs) generalize problems with local consistency constraints.
An important subcase is \emph{Constraint Satisfaction Graph} (\textsf{CSG}), in which each constraint involves a pair of variables.

\begin{definition}[\emph{Constraint Satisfaction Graph}]
A \textsf{CSG} instance consists of a graph $G = (V, E)$,
a finite alphabet $\Sigma$,
and, for each edge $e\in E$, a constraint $\Phi_e \subseteq \Sigma \times \Sigma$.
\end{definition}

An assignment $c\colon V \to \Sigma$ satisfies the \textsf{CSG} instance if
\((c(u), c(v)) \in \Phi_{(u,v)}\)
for every edge $(u,v) \in E$.
The decision problem is to determine whether a satisfiable assignment exists.

\subsection{Promise-UP}\label{sec: Promise-UP}
The class UP (Unambiguous Non-deterministic Polynomial-Time) consists of decision problems solvable by a non-deterministic polynomial-time machine that has at most one accepting computation path for each input.
Formally, a language $L$ is in UP if there exists an efficient verifier $V$ such that: 
\begin{itemize}
    \item If $x \in L$,
    then there exists a \emph{unique} witness 
    $w \in \set{0, 1}^*$ such that
    $V(x, w) = \textbf{Yes}$. 
    The witness $w$ is of length polynomial in $\card{x}$. 
    \item If $x \not\in L$, 
    then for all $w \in \set{0, 1}^*$, the verifier rejects, namely, $V(x, w) = \textbf{No}$.  
\end{itemize}
The class Promise-UP is the promise-problem analogue of UP. 
A problem is in Promise-UP 
if the YES instances have a single witness (and the NO instances have none).
The difference is that some instances fall into neither the YES nor the NO cases.

An important example is \textsf{Unambiguous-3SAT}, the problem of deciding \textsf{3SAT} instances with a promise of a unique satisfying assignment.


\subsection{Finite fields}
Let $\E$ be a field. A subset $\F \subseteq \E$ is a subfield if it is closed under the operations of $\E$ and forms a field with the induced operations. We write $\E/\F$ to denote that $\E$ is an extension of $\F$. Let $q=p^n$ be a prime power. We recall two well-known facts.
\begin{enumerate}
\item There exists, up to isomorphism, a unique field of size $q$, denoted $\F_q$.
\item For every integer $m > 1$, $\F_{q^m} / \F_{q}$. 
\end{enumerate}
It is well known that the extension $\F_{q^m} / \F_q$ can be constructed in time
$\poly\ps{q^m}$.

\subsection{Low-Degree Polynomials over Finite Fields}
Let $\F$ be a finite field and consider variables $x_1,\dots,x_t$ over \(\F\).
A \emph{monomial} is a product
$x_1^{i_1}\cdots x_t^{i_t}$,
with \emph{total degree} $\deg(x_1^{i_1}\cdots x_t^{i_t}) \defeq i_1+\dots+i_t$ and \emph{individual degree} $\ideg(x_1^{i_1}\cdots x_t^{i_t}) \defeq \max\{i_1,\dots,i_t\}$. 
The total (resp.\ individual) degree of a polynomial is the maximum total (resp.\ individual) degree among its monomials.
For any integer $d\ge0$, define
\[
  \F_{\le d}[x_1,\dots,x_t]
  =\{\,f:\F^t\to\F \mid f\text{ is a polynomial with }\deg(f)\le d\},
\]
and write $\F^t_{\le d}$ when the variables are clear.
We also work with \emph{affine planes} in $\F^t$, namely\ two-dimensional affine subspaces, denoted by $\PP(\F^t)$.  For a plane $\P\in\PP(\F^t)$, let
\[
  \P_{\le d}
  =\{\,g:\P\to\F \mid \deg(g)\le d\}.
\]

\begin{fact}[Low-Degree Extension]\label{fact: lde}
Let $\HH\subseteq\F$ and $f:\HH^t\to\F$ be any function. 
There is a unique polynomial extension $f':\F^t\to\F$ satisfying
\[
  \forall x\in \HH^t \colon f(x)=f'(x)
  \quad\text{and}\quad
  \ideg(f')\le|\HH|-1.
\]
\end{fact}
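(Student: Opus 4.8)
The plan is to prove existence by explicit multivariate Lagrange interpolation and uniqueness by induction on the number of variables $t$; the argument is essentially routine, so I will be brief.

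For existence, set $h=|\HH|$ and, for each $a\in\HH$, define the univariate Lagrange basis polynomial
\[
  \delta_a(x) \;=\; \prod_{b\in\HH\setminus\{a\}} \frac{x-b}{a-b},
\]
whose denominators are nonzero because $\F$ is a field and $a\ne b$. It has degree exactly $h-1$, with $\delta_a(a)=1$ and $\delta_a(b)=0$ for every $b\in\HH\setminus\{a\}$. For $\vec a=(a_1,\dots,a_t)\in\HH^t$ put $\delta_{\vec a}(\vec x)=\prod_{i=1}^t\delta_{a_i}(x_i)$, a polynomial of individual degree $h-1$ that is the indicator of $\vec a$ on $\HH^t$. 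Then
\[
  f'(\vec x)\;\defeq\;\sum_{\vec a\in\HH^t} f(\vec a)\,\delta_{\vec a}(\vec x)
\]
is a polynomial of individual degree at most $h-1$ agreeing with $f$ on all of $\HH^t$.

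For uniqueness it suffices to show that the only polynomial $g\in\F[x_1,\dots,x_t]$ of individual degree $\le h-1$ vanishing on all of $\HH^t$ is $g=0$ (apply this to the difference of two candidate extensions). I would argue by induction on $t$. For $t=1$, a nonzero polynomial of degree $\le h-1$ has at most $h-1<h=|\HH|$ roots and so cannot vanish on $\HH$. For the inductive step, write $g=\sum_{j=0}^{h-1} g_j(x_1,\dots,x_{t-1})\,x_t^{\,j}$ with each $g_j$ of individual degree $\le h-1$ in $t-1$ variables. Fixing any $\vec b\in\HH^{t-1}$, the univariate polynomial $g(\vec b,x_t)$ has degree $\le h-1$ and vanishes on all of $\HH$, hence is identically zero; thus $g_j(\vec b)=0$ for every $j$ and every $\vec b\in\HH^{t-1}$. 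By the inductive hypothesis each $g_j$ is the zero polynomial, so $g=0$.

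The only point needing a word of care—and the closest thing to an obstacle—is the distinction over finite fields between a polynomial and the function it computes: since $h-1\le|\F|-1$, the uniqueness argument above applies verbatim with $\HH$ replaced by $\F$, so a polynomial of individual degree $\le h-1$ is determined by its values on $\F^t$, and hence uniqueness \emph{as a polynomial} coincides with uniqueness \emph{as a function} $\F^t\to\F$. Beyond this, nothing in the proof is delicate.
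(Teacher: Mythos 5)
Your proof is correct and takes the same route the paper intends: the paper dismisses the fact with ``follows immediately by interpolation,'' and your argument is exactly that interpolation claim carried out in full (tensor-product Lagrange basis for existence, induction on the number of variables for uniqueness, plus the correct observation that individual degree at most $|\HH|-1\le|\F|-1$ makes polynomial and functional uniqueness coincide). No gaps.
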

\begin{proof}
Follows immediately by interpolation.
\end{proof}

\begin{lemma}[Schwartz--Zippel]\label{lem: schwartz-zippel}
If $p\in\F_{\le d}[x_1,\dots,x_t]$ is nonzero and $S\subseteq\F$, then
\[
  \Pr_{r\in S^t}\bigl[p(r)=0\bigr]\;\le\;\frac{d}{|S|}.
\]
\end{lemma}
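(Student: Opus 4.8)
The plan is to prove the lemma by induction on the number of variables $t$, following the classical argument.

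For the base case $t=1$, I would invoke the standard fact that a nonzero univariate polynomial over a field of degree at most $d$ has at most $d$ roots. Consequently at most $d$ of the $|S|$ elements of $S$ can be roots of $p$, so $\Pr_{r\in S}[p(r)=0]\le d/|S|$.

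For the inductive step, I would regard $p$ as a polynomial in $x_t$ with coefficients in $\F[x_1,\dots,x_{t-1}]$, writing $p=\sum_{i=0}^{k} x_t^i\, p_i(x_1,\dots,x_{t-1})$ where $k$ is the largest power of $x_t$ actually occurring, so that $p_k\not\equiv 0$; by total-degree bookkeeping every monomial of $p_k$ arises from a monomial of $p$ of the form $x_t^k\cdot(\text{monomial in }x_1,\dots,x_{t-1})$, hence $\deg(p_k)\le d-k$. Now condition on a uniformly random $(r_1,\dots,r_{t-1})\in S^{t-1}$ and split into two events. If $p_k(r_1,\dots,r_{t-1})=0$, then by the inductive hypothesis applied to the nonzero polynomial $p_k\in\F_{\le d-k}[x_1,\dots,x_{t-1}]$ this occurs with probability at most $(d-k)/|S|$. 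Otherwise $p_k(r_1,\dots,r_{t-1})\ne 0$, in which case $p(r_1,\dots,r_{t-1},x_t)$ is a nonzero univariate polynomial in $x_t$ of degree exactly $k$, so over a uniformly random $r_t\in S$ it vanishes with probability at most $k/|S|$ by the base case. A union bound over these two events yields $\Pr_{r\in S^t}[p(r)=0]\le (d-k)/|S|+k/|S| = d/|S|$, completing the induction.

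I do not expect any genuine obstacle here; the only points requiring a little care are the degree accounting $\deg(p_k)\le d-k$ (so that the inductive hypothesis is legitimately applied with the parameter $d-k$ rather than $d$), and phrasing the case split so that the two probabilities genuinely add via the union bound. (Implicitly $S\ne\emptyset$, since otherwise the probability is not defined.)
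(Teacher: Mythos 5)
Your proof is correct: it is the standard induction on the number of variables, with the base case given by the root bound for univariate polynomials, the degree bookkeeping $\deg(p_k)\le d-k$ handled properly, and the two events (leading coefficient vanishes; leading coefficient nonzero but the resulting univariate polynomial in $x_t$ vanishes) combined correctly to give $(d-k)/|S|+k/|S|=d/|S|$. The paper states this lemma as a classical fact without proof, so there is nothing to compare against; your argument is exactly the textbook proof one would supply.
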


\subsection{Plane-vs-Plane} 
\label{sec: the plane vs plane graph}
The Plane-vs-Plane test 
(introduced by Raz and Safra \cite{Raz1997})
checks whether a purported encoding of a low-degree function is consistent.

\begin{definition}[Plane-vs-Plane]\label{def:Plane-vs-Plane}
Let $\F$ be a finite field and let $d$ be a positive integer.
Given a table \(T\) that assigns to each affine plane
\(\P \in \PP(\F^t)\) a low-degree polynomial $T[\P] \in \P_{\le d}$, 
the Plane-vs-Plane test proceeds as follows:
\begin{enumerate}
    \item Pick a random affine line \( \ell \subset \F^t \).
    Sample two distinct affine planes \(\P_1 \) and \(\P_2 \) containing \( \ell \) (precisely, \( \P_1 \cap \P_2 = \ell \)).
    \item Verify that the two functions agree on \( \ell \), namely, that \( T[P_1]_{|\ell} = T[P_2]_{|\ell} \).\end{enumerate}
\end{definition}

\begin{theorem}[Plane-vs-Plane test \cite{Raz1997}]
Let \( T \colon \PP \to \F_{\le d}[x,y] \) be an assignment of one low-degree polynomial to each plane.
There exists a constant \( c > 0 \), such that for every \( \delta > 0 \), 
if the test passes with probability $\delta$,
there exist a low-degree polynomial \(g \in \F_{\le d}[x_1,\dots,x_t] \) such that:
\[
  \Pr_{\P} \left[ T[\P] = g_{|\P}\right] \ge \delta - t \cdot \left( \frac{d}{\abs{\F}} \right)^c
\]
\end{theorem}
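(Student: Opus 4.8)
The plan is to follow the consistency-graph strategy of Raz and Safra: from a test that accepts with probability $\delta$, extract a large family of mutually consistent planes, recover a single global polynomial $g$ from that family by a plurality vote, and then certify both that $g$ has total degree $\le d$ and that $g_{|\P}=T[\P]$ on almost all planes of the family. Concretely, form the \emph{consistency graph} $\mathcal{G}$ on the vertex set $\PP(\F^t)$ of affine planes, joining $\P_1$ and $\P_2$ when $\P_1\cap\P_2$ is an affine line $\ell$ with $T[\P_1]_{|\ell}=T[\P_2]_{|\ell}$; the hypothesis says precisely that a random line $\ell$ together with two random distinct planes through $\ell$ forms an edge with probability $\delta$. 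For a fixed line $\ell$, partition the planes through $\ell$ according to the degree-$\le d$ univariate polynomial they induce on $\ell$, and let $a(\ell)$ be the collision probability of that partition; averaging gives $\mathbb{E}_\ell[a(\ell)]\ge\delta-o(1)$, so on a $\ge\delta/2$ fraction of lines there is a \emph{popular restriction} $f_\ell$ carried by at least a $\delta/2$ fraction of the planes through $\ell$. The recurring rigidity tool is Schwartz--Zippel (\Cref{lem: schwartz-zippel}): two degree-$\le d$ polynomials on a line (resp.\ on a plane) that agree on more than a $d/|\F|$ (resp.\ $2d/|\F|$) fraction of points must coincide.

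The core of the argument is a \emph{cleaning} step that upgrades this local agreement to a globally consistent sub-family. Call a plane $\P$ \emph{clean} if $T[\P]_{|\ell}=f_\ell$ for all but a $(d/|\F|)^{\Omega(1)}$ fraction of the lines $\ell\subset\P$ on which $f_\ell$ is defined; iteratively discard planes that are not clean with respect to the currently surviving family, updating the popular restrictions after each round. Using the rigidity lemmas, one argues that each round deletes only a $(d/|\F|)^{\Omega(1)}$ fraction of the planes, that the process stabilizes after $O(t)$ rounds, and that the surviving family $\mathcal{E}$ has density at least $\delta-t\cdot(d/|\F|)^{c}$ and is self-consistent: any two planes of $\mathcal{E}$ meeting in a line agree on it. I expect this to be the main obstacle: the pair distribution over (line, plane) is not a product, so one must separately control ``atypical'' lines (contained in few planes of $\mathcal{E}$) and ``atypical'' planes, verify that the iteration really terminates in $O(t)$ rounds, and keep the accumulated Schwartz--Zippel slack bounded by $t\cdot(d/|\F|)^{c}$ for a universal constant $c$.

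Given $\mathcal{E}$, define $g\colon\F^t\to\F$ by letting $g(x)$ be the plurality value of $\{\,T[\P](x)\ :\ \P\in\mathcal{E},\ x\in\P\,\}$. Then: (i) for almost every line $\ell$, $g_{|\ell}$ equals the degree-$\le d$ polynomial $f_\ell$, since $\ell$ lies in many clean planes of $\mathcal{E}$ all inducing $f_\ell$ on it; (ii) a function agreeing with a degree-$\le d$ polynomial on a $1-(d/|\F|)^{\Omega(1)}$ fraction of lines is itself of total degree $\le d$ (the local characterization of low-degree polynomials, which in the Raz--Safra treatment is extracted from the same plane machinery rather than cited as a black box), so $g\in\F_{\le d}[x_1,\dots,x_t]$; (iii) for each $\P\in\mathcal{E}$, the two degree-$\le d$ polynomials $T[\P]$ and $g_{|\P}$ on the plane $\P$ agree on all but a tiny fraction of $\P$, hence are identical by Schwartz--Zippel. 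Combining these, $T[\P]=g_{|\P}$ for every $\P\in\mathcal{E}$, so $\Pr_\P[T[\P]=g_{|\P}]\ge|\mathcal{E}|\ge\delta-t\cdot(d/|\F|)^{c}$, as required.

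Besides the cleaning step, the other technically loaded point is item~(ii): establishing global low degree of $g$ is effectively a miniature low-degree test and must be run carefully through planes, not invoked. The plurality/interpolation skeleton, the averaging argument of the first paragraph, and the final Schwartz--Zippel comparisons are routine once the combinatorial core is in place; the entire quantitative loss, and the value of the universal constant $c$, are dictated by how tightly the cleaning iteration can be analyzed.
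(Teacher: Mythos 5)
This theorem is not proved in the paper at all: it is imported as a black box from \cite{Raz1997} (and its list-decoding form, \Cref{thm:Plane-vs-Plane}, from Moshkovitz--Raz), so your task here was effectively to reprove the Raz--Safra plane-vs-plane soundness theorem, and your proposal does not do that. What you give is a roadmap whose two load-bearing steps are exactly the hard content of \cite{Raz1997} and are asserted rather than established. First, the ``cleaning'' iteration: the claims that each round discards only a \((d/|\F|)^{\Omega(1)}\) fraction of planes, that the process stabilizes after \(O(t)\) rounds, and that the survivors form a self-consistent family of density \(\delta - t\,(d/|\F|)^{c}\) are precisely what must be proved, and they are far from routine when \(\delta\) is sub-constant --- which is the regime this theorem is prized for. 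At small \(\delta\) the popular restriction \(f_\ell\) on different lines may come from different global candidates, the plurality is not obviously coherent across lines, and the actual Raz--Safra argument controls this by an induction on the dimension (subspace-vs-subspace), which is where the additive factor \(t\) in the error genuinely comes from; your attribution of that factor to ``\(O(t)\) cleaning rounds'' is a guess, not an argument.

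Second, step (ii) is incorrect as stated: a function whose restriction to a \(1-(d/|\F|)^{\Omega(1)}\) fraction of lines agrees with a degree-\(\le d\) univariate polynomial need not itself be a degree-\(\le d\) polynomial --- the exact local characterization needs agreement on \emph{all} lines, and with ``almost all'' lines one only gets that \(g\) is \emph{close} to a low-degree polynomial, which is itself a low-degree-testing statement of essentially the same difficulty as the theorem being proved. You acknowledge this must ``be run carefully through planes, not invoked,'' but running it is the proof, and without it the chain (i)--(iii) does not close. So the proposal has the right general flavor (consistency graph, popular restrictions, plurality decoding, Schwartz--Zippel rigidity), but as it stands it contains genuine gaps at the two places where the theorem is actually hard; the honest alternatives are either to carry out the full dimension-induction argument of \cite{Raz1997} or to do what the paper does and cite it.
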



Here we require a different notion of soundness—\emph{list-decoding soundness}.
In the basic (unique-decoding) setting, a test passes with a non-negligible probability only if the local views partially agree with a single designated global function.
In the list-decoding variant, we allow a short list of candidate global functions, and agreement with \emph{any} one of them suffices.
Although it was likely folklore that such statements follow from the same techniques, the earliest explicit formulation we know in this context is due to Moshkovitz and Raz \cite{Moshkovitz2010a}, who also give a general recipe for deriving list-decoding guarantees from unique-decoding ones:

\begin{theorem}[Plane-vs-Plane \cite{Raz1997}: list-decoding]\label{thm:Plane-vs-Plane}
Let \( T \colon \PP \to \F_{\le d}[x,y] \) be an assignment of one low-degree polynomial to each plane.
There exists a constant \( c > 0 \), such that for every \( \delta > 0 \), there exists \( k = O\left( \frac{1}{\delta} \right) \) and a list of low-degree polynomials \( g_1, \dots, g_k \in \F_{\le d}[x_1,\dots,x_t] \) such that:
\[
  \Pr_{P_1 \cap P_2 = \ell} \left[ T[\P_1]_{|\ell} = T[\P_2]_{|\ell} \land \not\exists i\colon
  (T[\P_1] = {g_i}_{|\P_1} \land T[P_2] = {g_i}_{|P_2}) \right] \le \delta + t \cdot \left( \frac{d}{\abs{\F}} \right)^c
\]
In words: except with probability at most \( \delta + t \cdot (d/\abs{\F})^c \), every passing test (two planes agreeing on their intersection line) is explained by one of the \(k\) global low-degree polynomials on the list.
\end{theorem}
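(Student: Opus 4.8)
The plan is to derive the list-decoding statement of \Cref{thm:Plane-vs-Plane} from the unique-decoding version of the Plane-vs-Plane theorem by the standard ``peeling'' argument of Moshkovitz and Raz \cite{Moshkovitz2010a}. The idea is to repeatedly extract global low-degree polynomials that explain a noticeable fraction of the passing tests, remove the planes they account for, and recurse on what remains, stopping once the residual passing probability drops below~$\delta$.

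More precisely, I would proceed as follows. First, fix the constant $c>0$ and let $\eta \defeq t\cdot(d/\abs{\F})^c$ denote the error term from the unique-decoding theorem. Define a sequence of sub-tables: set $T_0 \defeq T$, and given $T_i$, let $p_i$ be the probability that the Plane-vs-Plane test passes on $T_i$ (where $T_i$ is ``undefined'' on some planes, and a test involving an undefined plane is counted as failing). If $p_i \le \delta$, stop. Otherwise apply the unique-decoding Plane-vs-Plane theorem to the (restriction defining) $T_i$: since the test passes with probability $p_i > \delta$, there is a global low-degree polynomial $g_{i+1} \in \F_{\le d}[x_1,\dots,x_t]$ with
\[
  \Pr_{\P}\bigl[T_i[\P] = {g_{i+1}}_{|\P}\bigr] \;\ge\; p_i - \eta \;\ge\; \delta - \eta .
\]
Form $T_{i+1}$ from $T_i$ by deleting (making undefined) every plane $\P$ on which $T_i[\P] = {g_{i+1}}_{|\P}$. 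The key accounting step is that each iteration removes at least a $(\delta-\eta)$-fraction of the planes that are still defined — more usefully, it removes planes that were carrying a $\Omega(\delta^2)$ share of the original passing probability, since two agreeing planes through a common line that both equal ${g_{i+1}}$ on that line get removed together. Hence the number of iterations before $p_i$ falls to $\delta$ is $k = O(1/\delta)$, which bounds the list length. Collecting $g_1,\dots,g_k$ and observing that once the process halts every remaining passing test contributes at most $p_k \le \delta$ in total, while every test removed along the way was explained by some $g_i$ on both of its planes, gives the claimed bound: the probability that a passing test is \emph{not} explained by any list element is at most $p_k + (\text{error introduced across iterations}) \le \delta + \eta$. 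A mild technical point is that one must verify the error terms do not accumulate across the $O(1/\delta)$ rounds; this holds because the unique-decoding guarantee is applied to disjoint pieces of the plane set — each plane is deleted in exactly one round — so the $\eta$ terms refer to fractions of disjoint parts and sum to at most $\eta$ overall (or, in the cleanest framing, one applies unique-decoding once per round but charges the $\eta$ slack only to the residual, never-removed planes).

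The main obstacle I expect is the bookkeeping in the recursion: making precise that ``a passing test on two planes $\P_1,\P_2$ meeting in $\ell$'' is charged correctly — it should be removed (and hence explained) as soon as \emph{some} $g_i$ agrees with $T[\P_1]$ on $\P_1$ and with $T[\P_2]$ on $\P_2$, and one must ensure that the event ``$T[\P_1]_{|\ell} = T[\P_2]_{|\ell}$ but no $g_i$ explains both'' is exactly the residual event left after $k$ rounds, so that its probability is at most the final $p_k$ plus the single aggregated error $\eta$. Care is also needed because the unique-decoding theorem is stated for totally-defined tables, so at each round one should instead apply it to an auxiliary total table (e.g. defining $T_i$ arbitrarily on deleted planes) and then argue that this only helps. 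Once this is set up, the bound $k = O(1/\delta)$ and the final inequality follow by the geometric decay of $p_i$.
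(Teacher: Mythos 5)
First, note that the paper does not actually prove \Cref{thm:Plane-vs-Plane}: it quotes it as a known result, attributing the statement to Raz--Safra \cite{Raz1997} together with the Moshkovitz--Raz recipe \cite{Moshkovitz2010a} for passing from unique-decoding to list-decoding soundness. So your proposal is measured against that cited recipe, and the peeling plan you describe has a genuine gap exactly at the point you call a mild technicality: applying the unique-decoding theorem to the residual table. That theorem only guarantees \emph{some} global $g$ whose agreement with a \emph{totally defined} table is at least $p_i-\eta$; nothing forces this agreement to sit on the still-defined planes. If you fill the deleted planes with arbitrary values, every degree-$d$ polynomial on a plane extends to a global degree-$d$ polynomial, so the extracted $g$ may agree only with the filled-in (deleted) planes; if instead you keep the original values on deleted planes, the extracted $g$ may simply be a previously found $g_j$. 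In either case the round can remove nothing, $p_{i+1}=p_i>\delta$, and the $O(1/\delta)$ bound on the number of rounds collapses. So ``this only helps'' is precisely what needs proof; repairing it requires either a unique-decoding statement for sub-tables/weighted test distributions (which the Raz--Safra analysis can supply, but which is not the black-box theorem you invoke), or the non-peeling route of \cite{Moshkovitz2010a}: take the list to be all polynomials with agreement above a threshold $\Omega(\delta)$, bound its size by the near-disjointness of agreement sets (Schwartz--Zippel), and bound the unexplained passing tests directly.

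There is also a smaller but real accounting error. You charge every removed test as ``explained by some $g_i$ on both of its planes,'' but removal is per plane, so the two endpoints of an edge can be claimed in different rounds by different polynomials $g_i\neq g_j$. Such an edge can pass (the two restrictions agree on $\ell$ because $g_i$ and $g_j$ happen to agree there) while not being explained in the sense of the theorem, which demands a single $g_i$ for both planes. These cross edges must be bounded separately: for each pair $(g_i,g_j)$ the fraction of lines on which they agree is $O(d/\abs{\F})$, giving an extra $O\!\left(k^2\, d/\abs{\F}\right)$ term with $k=O(1/\delta)$, which is not automatically below the claimed error $t\left(d/\abs{\F}\right)^{c}$ and at minimum forces a re-tuning of $c$ or an explicit assumption relating $\delta$ to $d/\abs{\F}$.
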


\subsubsection{Plane-vs-Plane, the Graph}

It is often helpful to think of this test as sampling an edge in the appropriate graph whose vertices consist of all planes.
We define the \emph{Plane-vs-Plane graph}
$G_{\pvp} = (V,E)$ over $\F_q^t$:
\begin{enumerate}
    \item Vertices: all affine 2-dimensional subspaces (planes) in $\F_q^t$.

    \item Edges: an (undirected) edge connects two planes if their intersection is an affine line.
\end{enumerate}
Observe that the $\pvp$ graph is regular, that is,
every vertex has the same degree. 

For any subset $S\subseteq V$, 
we define the  \emph{edge expansion} of \(S\)
as
\[
\Phi(S)\defeq
\cProb{(u, v)\in E}{u\in S}{v \not\in S}
\]
that is, the probability a random edge coming out of $S$ escapes to $V \setminus S$.

The proof of the 2-to-2 Games Theorem \cite{Khot2017,Dinur2017a,Dinur2018a,Khot2023} was completed through an analysis of set expansion in the Grassmann graph.
These expansion properties have since led to several applications, including improved low-degree testing \cite{Kaufman2025} and new PCP theorems \cite{Minzer2024_optrade}.
For our purposes, weaker expansion results suffice (see \hyperref[Appendix: A]{Appendix A} for a proof):

\begin{fact} \label{fact: expand-PvP}
Let $G_{\pvp}$ be the Plane-vs-Plane graph over $\F_q^t$.  Then for every subset $S\subseteq V$, we have
\[\Phi(S)\;
\ge\;1 \;-\;\frac{\abs{S}}{\abs{V}}
\;-\;\frac{3}{q}.
\]
\end{fact}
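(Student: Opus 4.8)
The plan is to bound the edge expansion $\Phi(S)$ from below by controlling the probability that a random edge out of $S$ has both endpoints in $S$. Fix a vertex $u$ (a plane) in $S$, and recall that a random edge out of $u$ is obtained by picking a random affine line $\ell\subset u$ and then a random plane $v\supset\ell$ with $v\neq u$. So
\[
\Prob{}{v\in S \mid u\in S}
\;=\;\Expect{u\in S}{\,\Prob{\ell\subset u,\ v\supset\ell,\ v\neq u}{v\in S}\,}.
\]
The strategy is to show that for a \emph{fixed} plane $u$, the distribution on the ``other'' plane $v$ obtained this way is close to uniform over all planes, with total-variation error $O(1/q)$. Then $\Prob{}{v\in S}\le \abs{S}/\abs{V}+O(1/q)$ regardless of $u$, and averaging over $u\in S$ gives $\Phi(S)=1-\Prob{}{v\in S\mid u\in S}\ge 1-\abs{S}/\abs{V}-O(1/q)$, which is the claim once the constant is pinned down to $3$.

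First I would make the sampling explicit and count. A plane $u$ in $\F_q^t$ contains exactly $q^2+q$ affine lines (the number of affine lines in $\F_q^2$: $q(q+1)$ of them, namely $q+1$ parallel classes each with $q$ lines). For a fixed affine line $\ell$, the planes containing $\ell$ are in bijection with the lines through a fixed point in the quotient space $\F_q^t/\mathrm{dir}(\ell)\cong\F_q^{t-1}$ passing through the image of $\ell$; there are $\frac{q^{t-1}-1}{q-1}$ of these (one of which is $u$ itself). So the random edge procedure picks $\ell$ uniformly among the $q(q+1)$ lines of $u$, then $v$ uniformly among the $\frac{q^{t-1}-1}{q-1}-1$ planes $\neq u$ through $\ell$. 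Second, I would compare this to the uniform distribution on $V$ (all planes). The key estimate is: for a fixed plane $u$ and a uniformly random plane $w$, the probability that $w$ meets $u$ in an affine line is $O(1/q)$ — generic pairs of planes in $\F_q^t$ either intersect in a point or not at all when $t$ is large; only an $O(1/q)$ fraction intersect in a line. Conversely, among planes $v$ that \emph{do} meet $u$ in a line, the edge-sampling distribution assigns each such $v$ a weight proportional to (number of lines $\ell$ of $u$ with $\ell\subseteq v$) $=1$ (since $v\cap u$ is that single line) times $\frac{1}{q(q+1)}\cdot\frac{1}{(\text{planes through }\ell\text{ other than }u)}$; this is the same for every valid $v$, so the edge distribution is \emph{exactly uniform on the set of $v$'s meeting $u$ in a line}. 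Therefore the edge distribution on $v$ is the uniform distribution on $V$ conditioned on an event of probability $\ge 1-O(1/q)$, up to the negligible issue of $v=u$ being excluded. Hence its total-variation distance from $\Unif(V)$ is $O(1/q)$, and more precisely I would track constants to land on $3/q$.

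The main obstacle is the constant-chasing in the last step: one must verify that the ``bad'' mass — planes $w$ that intersect $u$ in a line or coincide with $u$, plus the correction from excluding $v=u$ in the edge process versus not excluding $w=u$ in $\Unif(V)$ — sums to at most $3/q$ uniformly in $t$ and $q$, rather than just $O(1/q)$. Concretely, I would count: the number of planes meeting a fixed plane $u$ in a line is $(q+1)\cdot\big(\frac{q^{t-1}-1}{q-1}-1\big)$ (choose one of $q+1$ parallel classes, i.e. one of the $q+1$ lines-through-a-point-in-$u$... more carefully, choose the intersection line among the $q(q+1)$ lines of $u$, choose the other plane, then divide by the overcount) and compare it against $\abs{V}=\frac{(q^t-1)(q^t-q)}{(q^2-1)(q^2-q)}\cdot q^{t-2}$ or the appropriate affine count; the ratio is $\sim (q+1)/q^{t-2}\cdot(\text{stuff})$, which is crudely at most $3/q$ once $t\ge 3$ and $q\ge 2$. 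I would also handle the trivial edge case where $S=V$ (expansion $0$, bound $\le -3/q<0$ vacuous) and small $t$ separately if needed. A full proof is deferred to Appendix A; the outline above is the route.
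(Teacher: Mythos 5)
Your proposal does not follow the paper's route (Appendix~A lifts $G_{\pvp}$ to a Cayley graph on triples in $(\F_q^t)^3$, bounds all nontrivial eigenvalues by $1/q$ via characters, and finishes with a weighted expander mixing lemma, transferring back to $G_{\pvp}$ at a cost of $1/q+1/q^2$), and it contains a genuine gap: your key estimate contradicts itself. You correctly observe that a fixed plane $u$ and a uniformly random plane $w$ intersect in a line only with probability $O(1/q)$ (in fact $\Theta(q^{6-2t})$ for $t\ge 4$), but you then assert that the edge distribution out of $u$ is ``the uniform distribution on $V$ conditioned on an event of probability $\ge 1-O(1/q)$.'' It is conditioned on precisely that \emph{rare} event---being a neighbor of $u$---so its total-variation distance from $\Unif(V)$ is $1-o(1)$ when $t\ge 4$, not $O(1/q)$. (Only for $t=3$ is the neighborhood of a fixed plane a $1-O(1/q^2)$ fraction of all planes; the fact is stated for general $t$ and is invoked in the main construction where $t\ge 4$.)

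More fundamentally, no bound of the form $\Pr[v\in S\mid u]\le |S|/|V|+3/q$ holding for \emph{every fixed} $u\in S$ can be true, so the per-vertex strategy cannot be repaired by constant-chasing: take $S=\{u\}\cup N(u)$, whose density is roughly $q^{6-2t}=o(1)$, yet every edge out of $u$ stays inside $S$. The fact survives only because one averages over $u\in S$ (most vertices of such an $S$ do expand), and capturing that averaging is exactly what a spectral argument provides---an eigenvalue bound together with the expander mixing lemma, as in the paper's Appendix~A, or some equivalent second-moment computation over pairs of planes in $S$. Your counting ingredients (a plane contains $q^2+q$ affine lines, a fixed line lies in $(q^{t-1}-1)/(q-1)$ planes, and the edge distribution out of $u$ is uniform on $N(u)$) are correct and would be useful inside such a global argument, but on their own they cannot yield an expander-mixing-type inequality.
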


\section{Auxiliary Techniques} \label{sec: techniques}
In this section we present the techniques that are \emph{special} to our reductions to \textsf{SVP} in the \( \ell_p \) norm. 
Some ingredients are adapted from the PCP toolbox---in their algebraic version \cite{Dinur2011}, introduced for lattices in \cite{Dinur1998, Dinur2003}.
We tailor them to the \( \ell_p \) regime.
We present these components before formally describing the reductions, 
in order to clarify their role by separating the core ideas from the complications of the full reductions.

\subsection{Super-Assignments}\label{sec:superassign}
To construct \textsf{SVP} instances, 
we encode \textsf{CSP} via \emph{super-assignments}, following the framework of \cite{Dinur2003}.

\begin{definition}[Super-Assignment]
Let $x_1,\dots,x_n$ be \textsf{CSP} variables taking values in the alphabet 
\(\Sigma\defeq
\{\alpha_1,\dots,\alpha_k\}\).
A \emph{super-assignment}
is an integer vector with an entry for each pair
$(x_i, \alpha_j) \in \set{x_1, \dots, x_n} \times \Sigma$, namely,
\[(x_i, \alpha_j) \mapsto v_{x_i,\alpha_j} \in \Z\]
\end{definition}

\begin{definition}[Natural Assignment]
Let $g\colon \{x_1, \dots, x_n\} \to \Sigma$ be an assignment to a \textsf{CSP}.
The corresponding \emph{natural assignment} is the super-assignment defined by:  
\[
v_{x_i, \alpha_j} = 
\begin{cases}  
1 &\text{if } \alpha_j = g(x_i), \\
0 &\text{otherwise},
\end{cases}
\]
this particular super-assignment is
denoted $\ang{g}$. 
\end{definition}

Super-assignments are integer vectors, so one may impose homogeneous linear constraints on them to define a lattice.
A core technique in \cite{Dinur2003} is using a low-degree test:
Consider the Plane-vs-Plane test.
For each plane $\P \in \PP(\F^t)$ and a low-degree polynomial $g \in \P_{\le d}$ (potentially assigned to that plane),
we introduce a variable $\A_{\P}[g] \in \Z$.
Presuming a \emph{global} degree-\(d\) polynomial 
$g \in \F_{\le d}[x_1, \dots, x_t]$,
the resulting natural assignment is as follows:
\[
\ang{g}[\P, f] =
\begin{cases}
1 &\text{if } f = g|_{\P}, \\
0 &\text{otherwise}.
\end{cases}
\]
In our reductions, we will use similar variables.
Using the Plane-vs-Plane test,
we will describe homogeneous linear constraints that ensure all low-norm assignments are of the form $\alpha_1 \ang{ g_1 } + \dots + \alpha_k \ang{ g_k }$, for a small $k$.
Additional constraints will encode the clauses of the original \textsf{CSP} instances.

\subsection[The Advantage of llp Norms]{The Advantage of $\ell_p$ Norms}
After presenting super-assignments and the $\PvP$ graph, we now describe how they interact and why high-index $\ell_p$ norms are essential. 

\begin{definition}
Let $\A$ be a super-assignment for the $\PvP$ graph over $\F^t$. 
For each plane $\P \in \PP$,
define its \emph{support} as the set of functions that are assigned a nonzero value:
\[
\supp{\P}_\A \coloneqq \{f \in \P_{\le d} \mid \A_\P[f] \neq 0\}.
\]
When clear from context, we omit the subscript $\A$ and write $\supp{\P}$.
\end{definition}
\noindent
For natural assignments, we have $|\supp{\P}_\A| = 1$ for every $\P \in \PP$.

The size of the support is of interest because we can characterize $\pvp$ super-assignment whose support is bounded on every plane. 
More concretely, in \Cref{sec: pvpStruct} we prove that there exists a small constant $\eps > 0$ such that the only super-assignments satisfying a certain set of linear constraints, for which $\abs{\supp{\P}} \le |\F|^\eps$ holds on all planes, are of the form
 \(\A = a_1 \cdot \ang{ g_1 } + \dots + a_k \cdot \ang{ g_k },\)
with $k \le |\F|^\eps$.
This characterization plays a key role in the soundness analysis for \Cref{thm:svp-high} and \Cref{thm:usvp-high}

\subsubsection{Rotation}
To ensure that short vectors in the lattice, namely super-assignments with small norm, must have small supports, we add rotations.
This technique was also used in~\cite{Dinur2002} for the $\ell_\infty$ norm. 

\begin{fact} 
\label{fact: norm-tradeoff}
For every $x \in \R^m$ and $p \ge 2$, we have
\[
m^{\frac{1}{2} - \frac{1}{p}} \cdot \norm{x}_p \ge \norm{x}_2.
\]
\end{fact}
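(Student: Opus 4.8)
The plan is to obtain this from a single application of H\"older's inequality. Since $p \ge 2$, the exponent $r = p/2$ lies in $[1,\infty)$ and its conjugate $r' = p/(p-2)$ satisfies $\tfrac{1}{r} + \tfrac{1}{r'} = \tfrac{2}{p} + \tfrac{p-2}{p} = 1$ (with the convention that $p=2$ gives the trivial equality $\norm{x}_2 = \norm{x}_2$, and $p=\infty$ gives $r'=1$). First I would write $\norm{x}_2^2 = \sum_{i=1}^m |x_i|^2 \cdot 1$ and apply H\"older to the sequences $(|x_i|^2)_{i=1}^m$ and $(1)_{i=1}^m$ with exponents $r$ and $r'$, which yields
\[
\sum_{i=1}^m |x_i|^2 \;\le\; \Big(\sum_{i=1}^m |x_i|^{p}\Big)^{2/p}\Big(\sum_{i=1}^m 1\Big)^{(p-2)/p} \;=\; \norm{x}_p^2 \cdot m^{(p-2)/p}.
\]
Taking square roots and using the identity $\tfrac{p-2}{2p} = \tfrac{1}{2} - \tfrac{1}{p}$ gives $\norm{x}_2 \le m^{\frac{1}{2} - \frac{1}{p}}\,\norm{x}_p$, which is exactly the claimed inequality after rearranging.

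An equivalent route, should one prefer to avoid naming H\"older explicitly, is the power-mean (generalized mean) inequality: for nonnegative reals $|x_1|,\dots,|x_m|$ the map $q \mapsto \big(\tfrac{1}{m}\sum_i |x_i|^q\big)^{1/q}$ is nondecreasing on $[1,\infty)$, so $\big(\tfrac{1}{m}\sum_i|x_i|^2\big)^{1/2} \le \big(\tfrac{1}{m}\sum_i|x_i|^p\big)^{1/p}$ for $2 \le p$; clearing the factors of $m$ produces the same bound. Either argument also covers $p = \infty$ directly, where the inequality reduces to $\norm{x}_2 \le \sqrt{m}\,\norm{x}_\infty$, immediate from $|x_i| \le \norm{x}_\infty$ for every $i$.

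There is no genuine obstacle here; the only points worth a remark are checking the degenerate endpoints ($p=2$ is equality, $p=\infty$ is elementary) and noting tightness: the all-ones vector makes both sides equal to $m^{1/2}$, so the exponent $\tfrac{1}{2}-\tfrac{1}{p}$ is best possible. Since the fact is only invoked as a one-sided comparison to force small supports, this level of detail suffices.
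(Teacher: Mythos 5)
Your proof is correct and is essentially the paper's own argument: both apply H\"older's inequality with conjugate exponents $p/2$ and $p/(p-2)$ to the sequences $(|x_i|^2)_i$ and $(1,\dots,1)$, then take square roots. The extra remarks on the power-mean formulation, the endpoints $p=2,\infty$, and tightness are fine but not needed.
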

\begin{proof}
Hölder's inequality states $|\langle x, y\rangle| \le \norm{x}_p\norm{y}_q$ for $\frac{1}{p} + \frac{1}{q} = 1$.
Applying the inequality on $(x_1^2, \dots, x_m^2)$ and $(1, \dots, 1)$ with $\frac{2}{p} + \frac{p - 2}{p} = 1$:
\[\norm{x}_2^2 = \langle \vec 1, (x_1^2, \dots, x_m^2)\rangle \le m^{\frac{p-2}{p}} 
 \cdot \ps{\sum(x_i^2)^{p/2}}^{2/p} = \ps{m^{\half - \frac{1}{p}} \norm{x}_p}^2
\]
Taking square roots gives $m^{\frac{1}{2} - \frac{1}{p}} \cdot \norm{x}_p \ge \norm{x}_2$.
\end{proof}

Let $\A$ be a super-assignment and fix a plane $\P \in \PP$. 
Rotation preserves the $\ell_2$ norm, so applying any rotation matrix $U$ gives:
\[
m^{\frac{1}{2} - \frac{1}{p}} \cdot \|U \cdot \A_\P\|_p \ge \|U \cdot \A_\P\|_2 = \|\A_\P\|_2 \ge \sqrt{|\supp{\P}|}.
\]
The key point is that we can construct a specific $U$ that minimizes the $\ell_p$ norm of natural assignments. 

\paragraph{Constructing the Rotation.}
Recall the recursive definition of the Hadamard matrix: $H_1 = (1)$, and for $n > 1$,
\[
H_n = \begin{pmatrix}
H_{n-1} & H_{n-1} \\
H_{n-1} & -H_{n-1}
\end{pmatrix}.
\]
We construct our rotation matrix from a normalized Hadamard matrix, removing some columns to accommodate general dimensions (not necessarily powers of 2). 
\begin{definition}
\label{def:spinning-matrix}
For any $n \in \N$, let $m = 2^{\lceil \log n \rceil}$.
Define the rotation matrix $U \in \R^{m \times n}$ to be a normalized submatrix of the Hadamard matrix:
\[
U_{i,j} \coloneqq \frac{1}{\sqrt{m}} H_{\log m}[i,j].
\]
\end{definition}

From now on, $U$ denotes \Cref{def:spinning-matrix} with proper dimensions. 
For each standard basis vector $e_i$, we have $\|U \cdot e_i\|_p = m^{\frac{1}{p} - \frac{1}{2}}$, attaining equality in \Cref{fact: norm-tradeoff}. 
Note that if $\A$ is a natural assignment, each $\P \in \PP$ is assigned a unit vector. 
The choice of $U$ guarantees that for such $\mathcal{A}$:
\[
m^{\frac{1}{2} - \frac{1}{p}} \cdot \|U \cdot \A_\P\|_p = \|U \cdot \A_\P\|_2 = \|\A_\P\|_2 = 1.
\]

\newcommand{\Parp}{\mathrm{Par}(\F^3)}
\newcommand{\PSAT}{\mathbf{P_{sat}}}
\newcommand{\stsub}{\mathcal{R}}

\subsection{Composition-Recursion} \label{sec: Composition-Recursion}
We use the Composition-Recursion framework \cite{Arora1998} to obtain quasi-polynomial reductions.
Specifically, an algebraic version of Composition-Recursion, with modifications tailored to our setting.
This algebraic Composition-Recursion originated in \cite{Dinur2011}, to prove low-error PCP (Probabilistically Checkable Proofs) theorems. 
\cite{Dinur2003} utilized it to prove hardness of approximation for the Closest Vector Problem (\textsf{CVP}), showing inapproximability within a ratio of $n^{\frac{c_p}{\log\log n}}$,
where $c_p$ is a constant depending only on $p$.\\
Informally, the Composition-Recursion of \cite{Dinur2011} consists of two alternating steps:
\begin{enumerate}
    \item \emph{Encode} a low-degree polynomial over $\F^t$ via its restrictions to subspaces of fixed dimension. Herein, the restrictions are all to affine planes (2-dimensional subspaces). 
    \item \emph{Embed} these subspaces into a higher-dimensional vector space (typically $\F^t$), which substantially reduces the polynomial's degree.
\end{enumerate}
This process continues until the degree reaches a threshold.

Contrary to \textsf{CVP}, for \textsf{SVP} we can enforce only homogeneous linear constraints.
This degrades the soundness under Composition-Recursion.
To overcome it,
we introduce a new step to the Composition-Recursion:

\begin{enumerate}
    \setcounter{enumi}{2}
    \item \emph{Extend} the domain of the low-degree polynomial, from $\F_q^t$ to $\F_{q^2}^t$. 
    Since $\F_q^t \subseteq \F_{q^2}^t$, the polynomial can be naturally extended to $\F_{q^2}^t$ by interpreting its coefficients over the larger field $\F_{q^2}^t$.
\end{enumerate}

Field extensions are the \emph{key new ingredient} in our Composition-Recursion.
By cautiously enlarging the base field along the Composition-Recursion, we prevent the soundness degradation that previous frameworks suffer from.
This iterative process induces a forest structure as follows:
the trees correspond to planes $\P_1 \in \PP(\F_q^t)$. 
The children of each plane correspond to planes $\P_2 \in \PP(\F_{q^2}^t) $, and so on.
For convenience, we identify each vertex by the path from its root---a tuple $(\P_1, \dots, \P_r)$
where
$\P_1 \subseteq \F_{q}^t, \P_2 \subseteq \F_{q^2}^t, \dots, \P_r \subseteq \F_{q^{2^{r-1}}}^t$. 
\FloatBarrier
\noindent
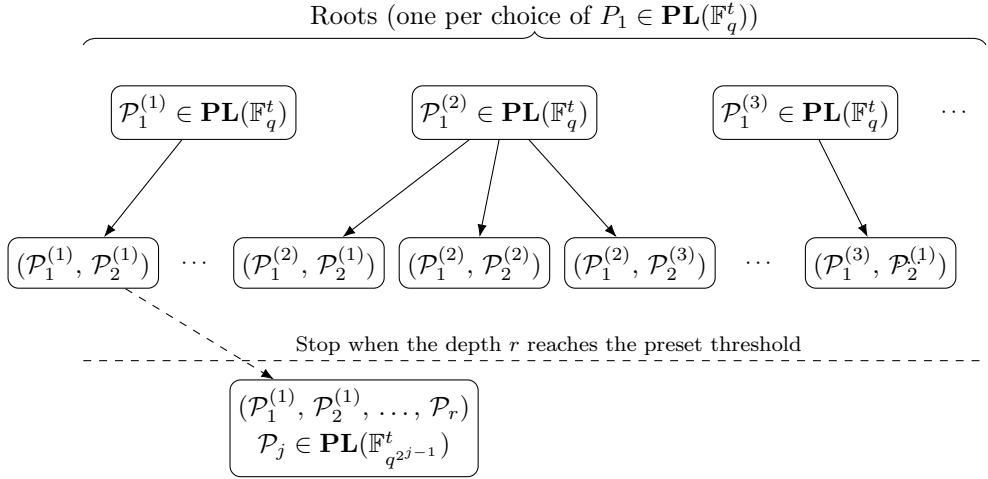
\begin{figure}[!htbp]
\centering
\begin{tikzpicture}[
  >=Latex,
  node/.style={draw, rounded corners, align=center, inner sep=3pt},
  edge/.style={-Latex},
  lab/.style={font=\footnotesize}
]
\node[node] (R1) at (-5,0) {$\P_1^{(1)}\in \PP(\F_q^t)$};
\node[node] (R2) at ( -1,0) {$\P_1^{(2)}\in \PP(\F_q^t)$};
\node[node] (R3) at (  3,0) {$\P_1^{(3)}\in \PP(\F_q^t)$};
\node[lab]  (Rdots) at (5,0) {$\cdots$};

\draw[decorate,decoration={brace,amplitude=5pt}] (-6.6,0.9) -- (5.4,0.9)
  node[midway, yshift=10pt] {Roots (one per choice of $P_1\in \PP(\F_q^t)$)};  

\node[node] (R1A) at (-6.6,-2.0) {$(\P_1^{(1)},\,\P_2^{(1)})$};
\node[lab]  (R1dots) at (-5.1,-2.0) {$\cdots$};

\node[node] (R2A) at (-3.6,-2.0) {$(\P_1^{(2)},\,\P_2^{(1)})$};
\node[node] (R2B) at (-1.4,-2.0) {$(\P_1^{(2)},\,\P_2^{(2)})$};
\node[node] (R2C) at (0.8,-2.0) {$(\P_1^{(2)},\,\P_2^{(3)})$};
\node[lab]  (R2dots) at (2.4,-2.0) {$\cdots$};

\node[node] (R3A) at (4,-2.0) {$(\P_1^{(3)},\,\P_2^{(1)})$};
\node[lab]  (R3dots) at ( 4.4,-2.0) {$\cdots$};

\foreach \S/\T in {R1/R1A, R2/R2A, R2/R2B, R2/R2C, R3/R3A}
  \draw[edge] (\S) -- (\T);

\node[node] (LeafEx) at (-3.0,-4.2) {$(\P_1^{(1)},\,\P_2^{(1)},\,\dots,\,\P_r)$\\[2pt]
$\P_j\in \PP(\F_{q^{2^{j-1}}}^t)$};
\draw[edge,dashed] (R1A) -- (LeafEx);

\draw[dashed] (-6.6,-3.3) -- (5.4,-3.3);
\node[lab,anchor=west] at (-6.2,-3.1)
  {\hspace{2.2cm} Stop when the depth $r$ reaches the preset threshold};

\end{tikzpicture}
\caption{Forest induced by Composition–Recursion. A node at depth $r$ correspond to a path $(P_1,\ldots,P_r)$ with $P_j\in \PP(\F_{q^{2^{j-1}}}^t)$. Edges are the steps in our composition (embedding, extending the field and restricting to planes). Only a few representative children are drawn; actual branching is larger.}
\label{fig:comp-rec-forest}
\end{figure}

\FloatBarrier
\noindent
The rest of this section is dedicated to surveying the mathematical definitions and facts behind embeddings and field extensions. 

\subsection{Embedding} \label{sec: embedding}
Assume arbitrary domains $\mathcal X, \mathcal Y$;
an \emph{embedding} is a structure-preserving map $E\colon \mathcal X \to \mathcal Y$.
We describe herein an embedding of $\F^k$ into $\F^{i \cdot k}$.
Let $c \in \N_{+}$, 
\[E_c\ps{(\xi_1, \dots, \xi_k)} \defeq 
(\xi_1, \xi_1^{c}, \dots, \xi_1^{c^{i-1}}, \dots, \xi_k, \dots, \xi_k^{c^{i-1}}).\]
This embedding drastically reduces the individual degree: 
\begin{fact} \label{thm: injective embedding}
    Let $f\colon \F^k \to \F$ be a polynomial of individual degree $\ideg(f) < c^i$. There exists a single polynomial $g \colon \F^{i \cdot k} \to \F$, of $\ideg(g) < c$, so that 
    \[\forall x\in\F^k\colon f(x) = g(E_c(x)).\]
\end{fact}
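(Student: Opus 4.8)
The plan is to argue monomial by monomial, using the base-$c$ representation of exponents. First I would reduce to the case where $f$ is a single monomial: both the substitution $f\mapsto f\circ E_c$ and the individual-degree bound behave additively under $\F$-linear combinations, so it suffices to construct the image of each monomial of $f$ separately and then sum. Note also that the $g$ we build will depend only on the coefficients of $f$ and not on any particular point $x$, which is the content of the word ``single'' in the statement.

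Now fix a monomial $\xi_1^{e_1}\cdots\xi_k^{e_k}$ of $f$, so that $0\le e_j<c^i$ for every $j$. I would write each exponent in base $c$, namely $e_j=\sum_{l=0}^{i-1}a_{j,l}\,c^l$ with digits $0\le a_{j,l}\le c-1$; this is possible precisely because $e_j<c^i$. Label the $i\cdot k$ coordinates of $\F^{i\cdot k}$ by pairs $(j,l)$ with $1\le j\le k$ and $0\le l\le i-1$, matched to the definition of $E_c$, under which the $(j,l)$-coordinate of $E_c(\xi_1,\dots,\xi_k)$ is $\xi_j^{c^l}$. Define $g$ by replacing each monomial $\xi_1^{e_1}\cdots\xi_k^{e_k}$ of $f$ with $\prod_{j=1}^{k}\prod_{l=0}^{i-1} y_{(j,l)}^{a_{j,l}}$, and extending by linearity.

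Two verifications remain, both immediate. For the individual degree: in $g$ the variable $y_{(j,l)}$ occurs only with exponent $a_{j,l}\le c-1$, so $\ideg(g)<c$. For the identity: substituting $y_{(j,l)}=\xi_j^{c^l}$ collapses each block via $\prod_{l=0}^{i-1}\bigl(\xi_j^{c^l}\bigr)^{a_{j,l}}=\xi_j^{\sum_l a_{j,l}c^l}=\xi_j^{e_j}$, so the image of every monomial agrees with the original monomial evaluated at $E_c(x)$; summing over the monomials of $f$ gives $g(E_c(x))=f(x)$ for all $x\in\F^k$.

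I do not expect a genuine obstacle here; the one point needing care is to keep the distinction between formal polynomials and the functions they induce over a finite field — the individual-degree hypothesis and conclusion refer to the formal representation, and the substitution $y_{(j,l)}\mapsto\xi_j^{c^l}$ is a formal one, with the claimed functional identity following by evaluation. (One may also observe in passing that $E_c$ is injective, since $\xi_j$ itself appears as the $(j,0)$-coordinate of its image, which is what justifies calling $E_c$ an embedding, though injectivity plays no role in the degree reduction itself.)
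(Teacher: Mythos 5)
Your proposal is correct and follows essentially the same route as the paper: expand each exponent in base $c$, send the monomial $\xi_1^{e_1}\cdots\xi_k^{e_k}$ to $\prod_{j,l} y_{(j,l)}^{a_{j,l}}$, extend linearly, and verify the degree bound and the identity $g(E_c(x))=f(x)$ by collapsing $\prod_l(\xi_j^{c^l})^{a_{j,l}}=\xi_j^{e_j}$. Your write-up is in fact a bit more careful than the paper's (which has minor typos in the exponent range and in the displayed image of the monomial), and your reading of ``single'' as one polynomial working for all $x$ is a reasonable, harmless interpretation.
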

\begin{proof}
Let $x_1^{t_1} \cdot \dots \cdot x_k^{t_k}$ be a monomial with $0 \le t_1, \dots, t_k < c$.
For each $t_j$ there exist integers $0 \le a^j_1, \dots, a^j_{i} < c$ such that $t_j = a^j_1 + a^j_2 \cdot c + \dots + a^j_{i} \cdot c^{i-1}$.
We map the monomial to one from $\F[x_{1, 1}, \dots, x_{k, i}]$,
\[\phi(x_1^{t_1} \cdot \dots \cdot x_k^{t_k}) \defeq \prod_{m \in [i], n \in [k]} a_m^n x_{m, n}.\]
Extending $\phi$ linearly to polynomials yields an isomorphism between the polynomials of $\ideg < c^i$ in $k$ variables and those of $\ideg < c$ in $i \cdot k$ variables. 
$f((\xi_1, \dots, \xi_k)) = \phi(f)(E_c((\xi_1, \dots, \xi_k))$ so $\phi(f)$ is the unique polynomial of $\ideg < c$ satisfying the requirement.
\end{proof}

To embed an affine plane $\P \in \PP$, recall $\P = \vec \alpha + \Span(\vec x, \vec y)$, 
which defines a natural isomorphism between the polynomials over $\F^2$ and over $\P$. 
Abusing notation, we write $E(x)$ when the plane and $c$ are clear from context. 

\subsection{Field Extensions}
Let $x_1^{i_1} \cdot {\dots} \cdot x_t^{i_t} \colon \F_q^t \to \F_q$ be a monomial. 
It naturally extends to a monomial on $\F_{q^2}^t$. 
Similarly, a polynomial extends to $\F_{q^2}^t $ by extending each monomial. 
The soundness analysis requires ``reversing'' that extension. 

\begin{claim} \label{claim: reverse extension}
Let
$f \colon \F_{q^2}^t \to \F_{q^2}$ be a low-degree polynomial with $\deg(f) \le d$.
If 
\[\Prob{x_1, \dots, x_t  \in \F_q}{f(x_1, \dots, x_t) \in \F_q} > \frac{d}{q}\]
then all the coefficients of $f$ are members of $\F_q$. 
That is, $f_{|\F_q^t} \colon \F^t_q \to \F_q$ and has the same (individual and total) degree.  
\end{claim}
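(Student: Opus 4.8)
The plan is to exploit the Frobenius automorphism $\sigma\colon z\mapsto z^q$ of $\F_{q^2}$ over $\F_q$, whose fixed set is \emph{exactly} $\F_q$; that is, for $z\in\F_{q^2}$ we have $z\in\F_q \iff z^q = z \iff \sigma(z)=z$. Writing $f=\sum_{\vec i} c_{\vec i}\,x^{\vec i}$ with $c_{\vec i}\in\F_{q^2}$, define the \emph{conjugate polynomial} $f^\sigma \defeq \sum_{\vec i} \sigma(c_{\vec i})\,x^{\vec i} = \sum_{\vec i} c_{\vec i}^{\,q}\,x^{\vec i}$. Since $\sigma$ is a bijection of $\F_{q^2}$ fixing $0$, the monomial support of $f^\sigma$ equals that of $f$; in particular $\deg(f^\sigma)\le d$ (with the same individual degree as $f$).

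The key observation is that $f^\sigma$ agrees with $\sigma\circ f$ on the subcube $\F_q^t$: for $x\in\F_q^t$ every monomial value $x^{\vec i}$ lies in $\F_q$ and is therefore $\sigma$-fixed, so $f^\sigma(x) = \sum_{\vec i} \sigma(c_{\vec i})\,x^{\vec i} = \sum_{\vec i}\sigma(c_{\vec i})\,\sigma(x^{\vec i}) = \sigma\!\big(\sum_{\vec i} c_{\vec i}x^{\vec i}\big) = \sigma(f(x))$, using that $\sigma$ is a field homomorphism. Hence for $x\in\F_q^t$,
\[
  f(x)\in\F_q \iff \sigma(f(x)) = f(x) \iff f^\sigma(x) = f(x) \iff (f - f^\sigma)(x) = 0 .
\]
So the hypothesis says exactly that the polynomial $h \defeq f - f^\sigma$, which has total degree $\le d$ (being a difference of two degree-$\le d$ polynomials), vanishes on a $> d/q$ fraction of $\F_q^t$.

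Now I would apply Schwartz--Zippel (\Cref{lem: schwartz-zippel}) over the ambient field $\F_{q^2}$ with the evaluation set $S=\F_q$, $|S|=q$: if $h$ were a nonzero polynomial it would vanish on at most a $d/q$ fraction of $S^t$, contradicting the hypothesis. Therefore $h$ is the zero polynomial, i.e.\ $f = f^\sigma$ \emph{coefficient by coefficient}, which means $\sigma(c_{\vec i}) = c_{\vec i}$ and hence $c_{\vec i}\in\F_q$ for every $\vec i$. Consequently $f$ is a polynomial with all coefficients in $\F_q$; restricting its domain to $\F_q^t$ gives a function $f_{|\F_q^t}\colon\F_q^t\to\F_q$ whose defining expression has the identical monomial support, hence the same individual and total degree. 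The only point requiring care is precisely the last conceptual step, distinguishing the zero \emph{function} from the zero \emph{polynomial}: a naive counting argument over $\F_q^t$ would not suffice, and it is exactly the degree bound on $h$ together with Schwartz--Zippel applied over the \emph{sub}field $\F_q\subsetneq\F_{q^2}$ that lets us conclude the coefficients—not merely the values—are $\sigma$-fixed. Beyond that, the argument is routine.
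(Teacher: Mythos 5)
Your proof is correct and takes essentially the same route as the paper: the paper writes $f = g + \alpha h$ with $g,h$ having $\F_q$-coefficients and applies Schwartz--Zippel to $h$, while your auxiliary polynomial $f - f^\sigma$ equals $(\alpha - \alpha^{q})\,h$, a nonzero scalar multiple of the same object. In both cases the heart of the argument is identical --- membership of $f(x)$ in $\F_q$ for $x \in \F_q^t$ is recast as the vanishing of a degree-$\le d$ polynomial, and Schwartz--Zippel over the evaluation set $S = \F_q$ forces that polynomial to be identically zero, hence all coefficients of $f$ lie in $\F_q$.
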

\begin{proof}
Fix $\alpha \in \F_{q^2} \setminus \F_q$.
It is not hard to verify that:
\[\F_{q^2} = \sett{a_1 + a_2 \cdot \alpha}{a_1, a_2 \in \F_q}\]
Every monomial is of the form 
$c \cdot x_1^{i_1} \cdot \dots \cdot x_t^{i_t}$, 
for some $c \in \F_{q^2}$,
and there exist $a_1, a_2 \in \F_q$ such that $c = a_1 + a_2 \alpha$,
so we write $f = g + \alpha h$, where the coefficients in every monomial of $h, g$ are elements of $\F_q$.

Observe that for every $x\in \F_q^t$, the values $h(x), g(x)$ are inside $\F_q$. 
Thus, $f(x) \in \F_q \iff h(x) = 0$. 
If $f$ satisfies  
$\Prob{x \in \F_q^t}{f(x) \in \F_q} > \frac{d}{q}$,
then 
$\Prob{x \in \F_q^t}{h(x) = 0} > \frac{d}{q}$
and so the Schwartz-Zippel \Cref{lem: schwartz-zippel} implies $h = 0$.
All of $f$'s monomials have coefficients from $\F_q$ and the claim follows. 
\end{proof}

\section{SUBEXP Hardness} \label{sec:toy reduction}
This section presents a sub-exponential deterministic reduction to
\(\mathsf{GapSVP}\) and \textsf{uSVP}, achieving a constant inapproximability factor---proving
\Cref{thm:svp-low} and \Cref{thm:usvp-low}.

\subsection{Parameters and notation}
Throughout this section, we fix an arbitrary norm index \(p>2\).
Our starting point is a \textsf{3SAT} formula
\(\Phi = \varphi_1\land\dots\land\varphi_m\) over variables
\(x_1,\dots,x_n\); the problem is deciding its satisfiability.
The reduction outputs an instance of
$({\sqrt{2}-o(1)})$--\(\textsf{GapSVP}_{p}\).

\paragraph{Basic parameters.}
\begin{itemize}
  \item Let \(\F\) be a finite field with
        \(|\F| \ge n^{\frac{3}{1-2/p}}\).
        (Any sufficiently large field of size
        \(n^{\Theta(1)}\) works.)
  \item Choose an arbitrary subset
        \(\HH\subset\F\) of size \(|\HH|=\lceil n^{1/3}\rceil\).
        Since \(|\HH|^3\ge n\),
        we define an injective mapping from each variable \(x_i\) to a unique
        point \(y_i\in\HH^3\); fix any such mapping $x_i \to y_i$.
  \item Set \(d=|\HH|-1\).
        a \textsf{3SAT} witness is a function
        \(\{y_1,\dots,y_n\}\subseteq\HH^3 \to \set{0, 1}\).
        By \Cref{fact: lde}, any function $\HH^3 \to \F$ admits a unique extension to
        \(\F^3\to\F\) of individual degree at most \(d\).
\end{itemize}
\paragraph{Collections of planes.}
To achieve a norm index $p$ close to $2$ we will consider a sub-collection of all affine planes in $\mathbb{F}^3$. Namely, we take:

\begin{enumerate}
  \item \emph{Parallel planes.}
        Let \(\Parp \subseteq \PP(\F^3)\) be the family of affine
        planes parallel to the coordinate axes.
  \item \emph{Clause planes.}
        For every clause involving variables \(x_i,x_j,x_k\) (regardless of their sign in the clause),
        choose an arbitrary affine plane
        \(\mathcal P_{y_i,y_j,y_k}\in\PP(\F^3)\) that contains the
        three points \(y_i,y_j,y_k\).
        Note that unless the points are collinear, this plane is unique.
        Denote
        \[
          \PSAT
          \defeq
          \bigl\{
            \mathcal P_{y_i,y_j,y_k}
            \,\bigm|\,
            \text{the clause mentions }x_i,x_j,x_k
          \bigr\}.
        \]
\end{enumerate}

For convenience, set
\(
  \mathcal R\defeq
  \PSAT \cup \Parp,
\)
a union that will recur frequently below.

\subsection{The Intermediate Lattice}
As a first step in the reduction we construct an intermediate lattice $\lat[M_I]$, 
by describing a system of homogeneous linear equations. 
By \Cref{fact:homSol}, the space of integer solutions of such a system spans a lattice $\lat[M_I]$. 
The system includes a variable $\A_\P[f]$ for every plane $\P \in \stsub$ and every low-degree polynomial $f\in \P_{\le 3d}$.
Every table $\A$ must satisfy three types of linear constraints: 
\paragraph{Simple.} 
We require that the sum of the super-assignment on every plane $\P\in\stsub$ is the same.
Equivalently, there exists a fixed global constant $\kappa\in\Z$, such that
\begin{equation*}
\sum_{f\in {\P}_{\le 3d}} A_\P[f] = \kappa \quad\text{for all } \mathcal P \in \mathcal R.
\end{equation*}
To do so, fix any plane $\P_1\in\stsub$, and for every other plane 
$\P_2 \in\stsub$ add the homogeneous equation
\begin{equation} \label{req1}
\sum_{f\in {\P_1}_{\le 3d}} A_{\P_1}[f] = \sum_{f\in {\P_2}_{\le 3d}} A_{\P_2}[f].
\end{equation}
\paragraph{Testing consistency.}
In addition, we add constraints that enforce consistency between adjacent planes.
For every two planes $\P_1, \P_2 \in\stsub$ with a nonempty intersection, and a point $x \in \P_1 \cap \P_2$,  
add the homogeneous equations 
\begin{equation} \label{req2}
\forall a \in \F \colon \sum_{f(x) = a} \A_{\P_1}[f] = \sum_{f(x) = a} \A_{\P_2}[f].
\end{equation}

\paragraph{3SAT Constraints.}
To enforce clause satisfaction, we introduce constraints on the variables associated with the planes in $\PSAT$.  
For each clause $\varphi_i$ over variables $x_{\alpha_i}, x_{\beta_i}, x_{\gamma_i}$, consider the associated plane $\P_{y_{\alpha_i}, y_{\beta_i}, y_{\gamma_i}} \in \PSAT$.  
Let $f \colon \P_{y_{\alpha_i}, y_{\beta_i}, y_{\gamma_i}} \to \F$ be a polynomial of total degree at most $3d$.  
We eliminate any $f$ that either does not represent a boolean assignment, or fails to satisfy the clause $\varphi_i$.  
This is done by imposing the constraint $\A_{\P}[f] = 0$ whenever:

\begin{enumerate}
    \item At least one of the values $f(y_{\alpha_i}), f(y_{\beta_i}), f(y_{\gamma_i})$ is not in $\{0,1\}$, or
    \item The assignment $(x_{\alpha_i} = f(y_{\alpha_i}),\; x_{\beta_i} = f(y_{\beta_i}),\; x_{\gamma_i} = f(y_{\gamma_i}))$ does not satisfy the clause $\varphi_i$.
\end{enumerate}

\subsection{The \textsf{GapSVP} instance}
Consider the intermediate lattice $\lat[M_I]$,
whose vectors correspond to tables $\A$ satisfying all constraints from the previous subsection. 
To amplify soundness, we multiply by a unitary matrix $\Tilde{U}$ and define our final lattice as
\[M_F\defeq {\left(|\stsub|\right)}^{-\frac{1}{p}}\cdot \Tilde{U} \cdot M_I.\]
The output of this reduction is the \textsf{GapSVP} instance $(M_F, 1)$. 

The matrix $\Tilde{U}$ is constructed by placing copies of the rotation matrix $U$ (from \Cref{def:spinning-matrix}) along the diagonal and normalizing.
Namely, 
if $m\defeq 2^{\ceil{\log |\P_{\le 3d}|}}$ is the number of rows in each copy of $U$, the matrix $\Tilde{U}$ is defined by
\[ \Tilde{U} \defeq {m^{\half - \frac{1}{p}}} \cdot \text{Diag}(U, \dots, U).\]
In words, we apply $U$ to each $\A_\P$ and rescale by $m^{\half - \frac{1}{p}}$. 

\paragraph{Running time.}
The number of planes is
\(|\stsub| = n^{\Theta(1)}\),
and each plane contributes
\(|\P_{\le 3d}| = |\F|^{\binom{3d+2}{2}}
   = 2^{O\!\bigl(n^{2/3}\log n\bigr)}\)
variables.
Hence $M_I$
is of size $2^{O\!\bigl(n^{2/3}\log n\bigr)}$.
Multiplying by the $\Tilde{U}$ increases the dimensions
by at most a factor of 2, which does not change the asymptotics.
Since the running time is polynomial in the lattice dimension, the
reduction's time complexity is
\(2^{O\!\bigl(n^{2/3}\log n\bigr)}\).

\paragraph{Completeness.}
Let $\sigma \colon \set{x_1, \dots, x_n}\rightarrow \set{0,1}$ be a satisfying assignment.
Our global polynomial is the low-degree extension of $x_u \rightarrow \sigma(u)$, 
guaranteed to exist by \Cref{fact: lde}. 
We denote it $g\colon \F^3 \rightarrow \F$ (having $\ideg(g) \le d$). \\

Consider the natural assignment $\ang{g}$.
Namely, for $\P \in\stsub$ and $f \in \P_{\le 3d}$, we assign $\ang{g}[\P, f] = 1$ if $g_{|\P} = f$ and otherwise $\ang{g}[\P, f] = 0$. 
It can be seen that $\ang{g}$ satisfies the \emph{simple}, \emph{consistency}, and \emph{\textsf{3SAT}} constraints. 
Thus, it is a vector in the intermediate lattice. 
For every $i\le k$, it holds 
$\norm{\frac{m^{\half - \frac{1}{p}}}{{|\stsub|}^{\frac{1}{p}}} \cdot U \cdot e_i}_p = \frac{1}{{|\stsub|}^{\frac{1}{p}}}$,
Hence, the vector $\frac{m^{\half - \frac{1}{p}}}{{|\stsub|}^{\frac{1}{p}}} \text{Diag}(U, \dots, U) \cdot \ang{g}$ has an $\ell_{p}$ norm of exactly $1$. 

\subsection{Soundness Analysis}
Now, we prove that if $\Phi$ is an unsatisfiable \textsf{3SAT} instance, then $\lambda_1^{(p)} \ge \sqrt{2} - \eta$ for every $\eta > 0$.
Let $\eta > 0$ be an arbitrary constant, and assume, by way of contradiction, that 
\[\lambda_1^{p}\left[\lat\left[\frac{m^{\half - \frac{1}{p}}}{|\stsub|^\frac{1}{p}} \text{Diag}(U, \dots, U) \cdot M\right]\right] < \sqrt{2} - \eta.\]
and fix a short nonzero vector $\frac{m^{\half - \frac{1}{p}}}{|\stsub|^\frac{1}{p}} \text{Diag}(U, \dots, U) \cdot \A$.
Denote $S\subseteq \stsub$, the subset of ``bad planes'':
\[S \defeq \set{\P \in \stsub \mid \forall i: \A_\P \neq \pm {e}_i \text{ and } \A_\P \neq \underline{0}}.\]

The soundness argument proceeds in several steps. 
We begin by establishing that $S$, the set
of ``bad planes'', is nonempty.
Specifically, we show that if no plane is bad, then the underlying \textsf{3SAT} formula is satisfiable. 
Next, we prove that $S$ contains almost all the planes. 
We apply the Schwartz–Zippel \Cref{lem: schwartz-zippel} to show that nearly all neighbors of bad planes are themselves bad. 
Using the structure of the underlying graph, this property propagates, implying that the fraction of bad planes is $1 - o(1)$. 
Finally, we observe that planes in $S$ have an increased norm. 
This leads to a contradiction, completing the argument.

\begin{lemma}
    $S$ is nonempty. 
\end{lemma}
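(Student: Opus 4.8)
The plan is to argue by contradiction. Assume $S=\emptyset$, so that on \emph{every} plane $\P\in\stsub$ the table is ``good'': either $\A_\P=\underline{0}$ or $\A_\P=\pm e_f$ for a single polynomial $f\in\P_{\le 3d}$. From such an $\A$ I will extract a Boolean assignment satisfying $\Phi$, contradicting its unsatisfiability. The first step is pure bookkeeping with the \emph{simple} constraints~\eqref{req1}: they force the coordinate-sum $\sum_{f}\A_\P[f]$ to equal a plane-independent integer $\kappa$, and goodness gives $\kappa\in\{-1,0,1\}$. If $\kappa=0$ then every $\A_\P$ must be $\underline{0}$ (a nonzero $\pm e_f$ has coordinate-sum $\pm1$), so $\A=\underline{0}$ and the chosen lattice vector vanishes --- a contradiction. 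Replacing $\A$ by $-\A$ if needed (still a nonzero lattice vector, still good on every plane), I may assume $\kappa=1$; then each plane $\P$ carries a \emph{unique} polynomial $f_\P\in\P_{\le 3d}$ with $\A_\P=e_{f_\P}$.

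Next I would glue the local polynomials $\{f_\P\}_{\P\in\stsub}$ using the \emph{consistency} constraints~\eqref{req2}. For intersecting planes $\P_1,\P_2$ and a point $x\in\P_1\cap\P_2$, substituting $\A_{\P_i}=e_{f_{\P_i}}$ turns both sides of~\eqref{req2} into the indicators $[f_{\P_1}(x)=a]$ and $[f_{\P_2}(x)=a]$; equality over all $a\in\F$ forces $f_{\P_1}(x)=f_{\P_2}(x)$. I do not need a genuine global low-degree polynomial --- only well-defined values at the points encoding variables: for each variable $x_u$, fix any plane $\P_u\in\Parp$ with $y_u\in\P_u$ and set $g(y_u)\defeq f_{\P_u}(y_u)$. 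Now consider a clause $\varphi_i$ over $x_\alpha,x_\beta,x_\gamma$ with clause plane $\P_i=\mathcal P_{y_\alpha,y_\beta,y_\gamma}\in\PSAT$. For each of the three points $y_\alpha,y_\beta,y_\gamma$, both $\P_i$ and the $\Parp$-plane chosen for that variable contain the point, so they intersect and~\eqref{req2} applied there forces $f_{\P_i}$ to agree with $g$ at that point. Since $\A_{\P_i}[f_{\P_i}]=1\ne0$, the \textsf{3SAT} constraints guarantee that $f_{\P_i}$ takes $\{0,1\}$-values at $y_\alpha,y_\beta,y_\gamma$ and that the induced partial assignment satisfies $\varphi_i$; hence so do the values $g(y_\alpha),g(y_\beta),g(y_\gamma)$. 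Consequently $\sigma(x_u)\defeq g(y_u)$ on variables appearing in some clause (and arbitrary otherwise) is a Boolean assignment satisfying every clause of $\Phi$ --- the desired contradiction, so $S\ne\emptyset$.

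The only point I expect to require care is verifying that the consistency graph is rich enough for the two gluing moves: that~\eqref{req2} is imposed at \emph{every} point of \emph{every} nonempty pairwise intersection of planes of $\stsub=\PSAT\cup\Parp$, so that $g$ is unambiguous and each clause plane is genuinely tied, at each of its three distinguished points, to a plane of $\Parp$ through that point. Everything else is routine definition-chasing. Note in particular that shortness of the vector plays no role in this lemma (only that it is nonzero) --- it is reserved for the later steps showing that $S$ must in fact contain almost all planes.
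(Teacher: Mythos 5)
Your proof is correct and follows essentially the same route as the paper: use the simple constraints to pin down a common coordinate-sum $\kappa$, reduce to $\kappa=1$ by negating $\A$ (with $\kappa=0$ forcing $\A=\underline{0}$, which covers the paper's case $\abs{\kappa}\neq 1$ in contrapositive form), and then extract a satisfying assignment from the unit-vector planes via the consistency constraints~\eqref{req2} and the \textsf{3SAT} constraints, contradicting unsatisfiability. The extra care you flag is indeed satisfied by the construction, since \eqref{req2} is imposed at every point of every nonempty intersection of planes in $\stsub=\PSAT\cup\Parp$.
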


\begin{proof}
Recall that $\A$ satisfies the ``simple'' constraints \eqref{req1}, and therefore, there is a global constant $\kappa\in\Z$ such that all the planes satisfy $\sum_{g \in \P_{\le3d}} \A_\P[g] = \kappa$. 
Now, split into cases according to the value of $\kappa$:
\begin{itemize}
    \item If $|\kappa| \neq 1$, $S$ contains all  planes with nonzero assignment.
    Because $\A \neq \vec 0$, for at least one $\P \in \stsub$, $\A_\P \neq \vec 0$ implying $\P \in S$. 
    \item If $\abs{\kappa} = 1$, we may consider $-\A$ instead, so w.l.o.g, $\kappa = 1$.
    Assume, by way  of contradiction, that $S$ is empty and
    let us construct a satisfying assignment to the \textsf{3SAT} formula as follows: 
    \begin{enumerate}
        \item For a variable $x_i$, choose an arbitrary plane that contains its corresponding point $\P \ni y_i$. 
        \item $\A_\P$ is a unit vector (since $S=\emptyset$); thus $\A_\P[g] = 1$ for a single $g$. We set $\sigma(x_i) \defeq g(y_i)$. 
    \end{enumerate}
    The \emph{consistency constraints} \eqref{req2} imply that the assignment does not depend on the selection of $\P$. 
    From the \emph{\textsf{3SAT}'s constraints}, the assignment satisfies the \textsf{3SAT} formula. \qedhere
\end{itemize}
\end{proof}

We now present the main argument
showing that $S$ contains $1-o(1)$ of the planes.
The intuition is that for each $\P \in S$, either $\norm{\A_\P}_p$ is large, or the anomalies propagate to neighboring planes via the
\emph{consistency constraints} \eqref{req2}.

Fix an arbitrary plane $\P \in S$. 
Since $\A$ is a short vector, we have $\|\frac{m^{\half - \frac{1}{p}}}{|\stsub|^\frac{1}{p}} \cdot U \cdot \A_\P\|_p \le \sqrt{2}$. 
\Cref{fact: norm-tradeoff} bounds the ratio between $\ell_2$ and $\ell_p$, resulting in 
$\|\frac{1}{|\stsub|^\frac{1}{p}} \cdot U \cdot \A_\P\|_2 \le \sqrt{2}$. 
Rearranging gives $\norm{\A_\P}_2\le\sqrt{2}|\stsub|^\frac{1}{p}$.
Since $\A_\P$ is an integer vector, the number of nonzero coordinates, namely $|\supp{\P}|$, 
is at most:
\[|\supp{\P}| \le \norm{\A_\P}_2^2 \le 2|\stsub|^\frac{2}{p}.\] 
Fix $\P \in S$ and let $f\in\supp{\P}$. 
For any distinct $g\in \supp{\P}$, the functions $f$ and $g$ agree on at most $3d \cdot |\F|$ points (From the Schwartz–Zippel \Cref{lem: schwartz-zippel}, since they are different and $\deg{f}, \deg{g} \le 3d$). 
Therefore, $f$ disagrees with all of $\supp{\P} \setminus \set{f}$ on a fraction of at least $\frac{\abs{\F} - 3d\cdot\abs{\supp{\P}}}{\abs{\F}}$ of the points in $\P$.
Note that 
\[|\stsub| \le |\Parp| + |\PSAT| = 3\abs{\F} + n^3 \le 4\abs{\F}.\]
Using the earlier bound on $|\supp{\P}|$, 
$f$ disagrees with all of $\supp{\P} \cap \set{f}$ on a fraction of at least
\[1 - \frac{6d|\stsub|^\frac{2}{p}}{\abs{\F}} \le 1 - \frac{24\ps{\ceil{n^{1/3}} - 1}\abs{\F}^\frac{2}{p}}{\abs{\F}} = 1 - 24\ps{\ceil{n^{1/3}} - 1}\ps{n^{\frac{3}{1-2/p}}}^{p/2-1} = 1 - o(1).\]
Suppose $|\supp{\P}| > 1$ and fix distinct $f, g \in \supp{\P}$. 
For a $1-o(1)$ fraction of the points $x\in\P$, $f$ differs from every function in $\supp{\P}\setminus \set{f}$,
and $g$ differs from every function in $\supp{\P}\setminus \set{g}$.
For every such $x\in \P$,
\[\sum_{h \in \P_{\le 3d}.h(x) = f(x)} \A_{\P}[h] = \A_{\P}[f] \neq 0\; \text{ and } \sum_{h \in \P_{\le 3d}.h(x) = g(x)} \A_{\P}[h] = \A_{\P}[g] \neq 0.\]
By the \emph{consistency constraints} \eqref{req2}, any neighboring plane $\P'\in\stsub$ containing $x$ has to satisfy the same equations.
Hence, any $\P'$ containing $x$ is also a ``bad plane''.

If $|\supp{\P}|=1$,
then $\A_\P$ has a single nonzero entry, with value different than $\pm 1$ (since $\P \in S$). 
The \emph{consistency constraints} \eqref{req2} immediately imply that for every plane $\P'\in\stsub$ intersecting $\P$, it must also hold that $\P'\in S$.\\

We established that for every plane $\P\in S$, a $1-o(1)$ fraction of the points $x\in\P$ are contained only in planes from $S$. 
Now, we show that $S$ contains almost all of the planes.

First, observe that there exists $\P \in S \cap \Parp$.
Recall: 
\[\Parp = \bigcup_{a\in\F} \Big\{ \sett{(a, y, z)}{y,z\in \F}, \sett{(x, a, z)}{x,z\in \F}, \sett{(x, y, a)}{x,y\in \F}\Big\}.\]
The induced $\pvp$ graph on $\Parp$ contains three independent sets of sizes $\abs{\F}$ (one for each axis), with all possible edges between different sets present. 
Since each ``bad plane'' has a  $1-o(1)$ fraction of its neighbors also in $S$, it follows that $(1-o(1))|\Parp|$ planes are in $S$.\\

Therefore $\abs{S} = (1 - o(1))\abs{\Parp} = (1-o(1))\abs{\stsub}$. 
By applying \Cref{fact: norm-tradeoff} again, for every $\P \in S$,
\[\norm{\frac{m^{1/2 - 1/p}}{|\stsub|^{\frac{1}{p}}} \cdot U \cdot \A_\P}_p \ge \frac{1}{|\stsub|^{\frac{1}{p}}} \cdot \norm{\A_\P}_2 \ge \frac{\sqrt{2}}{|\stsub|^{\frac{1}{p}}}.\]
The last inequality holds because $\P\in S$, so it is not a unit vector or $\vec 0$.
Since $S$ has a fractional size of $1 - o(1)$, we reach a contradiction.

\subsection{Unique-SVP} \label{sec: Unique SVP}

The same reduction also establishes \Cref{thm:usvp-low}.  
To this end, we start our reduction from $\textsf{Unambiguous-3SAT}$ instances.
Additionally, we assume the mapping $x_i \rightarrow y_i \in \HH^3$ is bijective.  
If $n$ is not a perfect cube, we may pad the formula with $o(n)$ dummy variables (constrained to be $0$).

To enforce the uniqueness of the shortest vector, we introduce additional linear constraints.
For every parallel plane $\P \in \Parp$ and function $f \in \P_{\le 3d}$, we enforce
\[
\text{If } \ideg(f) > d, \quad \text{then } \A_\P[f] = 0.
\]

We imposed restrictions on the reduction, so the soundness guarantee stays intact.
The completeness is still easily verifiable, and it remains to prove uniqueness, i.e.,
$\lambda_2^{(p)} \ge \sqrt{2} - o(1)$. 
To do so, we show that all short vectors are of the form $\frac{m^{\half - \frac{1}{p}}}{{|\stsub|}^{\frac{1}{p}}} \text{Diag}(U, \dots, U) \cdot \ang{g}$, and a single $g$ exists for each satisfying assignment.
For that purpose, fix any short vector.

The soundness analysis states that short vectors have no ``bad'' planes.  
Consequently, for every plane $\P \in \stsub$, $\A_\P$ is a unit vector---there is a unique function $f_{\P} \in \supp{\P}$.
Define a global function $G \colon \F^3 \to \F$ by selecting, for each $x\in\F^3$, a plane $\P \ni x$, and setting
\[
G(x) \defeq f_\P(x).
\]
The \emph{consistency constraints} \eqref{req2} guarantee that $G$ is well-defined (, independent of the choice of $\P$).
By definition, $\A = \langle G \rangle$ is a natural assignment.
The induced assignment $x_i \rightarrow G(y_i)$ satisfies the \textsf{3SAT} formula, as the \emph{\textsf{3SAT} constraints} are enforced.

Since we started from an instance of \textsf{Unambiguous-3SAT}, the restriction $G_{|\HH^3}$ is uniquely determined. 
\Cref{fact: lde} states that there is a unique low-degree extension of $x\in \HH^3 \to G(x)$ on $\F^3$.
It remains to show that $\ideg(G) \le d$.
The additional constraints enforce that the restriction of $G$ to each (affine) parallel plane $\P \in \Parp$ has individual degree at most $d$,  
and it is well-known that if all such restrictions have individual degree at most $d$, then $\ideg(G) \le d$, completing the proof.

\section{The construction} \label{sec: The construction}
We present a deterministic reduction from \textsf{3COL} to \GapSVP\ in \( \ell_p \).
Formally, we prove: 
\begin{theorem}[Main; reduction] \label{thm:restated main}
Let \(G=(V,E)\) be a \textsf{3COL} instance with \(n\defeq\abs{V}\).
For every even integer $t \ge 4$, prime power \(q=q(n)\ge n\), and \(p\ge p_t>2\),
there is a deterministic reduction mapping \(G\) to an instance of
\(\GapSVP_{\gamma}^{\,p}\) on a lattice of dimension
\(
n' = q^{O(\log^{1/\log(\frac{t}{2})}n)},
\)
with running time \(q^{O(\log^{1/\log(\frac{t}{2})}n)}\) and gap ratio
\(\gamma \;=\; q^{1/p_t}\).
\end{theorem}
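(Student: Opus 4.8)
The plan is to run the blueprint of \Cref{sec:toy reduction} on top of the algebraic Composition--Recursion of \Cref{sec: Composition-Recursion}, so that the single layer of planes is replaced by the forest of \Cref{fig:comp-rec-forest} and the dimension drops from $2^{n^{\Theta(1/t)}}$ to $q^{O(\log^{1/\log(t/2)}n)}$. Concretely: fix $\HH\subseteq\F_q$ with $|\HH|=\lceil n^{1/t}\rceil$, inject $V$ into $\HH^t$, and note that by \Cref{fact: lde} a proper $3$-coloring $c\colon V\to\{0,1,2\}$ extends uniquely to a polynomial over $\F_q^t$ of individual degree $d_0\le|\HH|-1$. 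Build the Composition--Recursion forest whose root planes are $\PP(\F_q^t)$ and whose level-$j$ nodes are planes in $\PP(\F_{q^{2^{j-1}}}^t)$; since only the deepest (degree-$O(1)$) level carries polynomial variables, we introduce an integer variable $\A_v[f]$ for each leaf $v$ and each low-degree $f$ on its plane, and impose homogeneous linear constraints of four kinds: (i) \emph{simple} constraints fixing $\sum_f\A_v[f]$ to a common global value; (ii) \emph{consistency} Plane-vs-Plane constraints at every level (expressed in the leaf variables through the composition chain); (iii) \emph{composition} constraints tying each level to the next via the embedding $E_c$ of \Cref{thm: injective embedding} followed by the field extension $\F_{q^{2^{j-1}}}\hookrightarrow\F_{q^{2^{j}}}$; and (iv) \emph{coloring} constraints that, through the composition, zero out every leaf configuration whose induced root-level polynomial fails to take distinct values in $\{0,1,2\}$ on the endpoints of some edge of $G$. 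Finally---exactly as in \Cref{sec:toy reduction}---pass to an integer-kernel basis via \Cref{fact:homSol}, multiply by the block-diagonal Hadamard rotation $\Tilde{U}$ built from \Cref{def:spinning-matrix}, rescale, and output $(M_F,1)$ with the scaling chosen so that the natural assignment $\ang{g}$ of the global low-degree extension $g$ of $c$ has $\ell_p$-norm exactly $1$; this is completeness.

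For the parameter count: one round of the recursion restricts a degree-$d$ polynomial in $t$ variables to affine planes, then applies $E_c$ with $i=t/2$ (legitimate since $t$ is even) and $c=\lceil d^{2/t}\rceil$ to the two plane-coordinates, which by \Cref{thm: injective embedding} lowers the individual degree to $O(d^{2/t})$, and then the ``Extend'' step squares the field. Hence $\log d_j\approx(\log d_0)/(t/2)^j$, so after $r=O(\log\log n/\log(t/2))$ rounds the degree is $O(1)$ while the field has size $q^{2^{r-1}}=q^{(\log n)^{1/\log(t/2)}}$; the deepest level dominates, contributing $\bigl(q^{2^{r-1}}\bigr)^{O(t)}=q^{O(\log^{1/\log(t/2)}n)}$ planes and as many low-degree polynomials per plane, which yields both the stated lattice dimension and (lattice arithmetic being polynomial in the dimension) the running time.

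The soundness direction carries the weight. Given a nonzero lattice vector of $\ell_p$-norm $\le\gamma\cdot 1=q^{1/p_t}$, \Cref{fact: norm-tradeoff} together with the rotation $\Tilde{U}$ forces $\|\A_v\|_2=O(q^{1/p_t})$, hence $|\supp{v}|=O(q^{2/p_t})$, on every node; taking $p_t$ large enough (this is the source of the bound $p\ge p_t$ in the theorem) makes this at most $q^{\eps}$ for the constant $\eps$ of \Cref{sec: pvpStruct}. Applying that structural characterization at the deepest level yields a short list $g_1,\dots,g_k$, $k\le q^{\eps}$, of $O(1)$-degree polynomials with the leaf super-assignment equal to $\sum_i a_i\ang{g_i}$. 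We then \emph{ascend} the forest level by level: the composition constraints force each surviving $g_i$ to be the $E_c$-image of a polynomial one level up, recoverable by inverting $E_c$; \Cref{claim: reverse extension} then ``un-extends'' its coefficients from $\F_{q^{2^{j}}}$ back to $\F_{q^{2^{j-1}}}$; and the consistency constraints, combined with \Cref{thm:Plane-vs-Plane}, \Cref{lem: schwartz-zippel}, and the expansion of the Plane-vs-Plane graph (\Cref{fact: expand-PvP}), show the list can only be pruned, never enlarged, as we rise. Reaching the root planes we obtain polynomials of individual degree $\le d_0$ over $\F_q^t$, and the coloring constraints eliminate every one that does not encode a proper $3$-coloring of $G$; since the vector is nonzero at least one survives, so $G$ is $3$-colorable---the contrapositive of soundness.

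The main obstacle is controlling the soundness loss through all $r=O(\log\log n/\log(t/2))$ levels of recursion. For \textsf{CVP} one may use inhomogeneous constraints, which pin a distinguished function and keep the candidate list from growing; here only homogeneous constraints are available, so a priori each composition step could multiply both the list size and the Plane-vs-Plane error $t\cdot(d/|\F|)^c$ of \Cref{thm:Plane-vs-Plane}. The ``Extend'' step is exactly what rescues this: squaring the field each round keeps $d_j/|\F_j|=d_j/q^{2^{j-1}}$ doubly-exponentially small while $d_j$ itself shrinks, so the Schwartz--Zippel bound of \Cref{lem: schwartz-zippel}, the hypothesis of \Cref{claim: reverse extension}, and the error of \Cref{thm:Plane-vs-Plane} all stay geometrically summable, and the total multiplicative loss over the $r$ rounds is $q^{O(1)}$ rather than $q^{\Omega(r)}$. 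Verifying this quantitative balance---that the degree decay everywhere outruns the field growth, and that the list never exceeds the $q^{\eps}$ threshold at which the structural characterization of \Cref{sec: pvpStruct} keeps applying---is the technical heart; once it is in place, the contradiction at the root planes closes the proof.
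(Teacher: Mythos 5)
Your plan follows the paper's route closely (same forest of planes over successively squared fields, leaf variables only, path-independence/consistency constraints, embedding via \Cref{thm: injective embedding}, un-extension via \Cref{claim: reverse extension}, the structural \Cref{lemma: consistency-lemma}, and the Hadamard rotation), and the construction, size count and completeness are essentially the paper's. The genuine gap is in the soundness accounting, precisely at the point you yourself flag as the technical heart. Your claim that the global bound $\|\Tilde U\cdot\A\|_p\le q^{1/p_t}$ ``forces $\|\A_v\|_2=O(q^{1/p_t})$, hence $|\supp{v}|=O(q^{2/p_t})$, on every node'' is false as stated: each leaf block is rescaled by the depth-dependent factor $C_{r+1}=\bigl(|\PP(\F_q^t)|\cdots|\PP(\F_{q^{2^{r}}}^t)|\bigr)^{-1/p}$, so a single leaf can carry unscaled $\ell_2$ mass (and support) up to roughly $q^{1/p_t}/C_{r+1}$, i.e.\ polynomial in the field $\F_{q^{2^{r}}}$ at that level, not in $q$. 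The bound that feeds the hypothesis $|\supp{\P}|\le|\F|^{\eps}$ of \Cref{lemma: consistency-lemma} must therefore be measured against the level's field, and it is not a one-line consequence of \Cref{fact: norm-tradeoff}: the paper obtains it by aggregating the norm over the \emph{entire subtree} of a node, using \Cref{lem: schwartz-zippel} to show that the $|I|$ candidate polynomials describing a consistent child stay pairwise distinct on all but an $o(1)$ fraction of its descendant leaves, so the subtree contributes at least $(1-o(1))\,C_{r+1}|I|^{1/2}$ to the $\ell_p$ norm; comparing with $q^{1/p_t}$ yields $|I|\le|\F_{q^{2^{r}}}|^{12t/p_t}$. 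This subtree aggregation (and the downward propagation of distinctness that powers it) is the missing ingredient; without it your induction has no per-level list bound to work with. Relatedly, your assertion that the list ``can only be pruned, never enlarged, as we rise'' is neither proved nor how the argument actually closes: \Cref{lemma: consistency-lemma} only returns a list of size $|\F|^{\eps}$ at each level, and the paper must re-tighten it at every level by the same norm accounting rather than by any monotonicity.

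Two smaller omissions: first, \Cref{lemma: consistency-lemma} controls only \emph{total} degree, while the next embedding step needs an \emph{individual}-degree bound; the paper recovers it from the axis-parallel planes (which carry strictly tighter degree bounds in the forest) together with Schwartz--Zippel, and your construction drops the axis-parallel/general distinction altogether, so this step has no hook to hang on. Second, the ``simple'' constraints you import from \Cref{sec:toy reduction} are not part of the high-$p$ construction; adding them is harmless (they only shrink the lattice and the natural assignment still satisfies them), but they play no role here and do not substitute for the missing norm accounting.
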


\Cref{thm:svp-high} follows by instantiating \(q(\cdot)\) and $t$.
If we set \(q(n)\approx n\) and $t > 2^{1/\eps + 1}$, then \(n' =  n^{O(\log^{\eps} n)}\) and, as a function of the output dimension \(n'\),
the approximation ratio becomes
\(
2^{\Theta(\log n')^\frac{1}{1 + \eps}},
\)
and can be rewritten as $2^{\Theta(\log n')^{1 - \eps}}$.
If instead \(q(n)\approx2^{n^{\eps}}\), then \(n'=2^{O(n^{\eps}\log^\eps n)}\) and the ratio becomes \(n'^{\,\Theta\ps{1/\log \log n'}^\eps}\).
The theorem holds for every $\eps > 0$,
allowing us to avoid asymptotic notations by tuning $\eps$.

To prove hardness for unique instances (\Cref{thm:usvp-high}), minor adaptations are required, which we describe and analyze in \Cref{sec: Unique-high}.

\paragraph{Basic parameters.}
Let \(G=(V,E)\) be a \textsf{3COL} instance. 
Most parameters match those of \Cref{sec:toy reduction}.
\begin{itemize}
    \item Let $t\ge 4$ be a fixed \emph{even} integer. 
    Throughout our construction, affine planes will be embedded into vector spaces of dimension $t$. 
    \item Let $\F = \F_q$ denote a finite field of size $q \ge n = |V|$.
    \item Let $\HH \subseteq \F $ be an arbitrary set with cardinality $|\HH| = \ceil{n^{\frac{1}{t}}} $. 
    \item Let $p_t > 2$ be the (minimal) norm parameter used in our soundness analysis. 
    This is a fixed constant depending only on $t$;
    The value of $p_t$ is not explicitly stated and arises from assumptions throughout the soundness analysis.
    We prove soundness for $\ell_p$ norms where $p \geq p_t$.
    \item Again, we set $d \defeq (|\HH| - 1)$, the \emph{individual} degree of low-degree extending a function on $\HH^t$. 
\end{itemize}
Identically to \Cref{sec:toy reduction}, $\abs{\HH^t} \ge n$, so we injectively map each $v\in V$ to a unique $x_v\in \HH^t$. 

\subsection{The Intermediate Lattice}
Again, we define the intermediate lattice $\lat[M_I]$ as the integer solutions of a system of homogeneous linear equations. 
Variables in this system correspond to the leaf nodes of the Composition-Recursion tree (affine planes over an extension field) and the low-degree polynomials over these planes.

\subsubsection{Composition-Recursion Forest}
We begin by specifying the degree bounds used in the Composition-Recursion forest. 
Except at the leaves, these bounds are not explicitly enforced by the construction; the soundness analysis will show that, for short vectors, the super-assignment associated with each subtree consists of only a few polynomials that satisfy the stated individual-degree bounds.

For root nodes (namely, affine planes \(\P \in \PP(\F_q^t)\)), we distinguish two cases.
If \(\P\) is axis-parallel, the bound is \(d\).
Otherwise, the bound is \(t\cdot d\), since restricting a polynomial to an arbitrary affine plane preserves total degree but may increase the individual degree.
We now define the bounds recursively.
Suppose \(\P_1 \subseteq \F_q^t,\; \P_2 \subseteq \F_{q^2}^t,\; \dots,\; \P_r \subseteq \F_{q^{2^{\,r-1}}}^t\) form a path from a root downward, and let \(d'\) be the bound for \((\P_1,\dots,\P_{r-1})\).
Then the bound for \((\P_1,\dots,\P_r)\) is: 
\begin{enumerate}
    \item If \(\P_r\) is axis-parallel, bound the degree by \(\floor{(d')^{2/t}}\).
    \item Otherwise, bound the degree by \(t \cdot \floor{(d')^{2/t}}\).
\end{enumerate}
Which corresponds to the decrease in $\ideg$ when embedding (\Cref{thm: injective embedding}). 
This case distinction is necessary because \Cref{thm: injective embedding} requires controlling the \emph{individual} degree rather than the total degree.
We iterate the Composition-Recursion and field extension until the individual-degree bound drops below \(10t^2\).

At each leaf, if the path from the root is \(\P_1,\dots,\P_{r+1}\) and the current bound is \(d'\), 
we introduce a variable for every function \(f \colon \P_{r+1} \to \F_{q^{2^{r}}}\) with \(\ideg(f) \le d'\).
We denote this variable by \(\A[\P_1,\dots,\P_{r+1} \mid f]\).

\subsubsection{Local-to-global constraints}

Let \(\P_1 \in \PP(\F_q^t), \dots, \P_k \in \PP(\F_{q^{2^{k-1}}}^t)\) be a (possibly empty) sequence of planes, and let \(d'\) denote the individual-degree bound at \(\P_k\) (with \(d'=d\) when \(k=0\)).
The subtree rooted at \((\P_1,\dots,\P_k)\) is intended to encode low-degree polynomials on \(\P_k\) (or on \(\F_q^t\) if \(k=0\)).

We enforce a path-independence condition: for every \(x \in \F_{q^{2^k}}^t\), if we embed \(x\) along any chain of planes down to a leaf, the assigned value at \(x\) is independent of the particular chain.
Intuitively, the system encodes a low-degree polynomial by recursively encoding its restrictions to planes, and the constraints ensure that the evaluation at \(x\) is consistent across all plane choices.

To formalize this, fix any prefix \(\P_1,\dots,\P_k\) and extend it along a path \(\P_{k+1},\dots,\P_{r+1}\) from \(\P_k\) downward such that
\(x\in \P_{k+1},\; E^{(1)}(x)\in \P_{k+2},\; \dots,\; E^{(r-k)}(x)\in \P_{r+1}\)\footnote{Here \(E^{(j)}(x)\) denotes the image of \(x\) after \(j\) successive embeddings: \(E^{(1)}(x)\) is the image after embedding into \(\P_{k+1}\), \(E^{(2)}(x)\) after embedding into \(\P_{k+2}\), and so on.}.
For any alternative extension \(\Tilde{\P}_{k+1},\dots,\Tilde{\P}_{r'+1}\) with
\(x\in \Tilde{\P}_{k+1},\; \dots,\; E^{(r'-k)}(x)\in \Tilde{\P}_{r'+1}\),
we impose the local-to-global consistency constraint
\begin{equation}\label{eq:local-to-global}
\forall a\in \F_{q^{2^k}}:\quad
\sum_{f\big(E^{(r-k)}(x)\big)=a} \A[\P_1,\dots,\P_{r+1}\mid f]
\;=\;
\sum_{f\big(E^{(r'-k)}(x)\big)=a} \A[\P_1,\dots,\P_k,\Tilde{\P}_{k+1},\dots,\Tilde{\P}_{r'+1}\mid f].
\end{equation}
Moreover, to enforce descent to the parent subfield, if \(a\notin \F_{q^{2^{k-1}}}\) then, for any \(f\) with
\(f\big(E^{(r'-k)}(x)\big)=a\), we require
\[
\A[\P_1,\dots,\P_k,\Tilde{\P}_{k+1},\dots,\Tilde{\P}_{r'+1}\mid f]=0.
\]

\subsubsection{3COL constraints}
Let $\set{u, v} \in E$ be an edge in the \textsf{3COL} instance. 
We permit nonzero assignments only to low-degree functions with $x_u$ and $ x_v$ properly colored.
Suppose $\P_1, \dots, \P_{r+1}$ is a path to a leaf such that 
\[x_u, x_v\in \P_1, \dots, E^{(r)}(x_u), E^{(r)}(x_v) \in \P_{r+1}\]
We add the constraint $\A[\P_1, \dots, \P_{r+1} \mid f] = 0$
whenever $f(E^{(r)}(x_u)) \not\in \set{0,1,2}$ or $f(E^{(r)}(x_v)) \not\in \set{0,1,2}$. 
To enforce proper coloring, we also add the constraints $\A[\P_1, \dots, \P_{r+1} \mid f] = 0$
whenever $f(E^{(r)}(x_u)) = f(E^{(r)}(x_v))$.

\subsection{The GapSVP instance}
The final lattice constructed in \Cref{sec:toy reduction} 
is the result of multiplying $M_I$ with the rotation matrix of \Cref{def:spinning-matrix}.
Herein, we apply the same trick on the lowest layer of the recursion. 
In contrast to \Cref{sec:toy reduction}, planes are weighted according to their depth in the Composition-Recursion tree.

Rather than explicitly writing $M_F \defeq \Tilde{U} \cdot M_I$, we describe the operation of $\Tilde{U}$ on vectors in the intermediate lattice.
Denote $C_{r+1} \defeq \ps{|\PP(\F_q^t)| \cdot \dots \cdot |\PP(\F_{q^{2^r}}^t)|}^{-\frac{1}{p}}$. 
Let $\P_1, \dots, \P_{r+1}$ be a path to a leaf and $U$ a rotation matrix from \Cref{def:spinning-matrix}.
For the final lattice, we add: 
\[C_{r+1} \cdot m^{\half - \frac{1}{p}} \cdot U \cdot \A[\P_1, \dots, \P_{r+1} \mid *]\]
where $m$ is the number of rows in $U$. 
The number of columns in $U$ corresponds to the number of low-degree functions on $\P_{r+1}$, and the number of rows is the nearest larger power of 2.

\subsubsection{Size of the Reduction}
At each level of the Composition-Recursion tree, the individual degree \(d\) decreases to
$d' \;\le\; t \cdot \big\lfloor d^{\,2/t}\big\rfloor$.
The process stops once the degree is below some constant. 
Let \(R\) be the depth of the Composition-Recursion tree. 
Writing $\log(d') \le \log(t) + \frac{2}{t}\log(d)$, we obtain
\[
R \;=\; \frac{\log\log d}{\log(t/2)} \;+\; O(1).
\]

\paragraph{Number of leaves.}
At depth \(i\) the field size is \(q^{2^{i}}\), and the number of affine planes in \(\F_{q^{2^{i}}}^{\,t}\) is at most
\(|\mathrm{PL}(\F_{q^{2^{i}}}^{\,t})| \le (q^{2^{i}})^{3t}\).
The total number of leaves is bounded by
\[
\prod_{i=0}^{R+1} |\mathrm{PL}(\F_{q^{2^{i}}}^{\,t})|
\;\le\;
\prod_{i=0}^{R+1} (q^{2^{i}})^{3t}
\;=\;
q^{\,3t\,(2^{R+2}-1)}.
\]
Applying our bound on the depth of the recursion,
$
2^{R}
\;=\;
2^{\frac{\log\log d}{\log(t/2)} + O(1)}
\;=\;
O\ps{(\log d)^{1/{\log(\frac{t}{2})}}}.
$
Therefore, the number of leaves is at most
$
q^{O\ps{\log^{1/{\log(\frac{t}{2})}}(d)}}.
$

\paragraph{Per‑leaf.}
Each leaf plane stores an entry for every low‑degree function on that plane. 
At the leaves, the individual
degree is at most \(10t^2\), 
so the number of entries is at most
\[
|\F_{q^{2^{R+1}}}|^{100t^4}
\;=\;
q^{\,O(2^{R+1})} \;=\; q^{O\ps{\log^{1/{\log(\frac{t}{2})}}(d)}}.
\]
Summing over the leaves,
the number of entries---$\lat[M_I]$ lattice's dimension---is at most
$
q^{O\ps{\log^{1/{\log(\frac{t}{2})}}(d)}}.$
Note that the dimension of the final lattice $\lat[M_F]$ is at most doubled, which does not change the asymptotics.
The time complexity is polynomial in the size of the output.

\subsection{Completeness}
Fix a satisfying coloring $\sigma \colon V \rightarrow \set{0, 1, 2}$,
and let $g\colon \F^t \rightarrow \F$ be a low-degree extension of the map $x_v \rightarrow \sigma(v)$.
By \Cref{fact: lde}, the individual degree satisfies $\ideg(g) \le d$.
Restricting, embedding, and extending $g$ along the Composition-Recursion forest naturally defines a vector in $\lat[M_I]$:

\begin{enumerate}
    \item At the root of the Composition-Recursion forest, each vertex corresponds to an affine plane $\P \in \PP\ps{\F_q^t}$. 
    We restrict $g$ to that plane, obtaining $g_{|\P}$.
    \item If a vertex $(\P_1, ..., \P_r)$ is associated with a polynomial $f$, we embed $f$ according to \Cref{thm: injective embedding}, obtaining $f^*$.
    The low-degree polynomial $f^*$ extends naturally when passing to an extension field.
    The child vertex $(\P_1, ..., \P_{r}, \P_{r+1})$ is then assigned $f^*_{|\P_{r+1}}$.
    \item At a leaf $(\P_1, ..., \P_k)$ with a function $f$, we set $\A[(\P_1, ..., \P_k) \mid h] = 1_{f}(h)$, exactly as in \Cref{sec:toy reduction}.
\end{enumerate}

Denote such an assignment $\aang{g}$.
It is not hard to confirm that $\aang{g}$ satisfies the linear constraints, and thus $\aang{g} \in \lat[M_I]$. 
From now on, we also refer to $\aang{g}$ as a natural-assignment.
$\Tilde{U} \cdot \aang{g} \in \lat[M_F]$ is a lattice vector of $\ell_p$ norm exactly 1. 

\section{Soundness} \label{sec: soundness}
Fix a norm parameter \(p \ge p_t\).
Suppose \(\Tilde U \cdot \A \in \lat[M_F]\) is a lattice vector with
\(\|\Tilde U \cdot \A\|_p \le q^{1/p_t}\).
We will show that, for a sufficiently large absolute constant \(p_t>2\) (depending only on the construction parameters, e.g., \(t\)), the underlying \textsf{3COL} instance must be satisfiable.
For readability, we do not attempt to optimize \(p_t\); any fixed sufficiently large choice of \(p_t\) suffices.

Before diving into the soundness analysis, we first present the necessary analytic tools.

\begin{definition}[Weak Plane-vs-Plane constraints] These are ``local-to-global'' constraints implicitly enforced in \Cref{sec: The construction}.
Let $\A\neq \vec 0$ be a super-assignment --- a vector with an entry for each pair $\P \in \PP(\F^t)$ and $f \in \P_{\le d}$.
For every affine line $\ell = \P_1 \cap \P_2$, 
we add the equations:  
\begin{equation} \label{eq: weak local-to-global}
\forall x\in \ell, \forall a\in \F \colon \sum_{f(x) = a} \A_{\P_1}[f] = \sum_{f(x) = a} \A_{\P_2}[f].
\end{equation}
Thus, instead of enforcing consistency on the entire line $\ell$, we enforce it pointwise at each $x\in \ell$.
\end{definition}

\begin{lemma: consistency lemma}[$\pvp$ local to global]
There exists an absolute constant $\eps > 0$ such that the following holds.
Let $\A$ be a super-assignment that assigns an integer to each $\P \in \PP$ and $f \in \P_{\le d}$.
Assume $d \le |\F|^\eps$ and $|\supp{\P}| \le |\F|^\eps$ for every $\P \in \PP$.

If $\A$ satisfies the weak Plane-vs-Plane constraints, 
then there exist integers $a_1, \dots, a_k \in \Z$, $k \le |\F|^\eps$, and global degree-$d$ functions $g_1, \dots, g_k \colon \F^t \to \F$ such that 
\[
\A = a_1 \cdot \ang{ g_1 } + \dots + a_k \cdot \ang{ g_k }.
\] 
\end{lemma: consistency lemma}

\Cref{lemma: consistency-lemma} is proven in \Cref{sec: pvpStruct}.
By combining the field-extension step with high \( \ell_p \) norms, we apply the lemma at the lowest recursion level.
Working upwards along the Composition-Recursion tree, we conclude that any short vector is an integer combination of few natural assignments.
Finally, we show that the resulting low-degree polynomials encode proper 3-colorings of the original \textsf{3COL} instance.

\subsection{Characterizing short vectors}
Informally, our argument is as follows.
At each level of the recursion, the \emph{field-extension} step preserves a polynomial ratio between the number of vertices and the ambient field size.
Consequently, any attempt to bypass the local-to-global constraints within a subtree must ``spend mass'':
it increases the \(\ell_p\) norm on that subtree by a factor polynomial in the field size.
Because we work with \(p\ge p_t\) and \(p_t\) is a fixed large constant, these polynomial losses are amplified, so such \emph{local} assignments necessarily have significantly larger \(\ell_p\) norm.

In the next section, our goal is to prove an analogue of \Cref{lemma: consistency-lemma}, on the whole Composition-Recursion tree ---
that $\A$ admits a representation by a small list $(a_i, f_i)_{i \in I}$.
Formally, we aim to write $\A = \sum_{i \in I} a_i \aang{f_i}$, when $|I|$ is bounded.
$\sum_{i \in I} a_i \aang{f_i}$ will also be referred to as a \emph{super-assignment}. 

\paragraph{Consistent vertices.} 
We construct such a representation by working from the leaves upward.
A vertex $\P_1, \dots, \P_r$, not necessarily a leaf, will be called \emph{consistent} if there exists a list $(a_i, f_i)_{i\in I}$
of low-degree polynomials $\F_{q^{2^r}}^t \to \F_{q^{2^r}}$ (With the individual-degree satisfying the Composition-Recursion degree bound) such that: 
\begin{enumerate}
    \item $|I| \le |\F_{q^{2^r}}|^{\frac{12 \cdot t}{p_t}}$, and
    \item The assignment induced on the subtree agrees with $(a_i, f_i)_{i\in I}$. 
\end{enumerate}
Note that consistency propagates downward: if a vertex is consistent, then every descendant is also consistent.  
Our objective is therefore to prove that the root ($r=0$) is consistent.  
For the sake of contradiction, assume otherwise, and let $\P_1, \dots, \P_r$ be an inconsistent vertex of maximal depth.  

\subsection{The offspring's support}
For convenience, assume that $\P_1, \dots, \P_r$ is not a leaf. 
The case of leaves is simpler and follows the same steps. 

As $\P_1, \dots, \P_r$ is of maximal depth, each offspring $\P_1, \dots, \P_r, \P_{r + 1}$ is explained by a short list $I = (a_i, f_i)$. 
Using the consistency of $\P_1, \dots, \P_r, \P_{r+1}$, and that $\Tilde{U} \cdot \A$ has low norm, we improve the bound on $(a_i, f_i)$'s length. \\

Let $\P_1, \dots, \P_{r+1}, \dots \P_k$ be a leaf. 
Again, $\supp{\P_1, \dots, \P_k}$ contains the low-degree polynomials with nonzero coefficient. 
Similarly to \Cref{sec:toy reduction}, the support's size gives a lower bound on the norm:
\[\|C_{k} \cdot U \cdot \A[\P_1, \dots, \P_{k} \mid *]\|_2  \le 
\|C_{k} \cdot m^{\half - \frac{1}{p}} \cdot U \cdot \A[\P_1, \dots, \P_{k} \mid *]\|_p\]
\[C_k |\supp{\P_1, \dots, \P_k}|^{\half}  \le 
\|C_{k} \cdot m^{\half - \frac{1}{p}} \cdot U \cdot \A[\P_1, \dots, \P_{k} \mid *]\|_p\]

Returning to $\P_1, \dots, \P_{r+1}$, the Schwartz-Zippel \Cref{lem: schwartz-zippel} implies that $(f_i)$ collides on a fraction of at most $d|\F_{q^{2^{r+1}}}|^{\frac{24 \cdot t}{p_t} - 1}$ of the planes. $p_t$ is sufficiently large, making the collisions negligible. 

By definition, an offspring $\P_1, \dots, \P_{r+1}, \P$ is specified by the list $(a_i, f^*_i)$, where $f^*_i$ is obtained by embedding and extending ${f_i}_{|\P}$. 
Whenever $(f_i)$ does not collide on $\P$, the list $(a_i, f^*_i)$ contains $|I|$ distinct low-degree functions.
This property persists downwards, yielding $|I|$ distinct functions for all but an $o(1)$ fraction of the descendants of $\P_1, \dots, \P_{r + 1}$.

Consequently, the norm on the descendants of $\P_1, \dots, \P_{r+1}$ is at least: 
\[(1 - o(1)) \cdot \ps{\sum_{\P_{r+2}, \dots, \P_k}C_k^p |I|^{\frac{p}{2}} }^{\frac{1}{p}} = 
(1 - o(1)) \cdot  C_{r+1} |I|^\half
\]

where the equality follows from $C_k^p = \ps{|\PP(\F_q^t)| \cdot \dots \cdot |\PP(\F_{q^{2^{k-1}}}^t)|}^{-1}$. 
The total norm is at most $q^{\frac{1}{p_t}}$, and thus
$(1 - o(1)) \cdot  C_{r+1} |I|^\half \le q^{\frac{1}{p_t}}$. 
A lower bound for $C_{r+1}$ implies an upper bound on $I$.
Using $|\PP(\F_x^t)| \le x^{3t}$ and $p > p_t$, it holds that $C_{r+1} \ge  {q^{-\frac{3t}{p_t}}} \cdot {\dots} \cdot {q^{-\frac{3t}{p_t} \cdot 2^{r}}} = {q^{-\frac{3t}{p_t} \cdot (2^{r + 1} - 1)}}$.
we obtain: 
\[(1 - o(1)) \cdot  {q^{-\frac{3t}{p_t} \cdot (2^{r + 1} - 1)}} |I|^\half \le q^{\frac{1}{p_t}}\]
\[|I|^\half \le {q^{\frac{3t}{p_t} \cdot (2^{r + 1})}} = |\F_{q^{2^{r+1}}}|^{\frac{3 t}{p_t}}\]
Exactly a square root of the requirement on $|I|$. 
Restating, $|I| \le |\F_{q^{2^{r}}}|^\frac{12\cdot t}{p_t}$.

\subsection{Pulling up}
Every offspring $\P_1, \dots, \P_r, \P$ has a short description $( a_i^\P, f_i^\P)_{i \in I_\P}$, where $I_\P \le |\F_{q^{2^{r}}}|^\frac{12\cdot t}{p_t}$.
To describe $\P_1, \dots, \P_r$, one may be tempted to use $\Cref{lemma: consistency-lemma}$. 
Indeed, this is our end goal. 
However, $( a_i^\P, f_i^\P)$ aren't assignments to $\P$, but low-degree polynomials on 
$\F_{q^{2^{r+1}}}$ ($\P$ after embedding and extending the field).
Before applying the lemma,
it is necessary to ``reverse'' the field-extension and embedding. 

\paragraph{Reversing the Composition-Recursion.} 
We show that for every $f_i^\P$, the restriction ${f_i^\P}_|{\F_{q^{2^{r}}}^t}$ defines a function $\F_{q^{2^{r}}}^t \to \F_{q^{2^{r}}}$.
By \Cref{claim: reverse extension}, it is sufficient to prove that $f_i^\P$ output values are in $\F_{q^{2^{r}}}$ with non-negligible probability when restricted to ${\F_{q^{2^{r}}}^t}$.
If $(f_i^\P)$ does not collide on a point $x\in \F_{q^{2^{r}}}^t$, then by the local-to-global constraints for all $i\in I_\P$ we have $f_i^\P(x) \in \F_{q^{2^{r+1}}}$. 
The Schwartz-Zippel \Cref{lem: schwartz-zippel} ensures that such collisions rarely occur.
Thus, $\Cref{claim: reverse extension}$ guarantees that the restriction ${f_i^\P}_{|\F_{q^{2^{r}}}^t}$ is itself a low-degree polynomial $\F_{q^{2^{r}}}^t \to \F_{q^{2^{r}}}$.
Since $f_i^\P$ is a low-degree polynomial, it follows that $f_i^\P$ is exactly the low-degree extension of $f_i^\P |_{\F_{q^{2^{r}}}}$.\\

Using \Cref{thm: injective embedding}, for every $f_i^\P$ there exists a unique $g_i^\P : \P \to \F_{q^{2^{r}}}$ such that  embedding $g_i^\P$ yields $f_i^\P |_{\F_{q^{2^{r}}}}$, and extending the field recovers $f_i^\P$.

\paragraph{Constructing a super-assignment.}
Now, the conditions for \Cref{lemma: consistency-lemma} are satisfied.
Consider the $\pvp$ graph on $\F_{q^{2^r}}^t$. 
For every plane $\P$, we assign $a_i^\P$ to $g_i^\P$, and $0$ elsewhere.

By \Cref{eq:local-to-global}, the weak Plane-vs-Plane constraints (\ref{eq: weak local-to-global}) are satisfied. 
Applying \Cref{lemma: consistency-lemma}, we conclude that there exists a global description $(a_i, f_i)_{i \in I}$, with $|I| \le |\F_{q^{2^{r}}}|^\frac{12\cdot t}{p_t}$.\\

Cautious readers may notice we are not yet finished. 
It remains to show that the functions $(f_i)$ have low \emph{individual} degree, matching the degree bounds prescribed by the Composition-Recursion forest. 
However, \Cref{lemma: consistency-lemma} only guarantees that the (total) degree is low enough.

Recall that on \emph{parallel} planes $\P \in \PP(\F_{q^{2^{r}}})$, the individual degree $\ideg(g_i^\P)$, is sufficiently low. 
By the Schwartz-Zippel \Cref{lem: schwartz-zippel}, the functions $(f_i)$ collide on only a negligible fraction of these parallel planes. 
Thus, for every $f_i$, restriction to almost all parallel planes has low individual degree.
It then follows that $f_i$ has a low individual degree.

\subsection{Satisfying the 3COL} \label{sec: sat-col}
Having established a global description $(a_i, f_i)$, 
we now argue that every $f_i$ induces a satisfying assignment $u \to f_i(x_u)$ to the underlying \textsf{3COL} instance.

Fix any point $x\in \F_q^t$ such that $(f_i)$ does not collide on $x$, and let $\{u, v\} \in E$ be an edge.
Let $\P_1$ be an affine plane containing $x, x_u, x_v$. 
Let $\P_2$ be an affine plane containing $E(x), E(x_u), E(x_v)$, the images of these points under the embedding of $\P_1$.
We can continue recursively and obtain a path to a leaf $\P_1, \dots, \P_r$, together with corresponding points $\Tilde x, \Tilde{x_u}, \Tilde{x_v} \in \P_r$.

Since for all ${i \neq j}$, it holds that $f_i(x) \neq f_j(x)$, recursively embedding and extending $(f_i)$ 
yields a collection $(f_i^*)$ of distinct low-degree functions on $\P_r$. 
For each $f_i$ with nonzero coefficient, 
$\Phi$'s constraints ensure that $(f_i^*(\Tilde{x_u}), f_i^*(\Tilde{x_v}))$ satisfies the \textsf{3COL} constraint on $\{u,v\}$.
$f_i(x_u) = f_i^*(\Tilde{x_u})$ and $f_i(x_v) = f_i^*(\Tilde{x_v})$, thus $u \to f_i(x_u)$ is a satisfying coloring.

\subsection{Unique-SVP} \label{sec: Unique-high}
In the soundness analysis, we showed that short vectors correspond to $\sum_{i \in I} a_i \aang{f_i}$, where $\ideg(f_i) \le d$.
In order to start from $\textsf{Unambiguous-3SAT}$, we need to map a formula $\Phi = (\varphi_1 \land \dots \land \varphi_m)$ over variables $x_1, \dots, x_n$ to $\HH^t$ and replace the \textsf{3COL}'s constraints.

Fix any mapping $\set{\varphi_1, \dots, \varphi_m} \cup \set{x_1, \dots, x_n} \to \HH^t$.
Again, we assume it is bijective (we may pad the formula with $o(n)$ dummy variables constrained to be $0$). Instead of the \textsf{3COL} constraints, we enforce:
\begin{itemize}
    \item \emph{Alphabet:} Let $\varphi_i$ be a formula, $x_{\varphi_i}\in\HH$ its corresponding point, and $\Tilde{x}_{\varphi_i}$ the result of embedding $x_{\varphi_i}$ until reaching a leaf $\P_1, \dots, \P_r$. 
    We enforce $\A[\P_1, \dots, \P_{r+1} \mid f] = 0$ whenever $f(\Tilde{x}_{\varphi_i}) \notin \set{1, \dots, 7}$. 
    Similarly, let $x_i$ be a variable, $x_{x_i}\in\HH^t$ its corresponding point, and $\Tilde{x}_{x_i}$ the result of embedding. 
    We enforce $\A[\P_1, \dots, \P_{r+1} \mid f] = 0$ whenever $f(\Tilde{x}_{x_i}) \notin \set{0, 1}$.

    \item \emph{Consistency:} Let $\varphi_i$ be a formula and $x_{x_j}$ be a variable in $\varphi_i$.
    After embedding both until reaching a leaf $\P_1, \dots, \P_{r+1}$, we have $\Tilde{x}_{\varphi_i}, \Tilde{x}_{x_j} \in \P_{r + 1}$. Thinking about $f(\Tilde{x}_{\varphi_i})$ as 3 bits---each equal to 1 if the corresponding literal is satisfied, and $f(\Tilde{x}_{x_j})$ as the assignment to $x_j$; for every $f$ on $\P_{r+1}$ we enforce $\A[\P_1, \dots, \P_{r+1} \mid f] = 0$ if these bits are inconsistent.
\end{itemize}

Proving that each $x_i \to f_i(x_i)$ is a satisfying assignment to $\Phi$ follows identically to $\textsf{3COL}$ (\Cref{sec: sat-col}).
Note that starting from $\textsf{3SAT}$ instead of $\textsf{3COL}$ would have been possible, but since the constraints are somewhat harder to follow, we chose to start from $\textsf{3COL}$.

Now, we need to prove \emph{uniqueness}, i.e., $\lambda_2^{(p)} \ge q^{1/p_t} \lambda_1^{(p)}$.
Identically to \Cref{sec: Unique SVP}, \Cref{fact: lde} promises that each $f_i$ is the unique low-degree extension of the single satisfying assignment, and so up to multiplication by a scalar, there exists a single short vector.

\subsection{Characterizing short PvP super-assignments}
\label{sec: pvpStruct}
To finalize the soundness analysis, it is left to prove \Cref{lemma: consistency-lemma}.
While \cite{Dinur2003} proved a similar result, 
we present a proof based on $G_\pvp$'s expansion, that leads to an arguably cleaner reduction. 
We prioritize readability over tight parameters; thus, some constants are not presented in their optimal form.

\paragraph{Expansion of nonzero planes:} 
Before proceeding to \Cref{lemma: consistency-lemma}, 
we claim assignments with bounded support have nonzero values on almost all the planes. 
\Cref{non-zero-short} allows us to ignore planes $\P \in \PP$, whenever $\A_\P = \vec 0$. 

\begin{claim} \label{non-zero-short}
Let $\eps > 0$ and $\A\neq \vec 0$ be a vector over $\Z$, with an entry for every $\P \in \PP(\F^t)$ and $f \in \P_{\le d}$.
Suppose $\A$ satisfies weak Plane-vs-Plane (\ref{eq: weak local-to-global}) constraints, and for every plane $\P \in \mathbf P$, the support of $\P$ is bounded by $|\supp{\P}| \le |\F|^{\eps}$. 
Then, $\A_\P = \vec 0$ for a fraction of at most $(d + 3)|\F|^{-1 + \eps}$ of the planes. 
\end{claim}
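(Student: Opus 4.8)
The plan is to argue by contradiction, exploiting the expansion of the Plane-vs-Plane graph (\Cref{fact: expand-PvP}) together with the Schwartz--Zippel lemma. Let $Z \subseteq \PP(\F^t)$ be the set of planes $\P$ with $\A_\P = \vec 0$, and suppose toward contradiction that $|Z| > (d+3)|\F|^{-1+\eps}\cdot|\PP(\F^t)|$; equivalently $|Z|/|\PP| > (d+3)/|\F|^{1-\eps}$. The key claim is that a plane $\P \notin Z$ can have very few neighbors inside $Z$ in the $\pvp$ graph. Indeed, fix $\P \notin Z$, pick any $f \in \supp{\P}$ (nonempty since $\A_\P \neq \vec 0$). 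For a point $x \in \P$ at which $f$ disagrees with every other element of $\supp{\P}$, the sum $\sum_{h(x) = f(x)} \A_\P[h] = \A_\P[f] \neq 0$, so by the weak PvP constraints \eqref{eq: weak local-to-global} any neighbor $\P'$ of $\P$ through a line containing $x$ must also have $\A_{\P'} \neq \vec 0$, i.e.\ $\P' \notin Z$.

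\textbf{Counting the bad points.} The points $x \in \P$ where $f$ \emph{does} collide with some other $g \in \supp{\P}$: each such $g$ satisfies $\deg(f-g) \le d$ (both have degree $\le d$), so by Schwartz--Zippel $f = g$ on at most a $d/|\F|$ fraction of the (affine) points of $\P$; union-bounding over the at most $|\F|^\eps - 1$ other functions in $\supp{\P}$, the fraction of ``collision points'' is at most $d|\F|^{\eps-1}$. Thus on a $1 - d|\F|^{\eps-1}$ fraction of points $x \in \P$, every neighbor of $\P$ through a line containing $x$ lies outside $Z$. Since the lines of $\F^t$ through the points of a fixed plane $\P$ cover the neighbor-set of $\P$ roughly uniformly, the fraction of neighbors $\P'$ of $\P$ with $\P' \in Z$ is $O(d|\F|^{\eps-1}) \le 3d|\F|^{\eps - 1}$ (the slack in the constant absorbing the non-uniformity in how lines-through-$x$ correspond to edges of $G_\pvp$); this is the $d$-part of the claimed $(d+3)$. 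Hence $\Phi(V \setminus Z) = \cProb{(u,v) \in E}{u \notin Z}{v \in Z} \le 3d|\F|^{\eps-1}$, so $\Phi(V \setminus Z) < (d+3)|\F|^{\eps-1}$.

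\textbf{Contradiction via expansion.} Apply \Cref{fact: expand-PvP} to the set $S = V \setminus Z$: it gives $\Phi(V\setminus Z) \ge 1 - \frac{|V \setminus Z|}{|V|} - \frac{3}{q} = \frac{|Z|}{|V|} - \frac{3}{|\F|}$. Combining with the upper bound, $\frac{|Z|}{|V|} - \frac{3}{|\F|} < (d+3)|\F|^{\eps-1} - $ wait, more carefully: we get $\frac{|Z|}{|V|} < 3d|\F|^{\eps-1} + \frac{3}{|\F|} \le (3d+3)|\F|^{\eps-1}$, and tightening the neighbor bound to $d|\F|^{\eps-1}$ (rather than $3d$) yields $\frac{|Z|}{|V|} < (d+3)|\F|^{\eps-1}$, contradicting our assumption. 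This forces $|Z| \le (d+3)|\F|^{-1+\eps}\cdot|\PP|$, which is the statement.

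\textbf{The main obstacle} is making the passage from ``bad points on $\P$'' to ``bad \emph{edges} of $G_\pvp$ incident to $\P$'' quantitatively clean: the $\pvp$ graph connects $\P$ to $\P'$ via their common line $\ell = \P \cap \P'$, and a single point $x \in \P$ lies on many such lines, while each edge corresponds to one line (hence $|\F|$ points). One must check that sampling a uniformly random $\pvp$-neighbor $\P'$ of $\P$ and then a uniformly random point $x \in \P \cap \P'$ gives a point $x$ whose marginal is (near-)uniform on $\P$ — this is where the small additive slack and the $+3$ constant come from, and it is essentially a counting computation over lines and planes in $\F^t$ analogous to (and no harder than) the one underlying \Cref{fact: expand-PvP} itself. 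The degree bound $d \le |\F|^\eps$ is what keeps $d|\F|^{\eps-1} = |\F|^{2\eps - 1}$ genuinely small, so $\eps < 1/2$ suffices for the argument to be non-vacuous.
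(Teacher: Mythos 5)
Your proof is correct and follows essentially the same route as the paper: bound the collision points on each nonzero plane via Schwartz--Zippel, use the weak Plane-vs-Plane constraints to show the set of nonzero planes has edge expansion at most $d\,\abs{\F}^{\eps-1}$, and then apply \Cref{fact: expand-PvP} to conclude the zero planes form at most a $(d+3)\abs{\F}^{\eps-1}$ fraction. The uniformity worry you flag resolves cleanly (a uniformly random neighbor of $\P$ meets it in a uniformly random line, and a random point of a random line is uniform on $\P$), so the clean bound $d\,\abs{\F}^{\eps-1}$ holds with no factor-$3$ slack, exactly as the paper asserts.
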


\begin{proof}
Fix such $\A$ and let $S \subseteq \PP$ be the planes with nonzero assignments. 
Namely, $S \defeq \sett{\P \in \PP}{\A_\P \neq \vec 0}$. We will prove it is poorly expanding, and apply \Cref{fact: expand-PvP} to show $S$ contains almost all the planes. \\

Fix $\P \in S$ and an arbitrary function $f\in \supp{\P}$.
For any different function $g\in \supp{\P}$, 
the Schwartz-Zippel \Cref{lem: schwartz-zippel} states that $f$ and $g$ agree 
on a random point with probability at most $\frac{d}{\abs{\F}}$. 
As $\abs{\supp\P} \le |\F|^\eps$, 
$f$ disagrees with all of $\supp{\P} \setminus \set{f}$
on all but a fraction of at most $\frac{d}{|\F|} \cdot |\F|^\eps$ of the points.\\ 

Let $\P'$ be a neighbor of $\P$: $\P' \cap \P = \ell$. 
If there exists $x\in \ell$ such that $\forall g\in \supp{\P} \setminus \set{f}\colon g(x) \neq f(x)$,
then (\ref{eq: weak local-to-global}) implies $\P' \in S$. 
This happens with a probability of at least 
$1- d \cdot |\F|^{\eps-1}$, so 
$\Phi(S) \le d \cdot |\F|^{\eps-1}$.
 Applying \Cref{fact: expand-PvP}
 (on the expansion of the $\pvp$ graphs): 
\[d \cdot |\F|^{\eps-1} \ge 1 - \frac{3}{|\F|} - \frac{|S|}{|\PP|}\]
\[(d + 3) \cdot |\F|^{\eps-1} \ge 1 - \frac{|S|}{|\PP|}\] 
\end{proof}

\begin{lemma}[$\pvp$ local to global]
\label{lemma: consistency-lemma}
There exists an absolute constant $\eps > 0$ such that the following holds.
Let $\A$ be a super-assignment that assigns an integer to each $\P \in \PP$ and $f \in \P_{\le d}$.
Assume $d \le |\F|^\eps$ and $|\supp{\P}| \le |\F|^\eps$ for every $\P \in \PP$.

If $\A$ satisfies the weak Plane-vs-Plane constraints, 
then there exist integers $a_1, \dots, a_k \in \Z$, $k \le |\F|^\eps$, and global degree-$d$ functions $g_1, \dots, g_k \colon \F^t \to \F$ such that 
\[
\A = a_1 \cdot \ang{ g_1 } + \dots + a_k \cdot \ang{ g_k }.
\] 
\end{lemma}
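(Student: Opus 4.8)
The plan is to combine the list-decoding version of the Plane-vs-Plane theorem (\Cref{thm:Plane-vs-Plane}) with the expansion of $G_\pvp$ (\Cref{fact: expand-PvP}) to extract a short list of global low-degree polynomials, and then argue that $\A$ is \emph{exactly} an integer combination of their natural assignments. The first step is \emph{peeling}. By \Cref{non-zero-short}, all but a $(d+3)|\F|^{-1+\eps} = o(1)$ fraction of planes carry a nonzero super-assignment, so on almost every plane $\supp{\P}$ is a nonempty set of size at most $|\F|^\eps$. For each plane $\P$ in this ``good'' set, pick (say) the lexicographically first $f \in \supp{\P}$; call it $T[\P]$. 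This defines a (partial) assignment of one low-degree polynomial per plane. I would first verify that this table passes the Plane-vs-Plane test with non-negligible probability: for a random line $\ell = \P_1 \cap \P_2$, Schwartz--Zippel (\Cref{lem: schwartz-zippel}) shows that on a $1-o(1)$ fraction of points $x \in \ell$ the chosen $f=T[\P_1]$ is the \emph{only} element of $\supp{\P_1}$ taking its value at $x$, and symmetrically for $T[\P_2]$; the weak constraints \eqref{eq: weak local-to-global} then force $f(x) = T[\P_2](x)$ at each such $x$. Since two degree-$d$ restrictions to a line agreeing on more than $d$ points are identical, $T[\P_1]_{|\ell} = T[\P_2]_{|\ell}$ whenever both planes are good, which happens with probability $1-o(1)$. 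Applying \Cref{thm:Plane-vs-Plane} with a suitable constant $\delta$ (recall $d \le |\F|^\eps$ with $\eps$ small, so $t \cdot (d/|\F|)^c = o(1)$) yields a list $g_1, \dots, g_k$ of global degree-$d$ polynomials with $k = O(1)$ such that on all but an $o(1)$ fraction of planes, $T[\P] = {g_j}_{|\P}$ for some $j$. Hence for almost every plane, $\supp{\P}$ \emph{contains} some ${g_j}_{|\P}$.

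The second step is to show these few global polynomials \emph{span} the support everywhere. The idea is to iterate the list-decoding argument. Having found $g_1, \dots, g_k$, on the $1-o(1)$ fraction of planes where $\supp{\P} \supseteq \{ {g_j}_{|\P} : j \in J_\P \}$ for the relevant index set $J_\P$, I would look at the ``residual'' super-assignment: on each such plane, subtract from $\A_\P$ the contribution on the coordinates ${g_j}_{|\P}$, and define $\A'_\P$ to be the restriction of $\A_\P$ to $\supp{\P} \setminus \{ {g_j}_{|\P} \}$. One checks the residual still satisfies the weak Plane-vs-Plane constraints on the good planes, because on a generic point $x$ each $g_j|_{\P}$ is the unique support element at its value (again Schwartz--Zippel between distinct degree-$d$ polynomials), so the linear equations \eqref{eq: weak local-to-global} decouple the $g_j$-coordinates from the rest. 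Since $|\supp{\P}|$ strictly drops, after at most $|\F|^\eps$ rounds the residual support is empty on all good planes, i.e.\ on a $1-o(1)$ fraction of planes $\A_\P$ is supported entirely on $\{ {g_j}_{|\P} \}$ for a master list $g_1, \dots, g_k$ with $k \le |\F|^\eps$.

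The third step is to pin down the coefficients \emph{globally and exactly} — this is where I expect the main obstacle. For each $j$, and each plane $\P$ in the good set, let $a^\P_j \defeq \A_\P[{g_j}_{|\P}]$ (zero if ${g_j}_{|\P} \notin \supp{\P}$). I claim $a^\P_j$ is independent of $\P$. Fix $j$; on a generic line $\ell = \P_1 \cap \P_2$ between two good planes, pick a point $x \in \ell$ at which $g_j$ is separated from every \emph{other} support element of both $\P_1$ and $\P_2$ (possible for all but $o(1)$ of the points, since the finitely many relevant polynomials are pairwise distinct of degree $\le d$); then equation \eqref{eq: weak local-to-global} at $(x, g_j(x))$ reads $a^{\P_1}_j = \A_{\P_1}[\{h : h(x)=g_j(x)\}] = \A_{\P_2}[\{h : h(x)=g_j(x)\}] = a^{\P_2}_j$. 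Thus $a_j \defeq a^\P_j$ is well-defined on the good set (the good set is connected with this expansion — a formal way to propagate is again via \Cref{fact: expand-PvP}, since the set of planes with a given coefficient value is either tiny or almost everything). So $\A$ agrees with $a_1 \ang{g_1} + \dots + a_k \ang{g_k}$ on a $1-o(1)$ fraction of planes. Finally, to upgrade ``almost everywhere'' to ``everywhere'': the difference $\B \defeq \A - (a_1\ang{g_1} + \dots + a_k\ang{g_k})$ is a super-assignment satisfying the weak Plane-vs-Plane constraints, with $|\supp{\P}_\B| \le |\F|^\eps + k \le 2|\F|^\eps$ on every plane, and $\B_\P = \vec 0$ on a $1-o(1)$ fraction of planes. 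By the expansion argument of \Cref{non-zero-short} (applied to $\B$, whose set of nonzero planes $S$ has $|S|/|\PP| = o(1)$ yet, if nonempty, $\Phi(S) \le d|\F|^{-1+O(\eps)}$, contradicting \Cref{fact: expand-PvP}), we get $\B = \vec 0$, i.e.\ $\A = a_1\ang{g_1} + \dots + a_k\ang{g_k}$ identically. The delicate point throughout is bookkeeping the $o(1)$ error terms so that $\eps$ can be fixed as a single absolute constant small enough that $d \le |\F|^\eps$, the iteration over $\le |\F|^\eps$ rounds, and the $t\cdot(d/|\F|)^c$ loss in \Cref{thm:Plane-vs-Plane} are all simultaneously negligible; choosing $\eps < c/2$ (with $c$ from \Cref{thm:Plane-vs-Plane}) and absorbing the polynomially-many rounds into the union bound handles this.
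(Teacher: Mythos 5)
There is a genuine gap in your first step, and it is exactly the difficulty the paper's proof is organized around. You define $T[\P]$ deterministically (lexicographically first element of $\supp{\P}$) and claim that at a generic point $x\in\ell$ the weak constraints \eqref{eq: weak local-to-global} ``force $T[\P_1](x)=T[\P_2](x)$''. They do not: the constraint at $(x,a)$ only equates the \emph{sums} $\sum_{f(x)=a}\A_{\P_1}[f]=\sum_{f(x)=a}\A_{\P_2}[f]$, so when $T[\P_1]$ is the unique support element of $\P_1$ with value $a$ at $x$, you can only conclude that \emph{some} $g\in\supp{\P_2}$ has $g(x)=a$ --- not that this $g$ is the lexicographically first element of $\supp{\P_2}$. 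For instance, if $\supp{\P_1}=\{f_1,f_2\}$ and $\supp{\P_2}=\{g_1,g_2\}$ with $f_1|_\ell=g_2|_\ell$ and $f_2|_\ell=g_1|_\ell$, the weak constraints hold perfectly yet your table fails the test on this edge; if such crosswise matchings occur on most edges, $T$ passes with negligible probability and \Cref{thm:Plane-vs-Plane} cannot be invoked with constant $\delta$. This is why the paper's proof samples $T[\P]$ \emph{uniformly} from $\supp{\P}$, proves only a pass probability of order $\tfrac14|\F|^{-\eps}$ on good planes (with a Chernoff bound to discard atypical tables), applies the list-decoding theorem at sub-constant $\delta\approx|\F|^{-3\eps}$ to get a list of size $O(|\F|^{3\eps})$, and then takes $|\F|^{1.5\eps}$ independent tables and concatenates their lists so that, on almost every plane, \emph{every} element of $\supp{\P}$ is explained by some global $f_i$ --- replacing your ``peel one representative and iterate'' scheme.

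Your iteration step has a related soft spot: after subtracting the $g_j$-coordinates, the residual super-assignment satisfies the weak constraints only at points where $g_j$ is separated from the rest of the support, not identically, so it is not literally a super-assignment satisfying \eqref{eq: weak local-to-global} to which \Cref{non-zero-short} and the list-decoding machinery can be re-applied; making this rigorous needs an ``approximate constraints'' version you have not stated. By contrast, your second and third steps (pinning down a single coefficient vector via edges where the $f_i$ do not collide, letting expansion force one equivalence class to dominate, and finishing by applying the expansion argument of \Cref{non-zero-short} to $\A-\sum_i a_i\ang{f_i}$) do match the paper's endgame. So the skeleton after the extraction of the global list is sound, but the extraction itself --- the heart of the lemma --- needs the randomized multi-table covering argument rather than a deterministic choice of representatives.
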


\begin{proof}
The proof consists of three stages:
\begin{enumerate}
    \item First, find a small list of degree-$d$ polynomials $g_1, \dots, g_k\colon \F^t \to \F$, so that almost all planes may be explained via restrictions of $\{g_i\}$. 
    To do so, use the consistency of the Plane-vs-Plane test \cite{Raz1997}. 
    \item Then, use \Cref{fact: expand-PvP}, to show that a large part of $\PP$ correspond super-assignment $\A' = a_1\cdot \ang{g_1} + \dots + a_k \cdot \ang{g_k}$. 
    \item Finally, apply \Cref{non-zero-short} on $\A - \A'$, showing $\A - \A' = \vec 0$.
\end{enumerate}

\paragraph{Identifying consistent planes:}
A pair $(\P, \ell \in \P)$ is \emph{``good"}, if the assignment to $\P$ is nontrivial, and does not collide on $\ell$. 
Formally, $\supp{\P} \neq \emptyset $ and $\forall f \neq g \in \supp{\P}\colon {f}_{|\ell} \neq {g}_{|\ell}$. 
An edge $\{\P_1, \P_2\}$ is good, if 
$(\P_1, \P_1 \cap \P_2)$ and 
$(\P_2, \P_1 \cap \P_2)$ are good. Finally, a plane $\P \in \PP$ is good if at least half the incident edges are good. \\

Observe that $\Cref{non-zero-short}$ states almost all the planes have nontrivial support. 
Moreover, the Schwartz-Zippel \Cref{lem: schwartz-zippel}, combined with $|\supp{\P}| \le |\F|^\eps$, implies that for almost all $\ell \subseteq \P$, the pair $(\P, \ell)$ is good.
From the union bound, $\{\P_1, \P_2\} \in E_\pvp$ is almost always good --- and therefore nearly all the planes are good. 

\paragraph{Probabilistic setting:}
For every $\P \in \PP$, we sample $T[\P] \colon \P_{\le d} \to \F$. If $\supp{\P} \neq \emptyset$, we uniformly sample from $\supp{\P}$. Otherwise, we sample a constant $T[\P] \in \P_{\le 0}$. \\ 

Let $\P\in\PP$ be a good plane and $\P \cap \P' = \ell$ be a good edge.
The support does not collide on $\ell$
so the linear constraints (\ref{eq: weak local-to-global}) imply a non-negligible agreement between the functions:  
\[\forall f\in \supp{\P}, \forall x\in\ell \colon \exists g \in \supp{\P'}\colon f(x) = g(x)\]
The support of each plane is bounded by $|\F|^\eps$, so each $f\in \supp{\P}$ agrees with at least one $g \in \supp{\P'}$, on at least $\frac{1}{|\F|^\eps}$ of the points in $\ell$. 
From the Schwartz-Zippel \Cref{lem: schwartz-zippel}, for every $f\in \supp{\P}$ there exists at least one $g \in \supp{\P'}$, such that ${f_{|\ell}(x) = g_{|\ell}(x)}$. 
The support's size is bounded, so for every $f\in \P_{\le d}$ we have 
\[\Prob{T}{f_{|\P \cap \P'} = T[\P']_{|\P \cap \P'}} \ge |\F|^{-\eps}.\]
We are interested in avoiding tables $T$ 
with an abnormally low success probability of the PvP test
over edges incident to $\P$.
Luckily, this only happens in a negligible fraction of the tables:  
\[\Prob{T}{\Prob{\P' \cap \P = \ell}{T[\P]_{|\ell} = T[\P']_{|\ell}} < \frac{1}{4}|\F|^{-\eps}} = 
\mathop\E_{f\in \P_{\le d}}\left[\Prob{T}{\frac{1}{d_\pvp} \sum_{\P' \cap \P = \ell}{1_{g_{|\ell} = T[\P']_{|\ell}}} < \frac{1}{4}|\F|^{-\eps} \biggm| T[\P] = f}\right] \le 
\]
$\P$ is a good plane so at least $\half$ of the edges are good. 
In addition, the indicators in the summation are independent and equal to $1$ with a probability of at least $\abs{\F}^{-\eps}$, 
so we apply the well-known \emph{Chernoff bound}: 
\[
\mathop\E_{f\in \P_{\le d}}\left[\Prob{T}{\frac{1}{2\#(\text{good edges})}\sum_{\text{good } \P' \cap \P = \ell}{1_{f_{|\ell} = T[\P']_{|\ell}}} < \frac{1}{4}|\F|^{-\eps}\biggm| T[\P] = f}\right] 
\le 
\ps{\sqrt{\frac{2}{e}}}^{\#(\text{good edges})/|\F|^\eps}
\]

The number of good edges adjacent to $\P$ is much larger than $|\F|^\eps$, so the probability is exponentially small.
From now on, we ignore it and assume that for every good plane $\P \in \PP$: 
\[\Prob{\P' \cap \P = \ell}{T[\P]_{|\ell} = T[\P']_{|\ell}} \ge \frac{1}{4}|\F|^{-\eps}.\]

\paragraph{Global list decoding:} 
We use \cite{Raz1997} (\ref{thm:Plane-vs-Plane}) on $T$. 
$\eps$ is sufficiently small, so $ \half|\F|^{-3\eps}$ is larger than the error term in the theorem. 
Thus, there exists $k = O(|\F|^{3\eps})$ and $f_1, \dots f_k \colon \F^t \to \F$, of degree $d$, such that: 
\[\Prob{\P_1 \cap \P_2 = \ell}{T[\P_1]_{|\ell} = T[\P_2]_{|\ell} \land \not\exists i\colon (T[\P_1] = {f_i}_{|\P_1} \land  T[\P_2] = {f_i}_{|\P_2})
} \le |\F|^{-3\eps}\]
For every good plane, the success probability is at least $\frac{1}{4}|\F|^{-\eps}$. 
Thus, the entries of at least $1 - 8|\F|^{-2\eps}$ of the good planes agree with some $f_i$ (at least half the planes are good).\\

We sample $|\F|^{1.5 \eps}$ tables and consider their list-decodings. 
Applying the union bound, almost all the good planes agree with a function in every list-decoding. 
In addition, since $|\supp{\P}| \le |\F|^\eps$, for almost all the planes, all the functions in $\supp{\P}$ were used in at least one table. 
Denote by $f_1, \dots, f_{k'}$ the concatenation of all the list-decodings, $k' = O(|\F|^{4.5\eps})$. 
Combining the previous claims, for almost all the planes $\P \in \PP$:  
\[\forall g\in \supp{\P} \colon \exists 1 \le i \le k' \colon {f_i}_{|\P} = g\]
And we denote the set of such planes, by $S_1 \subseteq \PP$. 
\paragraph{Relating restrictions:}
The Schwartz-Zippel \Cref{lem: schwartz-zippel} implies that two functions in $\{f_1, \dots, f_{k'}\}$ agree on a point 
with probability at most $\frac{d}{|\F|}\binom{k'}{2} = O(|\F|^{10 \eps - 1})$. 
We assume $\eps$ is sufficiently small, so $\frac{d}{|\F|}\binom{k'}{2} = o(1)$. 
Thus, for almost all planes (and points), the restrictions of $\{f_i\}_{i=1}^{k'}$ are pairwise distinct.
Let $S_2 \subseteq S_1$ be the subset of those planes in $S_1$. 
Observe that $S_2$ also contains most of the planes. \\

The assignment for every plane in $S_2$ could be uniquely described with a super-assignment over $f_1, \dots, f_{k'}$ (the coefficient of $f_i$ is $\A_\P[{f_{i}}_{|\P}]$). 
Denote $\pi \colon S_2 \to  \Z^{k'}$ the mapping between a plane in $S_2$, and the coefficients of that super-assignment.
Consider the equivalence relation over $S_2$: $\P_1 \sim \P_2 \iff \pi(\P_1) = \pi(\P_2)$. 
Next, we will show that few edges cross between different equivalence classes. 
Then, we deduce one class has to contain almost all the planes.\\ 

For a plane $\P_1 \in S_2$ and an edge $\{\P_1, \P_2\} \in E_\pvp$, we can classify the edge into three types:  
\begin{enumerate}
    \item \emph{Bad} edges: $\P_2 \not\in S_2$. 
    \item \emph{Non-escaping} edges: $\P_2 \in [\P]_{/\sim}$. 
    \item \emph{Crossing} edges: $\P_2 \not\sim \P_1$.
\end{enumerate}

Since $S_2$ contains $1 - o(1)$ of the edges, and the graph is $d_\pvp$-regular, the fraction of bad edges is $o(1)$. 
If $\{\P_1, \P_2\}$ is crossing, $f_1, \dots, f_{k'}$ has to collide on $\P_1 \cap \P_2$. 
Thus, the Schwartz-Zippel \Cref{lem: schwartz-zippel} implies $o(1)$ of the edges are crossing---almost all edges are non-escaping. 
Due to \Cref{fact: expand-PvP}, at least one equivalence class has a fractional size of $1 - o(1)$.  

\paragraph{Endgame:}
Denote $\A' = a_1 \cdot \ang{f_1} + \dots + a_{k'} \cdot \ang{f_{k'}}$, the super-assignment of that equivalence class. 
The vector $\A - \A'$ assigns $\vec 0$ to the planes in that equivalence class---almost all the planes. 
From linearity, it satisfies the $\pvp$ constraints. 
The triangle inequality bounds the size of each plane's support by $|\F|^\eps + k'$. 
We choose a sufficiently small $\eps > 0$, so \Cref{non-zero-short} implies $\A - \A' = \vec 0$. \\

We remark that $k'$ is effectively bounded by $|\F|^\eps$, because for planes where $f_1, \dots, f_{k'}$ don't collide, the support's size is $|\sett{f_i}{a_i \neq 0}|$ (if $a_i$ = 0, $f_i$ is meaningless and could be ignored). 
\end{proof}

\section{Discussion and Open Problems} \label{sec: conclusions}
We extend the \emph{super-assignment} framework of \cite{Dinur1999,Dinur2002,Dinur2003} to establish improved hardness-of-approximation for both \textsf{GapSVP} and \textsf{uSVP}.
The main obstacle in porting from \textsf{CVP} to \textsf{SVP}
is excluding \emph{self-involved} super-assignments---those supported on only a small fraction of planes---whose artificially low norm does not reflect any global assignment. 
This stems from the homogeneity of \textsf{SVP}: 
a super-assignment that is zero on almost all planes cannot be ruled out naively by local constraints, unlike \textsf{CVP} where nontriviality can be enforced everywhere. 
Our first goal, therefore, is to ensure low-norm assignments \emph{disperse} across the \(\PvP\) graph.

Leveraging expansion in Grassmann graphs---a generalization of the
plane-vs-plane graph---entered the PCP toolkit following its pivotal role in resolving the 2-to-2 Games Conjecture \cite{Khot2017, Dinur2017a,Dinur2018a, Khot2023};
see also Minzer’s thesis~\cite{Minzer2022}. 
These results yield \emph{structure theorems} for small sets of planes/subspaces:
unless such a set is extremely expanding, it must exhibit a rigid structure.
This machinery has since been extended, leading to multiple applications \cite{Kaufman2025,Minzer2023,Minzer2024_optrade}. 
In our setting, much weaker expansion properties suffice, though our approach was inspired by the new understanding of expansion in such graphs.

To reach quasi-polynomial instance size we adopt Composition–Recursion \cite{Arora1998} in the algebraic variant of \cite{Dinur1999,Dinur2003}.
Composition-Recursion creates super-polynomially many planes, causing self-involved assignments to reappear. 
We counter this by passing, at each recursion level, to an \emph{extension field} of the current field, thereby keeping the number of planes/vertices polynomial in the field size at every level.
Field extensions may be useful in further Composition–Recursion applications.

Starting from \emph{unambiguous} problems is a natural starting point for our pipeline: \svpp and $\alpha$--$\textsf{BDD}_p$ with $\alpha < \frac{1}{2}$ are themselves unique.
Valiant–Vazirani show a \emph{randomized} reduction from \textsf{SAT} to \textsf{Unambiguous–3SAT} and, more generally, a pathway from NP to Promise-UP \cite{Valiant1985}.
We remark that deterministically reducing an NP-hard problem to $\textsf{uSVP}$ would imply $\text{NP}=\text{Promise-UP}$.

\subsection{Open problems}
Several directions remain open. The central challenge is to obtain comparable (or even weaker) results in the Euclidean norm \(\ell_2\).

\begin{conj}[Toward \(\ell_2\)]
Under deterministic reductions, \textsf{GapSVP} is hard to approximate in the \(\ell_2\) norm---even the \emph{exact} \textsf{SVP} problem is not known to be NP-hard (deterministically).
\end{conj}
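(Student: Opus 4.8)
The plan is to port the super-assignment machinery of \Cref{sec:toy reduction} and \Cref{sec: soundness} down to $p=2$, and the first task is to isolate precisely where $p>2$ is used and supply a Euclidean substitute. The only essential role of the hypothesis is this: the Hadamard rotation of \Cref{def:spinning-matrix} makes $\|U e_i\|_p = m^{1/p-1/2}$ the \emph{smallest} possible $\ell_p$ norm of a unit $\ell_2$ vector (equality in Hölder), so a natural assignment is ``maximally spread'' and any plane carrying $\ge 2$ functions is strictly less spread, hence strictly longer in $\ell_p$; quantitatively this yields $|\supp{\P}|\le O(|\stsub|^{2/p})\ll|\F|$ for every plane of a short vector. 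That support bound is the linchpin of soundness: with $|\supp{\P}|\ll|\F|$, at a random point of an intersection line all of $\supp{\P}$ takes distinct values, the local-to-global sums $\sum_{f(x)=a}\A_\P[f]$ contain a single nonzero term each, and so \eqref{req2} forces every neighbor of a bad plane to be bad, which propagates and ultimately forces $1-o(1)$ of the planes to be bad (and hence the $\ell_p$ norm up to $\sqrt2$). At $p=2$ the rotation is an isometry and carries no information: a short vector has $\ell_2$ norm $\approx\sqrt{|\stsub|}$, so its per-plane support is only bounded by $O(|\stsub|)$, and with that much mass the sums $\sum_{f(x)=a}\A_\P[f]$ can vanish by \emph{cancellation} even when $\A_\P$ is nowhere near a natural assignment; the propagation argument collapses and the self-involved super-assignments of the discussion are exactly what survives.

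I expect this to be the crux, and it reflects a genuine tension. Bounding support requires a gadget $G$ that lengthens support-$\ge2$ vectors relative to $\pm e_i$; but squared $\ell_2$ norm is additive, and any $G$ with orthonormal columns is an isometry, mapping every support-$s$ vector to norm $\sqrt s$ irrespective of signs, so it gives no support bound — while any non-isometric $G$ that does lengthen $e_i+e_j$ necessarily \emph{shortens} $e_i-e_j$, and support-$2$ vectors of that shape ($\A_\P = e_{f_1}-e_{f_2}$) arise for bad planes, so completeness-preserving gadgets that uniformly penalize support seem out of reach. Every known device that manufactures the needed slack in the Euclidean setting — Khot's tensoring to amplify the gap~\cite{Khot2005a}, the sparsification of Micciancio~\cite{Micciancio2001} and Haviv--Regev~\cite{Haviv2018} that turns \textsf{CVP}-type instances into \textsf{SVP} instances with a unique short vector — is randomized, and derandomizing any one of them would already settle exact NP-hardness of \textsf{SVP} in $\ell_2$.

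Concretely, the routes I would pursue are, in increasing order of ambition: (i) aim first at exact \textsf{SVP} in $\ell_2$, combining the algebraic encoding of a \textsf{3SAT} witness as a $\pm$-pattern of plane variables with a deterministic analogue of the sparsification gadget — strengthening the one-sided-error construction of Micciancio~\cite{Micciancio2012} and the deterministic Reed--Solomon gadgets of Bennett--Peikert~\cite{Bennett2023b} — so that the unique witness pattern becomes the unique shortest vector; (ii) replace the per-plane variable block by a \emph{self-referential} encoding, in which a function is itself stored via its restrictions to lower-dimensional subspaces, so that a plane carrying two functions pays a super-linear price recursively, reinstating a support bound without relying on an isometry; and (iii) if either succeeds at the base case, layer the Composition--Recursion of \Cref{sec: The construction} on top — its field-extension step is norm-agnostic and transfers verbatim — to lift exact hardness to the quasi-polynomial approximation factors, mirroring the $p>2$ argument. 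Any of these would already be a substantial advance; I regard the support/cancellation dichotomy above as the true bottleneck, and resolving it, even for the exact problem, is likely to require a genuinely new idea rather than a recombination of the existing toolbox.
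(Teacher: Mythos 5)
This statement is not a theorem of the paper: it is an open problem, stated as a conjecture in the concluding discussion (\Cref{sec: conclusions}), and the paper offers no proof of it---only a brief remark that self-involved low-norm assignments in \(\ell_2\) ``appear highly structured,'' which is why the authors believe PCP-style techniques might eventually be adapted. Consequently there is no paper proof to compare your text against, and your text is likewise not a proof: it is a diagnosis of the obstruction together with a research program. Nothing in routes (i)--(iii) is carried out, so the conjecture is not established, and it should not be presented as if a proof were within reach by recombining the stated steps.

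That said, as commentary your diagnosis is accurate and matches the paper's own technical development. You correctly isolate where \(p>2\) enters: the Hadamard rotation of \Cref{def:spinning-matrix} makes natural (unit-support) assignments extremal for the \(\ell_2\)-vs-\(\ell_p\) tradeoff of \Cref{fact: norm-tradeoff}, which is exactly what yields the per-plane support bound \(\abs{\supp{\P}} \le O(\abs{\stsub}^{2/p})\) driving the propagation argument in \Cref{sec:toy reduction}; at \(p=2\) the rotation is an isometry, the support bound evaporates, and the consistency sums can vanish by cancellation---precisely the ``self-involved'' assignments the paper names as the main obstacle. One caution: your claim that any completeness-preserving non-isometric gadget which lengthens \(e_i+e_j\) must shorten \(e_i-e_j\) is a heuristic asserted without proof (it presumes a per-block, sign-symmetric gadget acting independently on each plane), so it should be framed as intuition for why the obstacle is serious, not as a no-go theorem ruling out Euclidean gadgets.
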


We conjecture that PCP-style techniques can be adapted to \(\ell_2\). Self-involved assignments of low-norm in \(\ell_2\) appear highly structured, which makes their characterization and exclusion plausible. Even a sub-exponential derandomization for \(\ell_2\) would be compelling.

As emphasized in the introduction, \textsf{uSVP} underpins many cryptographic constructions; sharpening its complexity in \(\ell_2\) is therefore interesting. 
Our matching \textsf{uSVP} bounds strengthen the case for the difficulty of unique instances, yet the known \(\ell_2\) hardness still lags.

\begin{conj}[Unique \(\ell_2\)]
There exists \(\eps>0\) such that it is NP-hard to approximate \textsf{uSVP} in the \(\ell_2\) norm within a ratio \(1+\eps\), even under randomized, sub-exponential-time reductions.
\end{conj}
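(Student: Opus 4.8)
The plan is to port the super-assignment reduction of \Cref{sec:toy reduction} to the Euclidean norm, starting again from \textsf{Unambiguous-3SAT}; composing with Valiant--Vazirani \cite{Valiant1985} then produces exactly the randomized, sub-exponential-time reduction from \textsf{3SAT} that the conjecture permits. Completeness is immediate: the natural assignment $\ang{g}$ of the low-degree extension of the unique witness is a lattice vector with one unit entry per plane, so after normalization its $\ell_2$ norm is exactly $1$, and the individual-degree constraints of \Cref{sec: Unique SVP} still force $\lambda_2^{(2)}\ge(1+\eps)\,\lambda_1^{(2)}$ once soundness is in hand, so the \textsf{uSVP} promise is not the bottleneck. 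The real content is soundness: showing that any integer vector whose $\ell_2$ norm is below $(1+\eps)$ times that of a natural assignment must encode a satisfying assignment.

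The obstruction is that the rotation gadget (\Cref{def:spinning-matrix}) together with \Cref{fact: norm-tradeoff} is vacuous at $p=2$: the factor $m^{1/2-1/p}$ degenerates to $1$, so a short $\ell_2$ vector no longer yields the \emph{pointwise} support bound $|\supp{\P}|\le|\F|^\eps$ that drives \Cref{non-zero-short} and \Cref{lemma: consistency-lemma}. All that comes for free is the \emph{average} bound $\E_\P\,\|\A_\P\|_2^2\le(1+\eps)^2$, hence $|\supp{\P}|\le|\F|^\eps$ on a $1-o(1)$ fraction of planes, while on an $o(1)$ fraction $\A_\P$ may be arbitrarily heavy. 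The plan is therefore to (i) prove a \emph{robust} form of \Cref{lemma: consistency-lemma}: if the weak Plane-vs-Plane constraints hold and the support is bounded on a $1-o(1)$ fraction of planes, then $\A=\sum_{i\le k}a_i\ang{g_i}+\mathcal E$ for $k\le|\F|^\eps$ global degree-$d$ polynomials and an error term $\mathcal E$ supported only on the exceptional planes; and then (ii) show $\mathcal E=\vec 0$. Part (i) should be largely bookkeeping: the proof of \Cref{lemma: consistency-lemma} already runs through a list-decoding step (\Cref{thm:Plane-vs-Plane}) and an expansion argument on $G_\pvp$ (\Cref{fact: expand-PvP}), both of which tolerate an $o(1)$ fraction of misbehaving planes.

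Part (ii) is the crux. The idea is that a heavy plane cannot be isolated: if $\|\A_\P\|_1$ is large, the consistency constraints \eqref{req2} (equivalently their weak form \eqref{eq: weak local-to-global}) pin the $\ell_1$-mass of every neighbor $\P'$ on each fiber over a point of $\P\cap\P'$, and after subtracting the structured part $\sum_{i\le k}a_i\ang{g_i}$ --- which contributes only $O(k)=O(|\F|^\eps)$ to any fiber --- the residual heavy mass must propagate; feeding this into the expansion of $G_\pvp$ should force the heavy planes either to disappear or to cover a $1-o(1)$ fraction, the latter contradicting the average norm bound. Making this propagation robust to integer cancellation inside a single fiber is where I expect the real difficulty to lie, and it may well require strengthening the constraint system --- for instance adding pointwise consistency equations over several independent evaluation points simultaneously, so that a heavy plane is genuinely pinned down rather than merely constrained in aggregate. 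If this ``dispersion from an average norm bound'' step can be pushed through, the rest --- pulling the global polynomials up through any Composition--Recursion layers used for amplification, and reading off a proper assignment exactly as in \Cref{sec: sat-col} --- is routine, and one might even hope for the clean gap $\sqrt{2}-o(1)$ of \Cref{thm:usvp-low} rather than merely $1+\eps$.
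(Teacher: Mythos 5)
This statement is one of the paper's \emph{open conjectures} (\Cref{sec: conclusions}); the paper offers no proof of it, and in fact explicitly identifies the step you defer to ``part (ii)'' --- excluding \emph{self-involved} super-assignments, i.e.\ low-norm integer vectors supported on few planes that reflect no global assignment --- as the central obstacle to any $\ell_2$ result. Your proposal is therefore a research plan rather than a proof: completeness and the Valiant--Vazirani composition are indeed routine, but both part (i) and part (ii) are left unproven, and part (ii) is the entire content of the conjecture. You acknowledge this (``may well require strengthening the constraint system''), but an acknowledged unsolved crux is a genuine gap, not a step that can be waved through.

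Worse, part (ii) as stated is \emph{false} for the construction you propose to keep. At $p=2$ the lattice of \Cref{sec:toy reduction} (with or without the rotation of \Cref{def:spinning-matrix}, which is $\ell_2$-isometric, and with or without the individual-degree constraints of \Cref{sec: Unique SVP}) contains short ``locally invisible'' vectors: fix a single axis-parallel plane $\mathcal{P}_0$ carrying no \textsf{3SAT} constraints, pick distinct low-degree $f,g$ on $\mathcal{P}_0$ with $g-f$ nonconstant, and set $\mathcal A_{\mathcal{P}_0}[f+c]=1$ and $\mathcal A_{\mathcal{P}_0}[g+c]=-1$ for every $c\in\F$, with $\mathcal A_{\mathcal P}=\vec 0$ on all other planes. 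For every point $x$ and every value $a$ the fiber sum $\sum_{h(x)=a}\mathcal A_{\mathcal{P}_0}[h]$ equals $1-1=0$, so the simple constraints \eqref{req1} (with $\kappa=0$), the consistency constraints \eqref{req2}, and the weak constraints \eqref{eq: weak local-to-global} are all satisfied, yet the vector is nonzero with support $2\abs{\F}$ on one plane; after the $p=2$ normalization by $\abs{\mathcal R}^{-1/2}$ its $\ell_2$ norm is at most $\sqrt{2/3}<1$, i.e.\ strictly shorter than every natural assignment, irrespective of satisfiability. So no ``robust'' version of \Cref{lemma: consistency-lemma}, no expansion/dispersion argument, and no average-norm bookkeeping can make your error term $\mathcal E$ vanish --- the vector above \emph{is} such an $\mathcal E$ --- and your suggested fix (pointwise consistency over several evaluation points) is equally powerless, since every fiber sum of this vector vanishes identically at every point. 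This is precisely why the paper needs $p>2$: the factor $m^{1/2-1/p}$ in \Cref{fact: norm-tradeoff} is what penalizes concentrated support, and recovering an analogous penalty in $\ell_2$ requires a genuinely new construction, which is what the conjecture asks for.
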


\subsection*{Acknowledgments} 
We thank Dor Minzer, Itamar Rot, Beata Kubis
and Yonatan Pogrebinsky for helpful discussions and comments.

\bibliography{references.bib}

\section{Appendix A} \label{Appendix: A}
To make the paper self-contained, we follow an analysis of Kaufman and Minzer \cite{Kaufman2022}, proving \Cref{fact: expand-PvP}. 
We assume basic familiarity with eigenvalues, characters, and Cayley graphs.
If needed, the survey of Hoory, Linial, and Wigderson \cite{Hoory2006} contains all the necessary background and far more. 

The proof consists of the following steps. 
First, it is possible to move from $G_\pvp$ to a certain Cayley graph (Actually, as seen in \cite{Kaufman2022}, from a more general case known as the Affine Grassmann graph). 
The eigenvalues of that Cayley graph are easy to compute. 
Once the eigenvalues are computed, the well-known \emph{Expander Mixing Lemma} implies \Cref{fact: expand-PvP}. 

\subsection{The Cayley graph}
Starting from the $\pvp$ graph over $\F_q^t$, $t > 2$, 
the vertices of our Cayley graph are the triplets $(  s,   x_1,   x_2) \in (\F_q^t)^3$. 
Each vertex $(s, x_1, x_2)$ has the following edges, described via a randomized process: 

\begin{enumerate}
    \item Uniformly sample $y \in \F_q^t$ and $\alpha, \beta, \gamma \in \F_q$. 
    \item Move to $(s + \alpha y, x_1 + \beta y, x_2 + \gamma y)$.
\end{enumerate}
\newcommand{\Mcay}{M^\text{Cay}}
Denote the transition/random-walk matrix by $\Mcay$. 
Observe that $\Mcay$ is symmetric, 
as starting from $(s, x_1, x_2)$ and using $y, \alpha, \beta, \gamma$ 
is equivalent to starting from $(s + \alpha y, x_1 + \beta y, x_2 + \gamma y)$ and using $-y, \alpha, \beta, \gamma$. 

The connection to $G_\pvp$ arises from associating each $(  s,   x_1,   x_2)$ with an affine subspace $  s + \Span(  x_1,   x_2)$. 
Given a set $S \subseteq \PP(\F_q^t)$, denote: 
\[S^* \defeq \sett{(  s,   x_1,   x_2) \in (\F_q^t)^3}{  s + \Span(  x_1,   x_2) \in S}\]
It turns out that the expansion of $S$ and $S^*$ are closely related. 
Thus, since the eigenvalues of Cayley graphs can be more easily calculated, 
in the subsequent sections we prove expansion in the Cayley graph.
The following claim allows us to derive almost the same results for $G_\pvp$. 

\begin{claim} \label{appAclaim: sameEXP}
    $\Phi(S) \ge \Phi(S^*) - \frac{1}{q} - \frac{1}{q^2}$
\end{claim}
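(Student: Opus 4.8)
The plan is to compare one random‑walk step of the Cayley graph started inside $S^*$ with one random edge of $G_\pvp$ incident to $S$, and to bound the discrepancy by a ``bad event'' of probability at most $\tfrac1q+\tfrac1{q^2}$. Write $\rho(s,x_1,x_2)\defeq s+\Span(x_1,x_2)$, so that $S^*=\rho^{-1}(S)$; since $S\subseteq\PP$, every triple in $S^*$ has $x_1,x_2$ linearly independent, hence lies over a genuine plane, and all fibres $\rho^{-1}(P)$ have the same cardinality because $\mathrm{AGL}(t,q)$ acts transitively on planes. First I would analyse one Cayley step $w=(s,x_1,x_2)\mapsto w'=(s+\alpha y,\,x_1+\beta y,\,x_2+\gamma y)$, fixing $P=\rho(w)$ with direction $V=\Span(x_1,x_2)$, and define the good event $G:=\{y\notin V\}\cap\{(\beta,\gamma)\neq(0,0)\}$. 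A short linear‑algebra computation shows that on $G$ the image $\rho(w')$ is again a genuine plane $P'$ with $V\cap\mathrm{dir}(P')$ of dimension exactly $1$ and $P\cap P'\neq\emptyset$ (since $\beta y$ or $\gamma y$ lies in $V+\mathrm{dir}(P')$), so $\{P,P'\}\in E_\pvp$; moreover $\Pr[G^c\mid\rho(w)=P]=\eps$ with $\eps\le\tfrac1q+\tfrac1{q^2}$ (using $\Pr[y\in V]=q^{2-t}$, $\Pr[(\beta,\gamma)=0]=q^{-2}$, independence, and $t>2$), and $\eps$ is the same for every plane.

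Next I would upgrade ``$G$ produces an edge'' to ``$G$ produces a \emph{uniformly random} $G_\pvp$‑neighbour of $P$''. Conditioning on $\rho(w)=P$ makes $w$ uniform on $\rho^{-1}(P)$, and the Cayley step together with $G$ is $\mathrm{AGL}(t,q)$‑equivariant: applying $g=(A,b)$ to $w$ and replacing the parameter $y$ by $Ay$ sends $w'$ to $g\cdot w'$ and preserves $G$. Hence the transition kernel $p(P,P'):=\Pr[\rho(w')=P',\,G\mid\rho(w)=P]$ satisfies $p(gP,gP')=p(P,P')$, so $p(P,\cdot)$ is invariant under $\mathrm{Stab}(P)$. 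Since $\mathrm{Stab}(P)$ acts transitively on the $G_\pvp$‑neighbours of $P$ --- move the intersection line to any line of $P$ by $\mathrm{AGL}(2,q)$ acting on $P$, then observe that the stabiliser of a fixed line acts transitively on the remaining $2$‑dimensional directions through it, working modulo that line --- the function $p(P,\cdot)$ is constant on neighbours; as its total mass is $\Pr[G\mid\rho(w)=P]=1-\eps$ and $G_\pvp$ is $d_\pvp$‑regular, $p(P,P')=(1-\eps)/d_\pvp$ for each neighbour $P'$ of $P$.

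Finally I would assemble the estimate (the case $S=\emptyset$ being vacuous). With $w$ uniform and $w'$ one Cayley step, $\Phi(S^*)=\Pr[w'\notin S^*\mid w\in S^*]\le \Pr[w'\notin S^*,\,G\mid w\in S^*]+\eps$. On $\{w\in S^*\}\cap G$ we have $\rho(w)=P\in S$, $\rho(w')=P'$, $\{P,P'\}\in E_\pvp$, so $w'\notin S^*\iff P'\notin S$; using $\Pr[\rho(w)=P\mid\rho(w)\in S]=1/|S|$ (equal fibre sizes) and the previous paragraph,
\[
\Pr[w'\notin S^*,\,G\mid w\in S^*]=\frac1{|S|}\sum_{P\in S}\frac{1-\eps}{d_\pvp}\,\bigl|\{P'\sim P:\ P'\notin S\}\bigr|=(1-\eps)\,\Phi(S)\le\Phi(S),
\]
where the second equality uses $d_\pvp$‑regularity of $G_\pvp$. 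Hence $\Phi(S^*)\le\Phi(S)+\eps\le\Phi(S)+\tfrac1q+\tfrac1{q^2}$, which rearranges to the claim.

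I expect the main obstacle to be the transitivity statement in the second paragraph --- that $\mathrm{Stab}(P)$ acts transitively on the $G_\pvp$‑neighbourhood of $P$ --- together with pinning down precisely how $\mathrm{AGL}(t,q)$ acts on the triples $(s,x_1,x_2)$ versus on planes, so that the equivariance of the conditioned Cayley step is transparent. Everything else --- the linear‑algebra description of $G$ and the single split on the bad event --- is routine counting.
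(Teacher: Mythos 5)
Your proof is correct, and its skeleton coincides with the paper's: both compare a single step of the Cayley walk started at a uniform point of $S^*$ with a uniform $G_{\pvp}$-edge leaving $S$, isolate the same bad event $\{(\beta,\gamma)=(0,0)\}\cup\{y\in\Span(x_1,x_2)\}$ of probability at most $\frac{1}{q}+\frac{1}{q^2}$ (using $t>2$), and conclude via $\Phi(S^*)\le \eps+(1-\eps)\Phi(S)$. The genuine difference lies in the key sub-step --- that, conditioned on the good event, the projected plane is a \emph{uniformly random} $G_{\pvp}$-neighbour of $P=s+\Span(x_1,x_2)$. The paper proves this by explicit computation: it identifies the intersection line (direction $\gamma x_1-\beta x_2$, passing through $s+\alpha\beta^{-1}x_1$), shows that $(\alpha,\beta,\gamma)$ produces a uniform line of $P$, and that $y$ then produces a uniform plane through that line other than $P$. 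You instead verify only that the good event always lands on a neighbour, and get uniformity abstractly from $\mathrm{AGL}(t,q)$-equivariance of the conditioned kernel together with transitivity of the stabiliser of $P$ on its $G_{\pvp}$-neighbourhood. The transitivity you flag as the main risk is indeed true and routine: any affine automorphism of $P$ extends to one of $\F_q^t$ preserving $P$ (so the intersection line can be moved to any prescribed line of $P$), and then a map fixing a point of that line and the direction space of $P$ pointwise, while sending one complementary direction to another, carries any neighbour through that line to any other. Your route trades the paper's short explicit distributional computation for a symmetry argument: it is slightly longer to set up, but it is robust (nothing about the parametrisation of the Cayley step is used beyond equivariance) and yields the exact kernel value $p(P,P')=(1-\eps)/d_\pvp$ rather than just uniformity, which makes the final averaging over $P\in S$ transparent.
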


\begin{proof}
Suppose $(s, x_1, x_2) \in S^*$
and sample a random neighbor $(s + \alpha y, x_1 + \beta y, x_2 + \gamma y)$.
There are few cases: 
\begin{enumerate}
    \item $\beta, \gamma = 0$, which happens with probability$\frac{1}{q^2}$. 
    \item $y \in \Span\ps{x_1, x_2}$, with probability $q^{2-t}$. 
    As $t = 3$, we receive $\frac{1}{q}$.  
    \item $\dim\ps{\Span(x_1 + \beta y, x_2 + \gamma y)} < 2$.
    This case is contained in $y \in \Span(x_1, x_2)$, as 
    a linear combination 
    $c_1(x_1 + \beta y) + c_2(x_2 + \gamma y)$ 
    could be rewritten as 
    $c_1 \cdot x_1 + c_2 \cdot x_2 = (\beta c_1 + \gamma c_2) y$. 
    
    \item Otherwise, $(s + \alpha y) + \Span\ps{x_1 + \beta y, x_2 + \gamma y}$ is an affine plane intersecting with $s + \Span(x_1, x_2)$ on a line. 
    Moreover, it is distributed uniformly across such planes.   
    The proof is elementary but slightly technical, and is written in the next paragraphs. 
\end{enumerate}
First, let us calculate the intersection. 
As $\beta \neq 0 \lor \gamma \neq 0$, either

\[s + \alpha y + \alpha \cdot \beta^{-1} (x_1 + \beta y) = 
s + \alpha \cdot \beta^{-1} x_1
\in s + \Span(x_1, x_2)\]
\[\text{\textbf{or} } s + \alpha y + \alpha \cdot \gamma^{-1} (x_2 + \gamma y) = 
s + \alpha \cdot \gamma^{-1} x_2
\in \Span(x_1, x_2)\]
In addition, as $y\not\in \Span(x_1, x_2)$, the planes differ. 
It's not hard to see that the intersection is a line with a gradient of: 
\[\gamma(x_1 + \beta y) - \beta(x_2 + \gamma y) = \gamma x_1 - \beta x_2\]
Fixing $\beta, \gamma$, suppose w.l.o.g $\beta \neq 0$.
The point 
$s + \alpha \cdot \beta^{-1} x_1$
is inside the intersection, so the line is: 
\[\ell(t) = s + \alpha \cdot \beta^{-1} x_1 + t\cdot \ps{\gamma x_1 - \beta x_2}\]
For different values of $\alpha$, the lines are parallel.
Thus, when sampling $\alpha, \beta, \gamma$ (independent of $y$), the intersection distributes uniformly over the lines. 
Sampling $\alpha, \beta, \gamma$ and then $y$, 
has the same distribution as uniformly sampling a line in $s + \Span(x_1, x_2)$, and then adding a random third point outside the plane.

To conclude the proof of \Cref{appAclaim: sameEXP}, 
the first three cases occur with a probability of at most $\frac{1}{q} + \frac{1}{q^2}$. 
$(s + \alpha y, x_1 + \beta y, x_2 + \gamma y) \in S^*$ is equivalent to $s + \alpha y + \Span(x_1 + \beta y, x_2 + \gamma y) \in S$, 
so the fourth case corresponds to a random walk in the $G_\pvp$ graph. 
It implies that: 
\[\frac{1}{q} + \frac{1}{q^2} + \ps{1 - \frac{1}{q} - \frac{1}{q^2}} \Phi(S) \ge \Phi(S^*) \Rightarrow
\Phi(S) \ge \Phi(S^*) - \frac{1}{q} - \frac{1}{q^2} \]

\end{proof}

\subsection{Bounding the Eigenvalues}
Our next step is calculating the eigenvalues of $\Mcay$. 
Again, this was done before in \cite{Kaufman2022}. 
It is folklore that, for Cayley graphs, the characters are the eigenvectors of the graph. 
While the graph is weighted, this is still true. 
For a character $\chi_x$, denote the corresponding eigenvalue as $\lambda_x$. 

\begin{claim} \label{appAclaim: eigenvalues}
    For all $\vec 0 \neq x = (s, x_1, x_2) \in (\F_q^t)^3$, the eigenvalue $\lambda_x$ is bounded by $\abs{\lambda_x} \le \frac{1}{q}$. 
\end{claim}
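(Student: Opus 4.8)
The plan is to compute $\lambda_x$ explicitly as a Fourier coefficient of the step distribution. As noted above, the additive characters of the group $G=(\F_q^t)^3$ diagonalize $\Mcay$, and the eigenvalue attached to $\chi_x$ is exactly $\widehat\mu(x)=\Expect{z\sim\mu}{\chi_x(z)}$, where $\mu$ is the law of a single step: sample $y\in\F_q^t$ and $\alpha,\beta,\gamma\in\F_q$ independently and uniformly, and output $z=(\alpha y,\beta y,\gamma y)$. Fixing a nontrivial additive character $\psi\colon\F_q\to\mathbb{C}^\times$ and writing, for $x=(s,x_1,x_2)$,
\[
\chi_x(u_0,u_1,u_2)=\psi\!\left(\inner{s}{u_0}+\inner{x_1}{u_1}+\inner{x_2}{u_2}\right),
\]
we get
\[
\lambda_x=\Expect{y,\alpha,\beta,\gamma}{\psi\!\big(\alpha\inner{s}{y}\big)\,\psi\!\big(\beta\inner{x_1}{y}\big)\,\psi\!\big(\gamma\inner{x_2}{y}\big)}.
\]

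First I would take the expectation over $\alpha,\beta,\gamma$, which is legitimate since they are independent of one another and of $y$. For any $c\in\F_q$, $\Expect{\alpha\in\F_q}{\psi(\alpha c)}$ equals $1$ if $c=0$ and $0$ otherwise, because $\alpha c$ is uniform over $\F_q$ whenever $c\ne 0$ and $\psi$ is nontrivial. Applying this to the three factors, the inner expectation collapses to the indicator that $y$ annihilates all of $s,x_1,x_2$, so
\[
\lambda_x=\Prob{y\in\F_q^t}{\inner{s}{y}=\inner{x_1}{y}=\inner{x_2}{y}=0}=\Prob{y}{y\perp\Span\{s,x_1,x_2\}}.
\]
In particular $\lambda_x$ is automatically a nonnegative real, consistent with $\Mcay$ being symmetric.

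Finally, let $k=\dim\Span\{s,x_1,x_2\}$. Since $x=(s,x_1,x_2)\ne\vec 0$, at least one of the three vectors is nonzero, hence $k\ge 1$. The coordinatewise bilinear form on $\F_q^t$ is nondegenerate, so the solution set of the three linear conditions has size $q^{t-k}$; dividing by $q^t$ yields $\lambda_x=q^{-k}\le q^{-1}$, and therefore $\abs{\lambda_x}\le 1/q$.

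The only points requiring (routine) care are that the additive characters remain eigenvectors in the \emph{weighted} setting---which follows from the one-line identity $(\Mcay\chi_x)(g)=\sum_z\mu(z)\,\chi_x(g+z)=\chi_x(g)\sum_z\mu(z)\,\chi_x(z)$---and that one is working over the extension field $\F_q=\F_{p^n}$, where one fixes any nontrivial additive character (for instance $z\mapsto e^{2\pi i\,\operatorname{Tr}(z)/p}$ with $\operatorname{Tr}=\operatorname{Tr}_{\F_q/\F_p}$) and uses nondegeneracy of the induced bilinear form on $\F_q^t$. There is no genuine obstacle beyond this bookkeeping.
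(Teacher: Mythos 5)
Your proposal is correct and follows essentially the same route as the paper: both identify $\lambda_x$ with the character sum $\Expect{y,\alpha,\beta,\gamma}{\chi_x(\alpha y,\beta y,\gamma y)}$ and conclude $\lambda_x = q^{-\dim\Span(s,x_1,x_2)} \le 1/q$. The only (immaterial) difference is the order of integration---you average over $\alpha,\beta,\gamma$ first, obtaining $\Prob{y}{y\perp\Span\{s,x_1,x_2\}}$, whereas the paper averages over $y$ first, obtaining $\Prob{\alpha,\beta,\gamma}{\alpha s+\beta x_1+\gamma x_2=0}$; both evaluate to the same quantity.
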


\begin{proof}
A calculation of $\Mcay \chi_x$ in any coordinate shows: 
\[\lambda_x = \sum_{u \in (\F_q^t)^3} \Mcay_{\vec 0, u}\chi_x(u) = \E_{y, \alpha,\beta,\gamma}[\chi_x(\alpha y, \beta y, \gamma y)]\]
Reordering the term inside, we receive the following: 
\[\E_{y, \alpha,\beta,\gamma}[\chi_x(\alpha y, \beta y, \gamma y)] = 
\E_{y, \alpha,\beta,\gamma}[\chi_{\alpha s + \beta x_1 + \gamma x_2}(y)] = \begin{cases}
    1 & \alpha s + \beta x_1 + \gamma x_2 = 0\\ 
    0 & \text{else}\\
\end{cases}\]
The first case happens with a probability of $q^{-\dim \Span(s, x_1, x_2)}$.
We assumed $x \neq \vec 0$ so $\abs{\lambda_x} \le \frac{1}{q}$. 
\end{proof}

\subsection{Concluding Fact \ref{fact: expand-PvP}}
One of the most fundamental ideas in the theory of expanders is that good expanders ``look random". 
The \emph{Expander Mixing lemma} \cite{Alon1988}
states that for every $d$-regular graph $G = (V, E)$ and a subset $S \subseteq V$, 
if $G$ has a good spectral expansion, 
the number of edges with endpoints in $S$ and $V$ (crossing the partition), 
is roughly $d \cdot |S|\frac{|V| - |S|}{|V|}$. 
Equivalently, $\Phi(S) \approx 1 - \frac{|S|}{|V|}$. 
While our Cayley graph is weighted, the classical proof still works. 

\begin{lemma}[Expander mixing lemma --- weighted] \label{appA: exmix}
    Let $G$ be a weighted graph on vertices $[n]$ with a \emph{symmetric} random-walk matrix $W \in [0, 1]^{n\times n}$. 
    Denote the eigenvalues $\lambda_1 \ge \dots \ge \lambda_n$.
    For $\lambda > 0$, assume $\max_{i \neq 1} |\lambda_i| \le \lambda$.
    For every $S \subset [n]$:  
    \[\abs{\Phi(S) - 1 + \frac{|S|}{n}} \le \lambda\]
\end{lemma}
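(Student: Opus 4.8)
The plan is to run the standard spectral proof of the Expander Mixing Lemma, with the only twist being that our random walk is doubly stochastic rather than a scaled $d$-regular adjacency matrix. First I would record the structural consequences of the hypotheses: since each row of $W$ is a probability distribution and $W$ is symmetric, $W$ is doubly stochastic, so $\mathbf 1$ is an eigenvector with eigenvalue $1$; moreover $\|Wx\|_\infty \le \|x\|_\infty$ forces every eigenvalue into $[-1,1]$, so this is the top eigenvalue $\lambda_1$. Hence the uniform distribution is stationary, and drawing a random edge out of $S$ amounts to picking $u\in S$ uniformly and then $v\sim W_{u,\cdot}$. Writing $a=\mathbf 1_S$ and $b=\mathbf 1_{[n]\setminus S}=\mathbf 1-a$, this gives the identity $\Phi(S)=\frac{1}{|S|}\,a^\top W b$.

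Next I would project onto $\mathbf 1$ and its orthogonal complement: write $a=\tfrac{|S|}{n}\mathbf 1+a^\perp$ and $b=\tfrac{n-|S|}{n}\mathbf 1+b^\perp$ with $a^\perp,b^\perp\perp\mathbf 1$, and observe $b^\perp=(\mathbf 1-a)-\tfrac{n-|S|}{n}\mathbf 1=-a^\perp$. Expanding $a^\top W b$ and using $W\mathbf 1=\mathbf 1$, $\mathbf 1^\top W=\mathbf 1^\top$, and $\mathbf 1\perp a^\perp,b^\perp$ to kill the cross terms, only two terms survive:
\[
a^\top W b=\frac{|S|(n-|S|)}{n^2}\,\mathbf 1^\top\mathbf 1+(a^\perp)^\top W b^\perp=\frac{|S|(n-|S|)}{n}-(a^\perp)^\top W a^\perp .
\]
Dividing by $|S|$ and rearranging yields exactly $\Phi(S)-\bigl(1-\tfrac{|S|}{n}\bigr)=-\tfrac{1}{|S|}(a^\perp)^\top W a^\perp$.

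It then remains to bound the quadratic form. Because $W$ is symmetric and $a^\perp$ lies in the span of the eigenvectors with eigenvalues among $\lambda_2,\dots,\lambda_n$, we have $\bigl|(a^\perp)^\top W a^\perp\bigr|\le\bigl(\max_{i\ne 1}|\lambda_i|\bigr)\,\norm{a^\perp}_2^2\le\lambda\,\norm{a^\perp}_2^2$, while Pythagoras gives $\norm{a^\perp}_2^2=\norm{a}_2^2-\tfrac{|S|^2}{n}=|S|-\tfrac{|S|^2}{n}=\tfrac{|S|(n-|S|)}{n}$. Combining,
\[
\Bigl|\Phi(S)-1+\tfrac{|S|}{n}\Bigr|\le\frac{\lambda}{|S|}\cdot\frac{|S|(n-|S|)}{n}=\lambda\cdot\frac{n-|S|}{n}\le\lambda ,
\]
which is the claim (in fact with the slightly sharper factor $\tfrac{n-|S|}{n}$). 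I do not expect a genuine obstacle here: the classical argument transfers verbatim, and the only points needing a sentence of justification are that $\lambda_1=1$ with eigenvector $\mathbf 1$ — so that the ``random edge out of $S$'' distribution is precisely ``$u$ uniform in $S$, then one $W$-step'' — and the bookkeeping that makes every $\mathbf 1$-cross-term vanish; both are immediate from double stochasticity.
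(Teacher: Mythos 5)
Your proof is correct and is precisely the classical spectral argument the paper invokes: the paper does not spell out a proof, only citing Alon--Chung and remarking that ``the classical proof still works'' for the weighted, doubly stochastic setting. Your write-up carries out exactly that argument---correctly noting that symmetry plus stochasticity give $W\mathbf{1}=\mathbf{1}$, hence $\lambda_1=1$ and the uniform stationary distribution, so $\Phi(S)=\tfrac{1}{|S|}\,\mathbf{1}_S^{\top}W(\mathbf{1}-\mathbf{1}_S)$---and even yields the slightly sharper bound $\lambda\,\tfrac{n-|S|}{n}\le\lambda$.
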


Proving \Cref{fact: expand-PvP} is immediate. 
\Cref{appA: exmix} states that for every $S \subset V_\pvp$ in the $\pvp$ graph
and a corresponding $S^* \subseteq (\F_q^t)^3$ in the Cayley graph: 
\[\Phi(S) + \frac{1}{q} + \frac{1}{q^2} \ge \Phi(S^*) \ge 1 - \frac{|S^*|}{q^{3t}} - \frac{1}{q} \Rightarrow
\Phi(S) \ge 1 - \frac{2}{q} - \frac{1}{q^2} -  \frac{|S|^*}{q^{3t}}
\]
Observe that:
\[\frac{|S|}{|V_\pvp|} = \frac{|S^*|}{|\sett{(s, x_1, x_2) \in (\F_q^t)^3}{\dim\Span(x_1, x_2) = 2}|} \ge \frac{|S^*|}{q^{3t}}\]
So a loose bound on the expansion is (\Cref{fact: expand-PvP}): 
\[
\Phi(S) \ge 1  -  \frac{|S|}{|V_\pvp|} - \frac{3}{q}.
\]

\end{document}